\pdfoutput=1
\documentclass[11pt]{article}

\usepackage[utf8]{inputenc}
\usepackage[T1]{fontenc}
\usepackage{lmodern}
\usepackage{microtype}

\usepackage[letterpaper, margin=1in]{geometry}
\usepackage{indentfirst}

\usepackage{amsmath, amssymb, amsthm}
\usepackage{mathrsfs}
\usepackage{mathtools}

\usepackage{algorithm}
\usepackage{algpseudocode}
\algrenewcommand\algorithmicrequire{\textbf{Input:}}
\algrenewcommand\algorithmicensure{\textbf{Output:}}
\usepackage{graphicx}
\usepackage{subcaption}
\usepackage{booktabs}
\usepackage{enumitem}
\usepackage{xcolor}

\usepackage{authblk}

\definecolor{linkblue}{RGB}{0,51,153}
\usepackage[
    colorlinks=true,
    linkcolor=linkblue,
    citecolor=linkblue,
    urlcolor=linkblue,
    bookmarksnumbered=true,
    bookmarksopen=true,
]{hyperref}
\usepackage{cleveref}
\hypersetup{
    pdftitle={Mutual Information Constrained Chernoff Bottleneck},
    pdfauthor={Derek},
    pdfkeywords={information bottleneck, Chernoff information, rate-distortion theory, cardinality bound, Blahut-Arimoto algorithm},
}

\theoremstyle{plain}
\newtheorem{theorem}{Theorem}[section]
\newtheorem{lemma}[theorem]{Lemma}

\newtheorem{corollary}[theorem]{Corollary}
\newtheorem{fact}[theorem]{Fact}

\newtheorem{problem}{Problem}

\theoremstyle{definition}

\newtheorem{example}[theorem]{Example}

\theoremstyle{remark}
\newtheorem{remark}[theorem]{Remark}

\title{\Large\bfseries Mutual Information Constrained Chernoff Bottleneck}

\author{Dier Tang\thanks{Email: \texttt{tangde\_math@connect.hku.hk}.},\;
and Guangyue Han\thanks{Email: \texttt{ghan@hku.hk}.}\\
Department of Mathematics, The University of Hong Kong
}

\date{}

\begin{document}

\maketitle

\begin{abstract}
The classical information bottleneck (IB) measures the relevance of a representation $U$ of $X$ to a target $Y$ by $I(U;Y)$, which does not directly characterize the error of downstream decisions. For a binary hypothesis $Y$ inferred from many separately encoded observations, the optimal error exponent is the Chernoff information between the two conditional distributions of $U$ given $Y$. We study the mutual information constrained Chernoff bottleneck, which seeks an encoder that maximizes this Chernoff information subject to a rate constraint $I(U;X) \leq R$. We show that its optimal value $C(R)$ increases strictly up to $R = H(V)$, where $V$ merges the symbols of $X$ with equal likelihood ratio, remains at the uncompressed exponent beyond, and, unlike the IB curve, need not be concave. We further show that $k+1$ outputs suffice to attain $C(R)$, where $k$ is the cardinality of $V$. We propose an alternating algorithm that updates the encoder via a generalized Blahut--Arimoto algorithm and the Chernoff parameter $s$ via a nonlinear equation, and prove that its iterates remain feasible, with nondecreasing and convergent Chernoff information. Numerical experiments confirm the theory, and on real topic-detection data from the 20 Newsgroups corpus, compressing each word to only $17\%$ of its entropy retains $90\%$ of the error exponent and nearly the accuracy of the uncompressed classifier.

\textbf{Keywords:} Information bottleneck; Chernoff information; rate--distortion theory; cardinality bound; hypothesis testing; generalized Blahut--Arimoto algorithm
\end{abstract}

\section{Introduction}\label{sec: introduction}

Representation learning, a fundamental paradigm in machine learning \cite{bengio2013}, aims to automatically learn meaningful and compact representations of raw data that make downstream tasks---such as inference and prediction---more effective \cite{chen2020simple, radford2021learning, shwartzziv2024compress, simeoni2025dinov3, tang2026, luthra2025selfsupervised}. The information bottleneck (IB) \cite{tishby2000} operationalizes this goal by seeking a representation $U$ that preserves the information relevant to a target $Y$ while discarding redundant details of the observation $X$, expressed as the trade-off between maximizing $I(U;Y)$ and minimizing $I(U;X)$. Since then, the IB principle has been used both as a tool for analyzing deep neural networks and as an objective for learning robust, generalizable representations \cite{tishby2015, shwartz2017, alemi2017, kolchinsky2019nonlinear, achille2018, shamir2010, vera2018}.

However, while the mutual information $I(U;Y)$ quantifies average statistical dependence, it does not directly characterize the fundamental limit on the error probability of downstream decision-making. Specifically, when $Y \in \{0,1\}$ is a binary hypothesis to be inferred from $n$ representation samples $U_{1}, \dots, U_{n}$, obtained by applying the same encoder $P_{U|X}$ to observations $X_{1}, \dots, X_{n}$ that are i.i.d.\ given $Y$, the minimum Bayes error probability decays exponentially in $n$, and the optimal error exponent is the Chernoff information \cite{chernoff1952}
\[
C(P_{0}, P_{1}) = \max_{0 \leq s \leq 1} \Big(- \log \sum_{u} P_{0}(u)^{s} P_{1}(u)^{1-s} \Big)
\]
between the conditional distributions $P_{U|0}$ and $P_{U|1}$ of $U$ given $Y = 0$ and $Y = 1$, for every prior with $P_{Y}(0), P_{Y}(1) > 0$ \cite{cover2006, dembo1998}. In contrast to $I(U;Y)$, which depends on the prior, this exponent depends only on $P_{U|0}$ and $P_{U|1}$, and an encoder that is optimal for the classical IB need not be optimal for testing. This operational interpretation motivates replacing $I(U;Y)$ by the Chernoff information, thereby leading to the \emph{mutual information (MI) constrained Chernoff bottleneck}
\[
C(R) = \max_{P_{U|X}} \ C\big(P_{U|0}, P_{U|1}\big) \quad \text{s.t.} \quad I(U;X) \leq R,
\]
which seeks the representation under which the error of the downstream test decays fastest, subject to a budget $R$ on its complexity. The encoder is applied to each sample separately, as when the samples are observed by different sensors or edge devices, or processed by a per-sample feature extractor before the decision is made \cite{tsitsiklis1993dd, chamberland2003, shao2022}, in contrast to testing under communication constraints, where a whole block of observations is encoded jointly \cite{ahlswede1986, han1987}.

Although the MI constrained Chernoff bottleneck has the same form as the classical IB, the change of objective means that the theory and algorithms developed for the IB do not carry over directly. They rely on two properties of $I(U;Y)$. First, the IB loss $I(X;Y) - I(U;Y) = \mathbb{E}\big[\mathrm{D}(P_{Y|X} \| P_{Y|U})\big]$ is an expectation over $(X,U)$ of a per-letter divergence, which leads to self-consistent equations and Blahut--Arimoto-type iterations \cite{tishby2000, blahut1972, arimoto1972}. Second, $I(U;Y) = H(Y) - \sum_{u} P_{U}(u) H(Y|U=u)$ is a sum over the outputs of the encoder, weighted by their probabilities, which underlies the time-sharing and cardinality arguments \cite{witsenhausen1975}. The Chernoff information has neither property: it is not an expectation of a per-letter quantity, and it involves an optimization over the parameter $s$ that couples all outputs. Consequently, neither rate--distortion theory \cite{shannon1948, berger1971} nor the algorithms for the classical IB apply directly, and the shape of the optimal curve, cardinality bounds on the representation, and the computation of optimal encoders must be studied anew.

In this paper, we address these problems. Our contributions are as follows.
\begin{enumerate}[label=(\roman*)]
    \item \textit{Shape of the curve.} We prove that $C(R)$ is strictly increasing on $[0, H(V)]$ and saturates at $C(P_{X|0}, P_{X|1})$ for all $R \geq H(V)$, where $V$ merges the symbols of $X$ with equal likelihood ratio (Theorem~\ref{the: saturation and strict monotonicity}), and that for $R \leq H(V)$ every optimal encoder uses the full rate budget (Corollary~\ref{cor: rate constraint active}). In contrast to the IB curve, $C(R)$ need not be concave (Example~\ref{exa: not concave}).
    
    \item \textit{Cardinality bounds.} We prove that an optimal encoder exists with at most $|\mathcal{X}| + 1$ outputs (Theorem~\ref{the: supp U leq X+1}), and sharpen the bound to $k + 1$, where $k$ is the number of distinct likelihood ratios, by a reduction to $V$ (Theorem~\ref{the: sharpened cardinality}).
    
    \item \textit{Computation.} We develop an algorithm (Algorithm~\ref{alg: Algorithm for reformulated problem}) that alternately updates the encoder, by a generalized Blahut--Arimoto algorithm (Algorithm~\ref{alg: GBA for K_U|X}) with a bisection over the trade-off, and the parameter $s$, by solving a nonlinear equation. We prove that Algorithm~\ref{alg: GBA for K_U|X} converges to a global minimizer at rate $O(1/t)$ with a stopping certificate (Theorem~\ref{the: hatL global converge}, Corollary~\ref{cor: GBA stopping certificate}), that the bisection has an explicit bracket and certificate (Lemma~\ref{lem: bisection search of lambda}), and that the iterates of Algorithm~\ref{alg: Algorithm for reformulated problem} remain feasible, with $s$ uniformly bounded away from $0$ and $1$ and the Chernoff information monotonically convergent (Lemma~\ref{lem: interiority of s^*}, Theorem~\ref{the: convergence of algorithm_reformulated problem}, Remark~\ref{rem: s uniformly bounded away from 0 1}).
    
    \item \textit{Experiments.} On synthetic sources, the computed $C(R)$ curves confirm the theory of Section~\ref{sec: global shape}, and the optimal encoders are randomized likelihood-ratio quantizers (Sections~\ref{sec: exp C(R)} and~\ref{sec: exp structure}). On real data, in topic detection on the 20 Newsgroups corpus \cite{lang1995}, the Chernoff bottleneck compresses each word to $1$ nat, only $17\%$ of its entropy, while keeping $90\%$ of the error exponent and nearly the accuracy of the uncompressed classifier (Section~\ref{sec: exp real}).
\end{enumerate}

The rest of the paper is organized as follows. Section~\ref{sec: Setup} formulates the problem. Section~\ref{sec: global shape} characterizes the shape of the $C(R)$ curve, and Section~\ref{sec: Cardinality Bounds and Attainability} establishes the cardinality bounds and reformulates the problem over a finite output alphabet. Section~\ref{sec: An Iterative Algorithm} develops and analyzes the algorithm, Section~\ref{sec: numerical experiments} reports the numerical experiments, and Section~\ref{sec: conclusion} concludes the paper.

\section{Problem Setup}\label{sec: Setup}

Let $X$, $Y$ be discrete random variables with finite alphabet $\mathcal{X}$ and $\mathcal{Y} = \{0,1\}$, and joint distribution $P_{X,Y}$ with marginals $P_{X}(x) > 0, P_{Y}(y) > 0$ for any $x \in \mathcal{X}$ and $y \in \{0,1\}$. We derive $P_{X|Y=0}$ and $P_{X|Y=1}$ via Bayes formula, denote as $P_{X|0}$ and $P_{X|1}$ respectively. Assume there are no symbols that are ``instantly recognizable'', i.e.,
\begin{equation}
\label{eq: condition no instantly recognizable x}
    P_{X|0}(x) > 0, \ P_{X|1}(x) > 0, \ \forall x \in \mathcal{X},
\end{equation}
together with the nondegeneracy assumption $P_{X|0} \not\equiv P_{X|1}$.

Suppose samples $X^{n} = (X_1, \cdots, X_{n})$ are i.i.d. drawn from either $P_{X|0}$ or $P_{X|1}$, the error exponent of using $X^{n}$ to infer $Y$ is given by the Chernoff information
\[
C(P_{X|0}, P_{X|1}) = \max_{0 \leq s \leq 1} \Big(- \log \sum_{x} P_{X|0}(x)^{s} P_{X|1}(x)^{1-s} \Big) = - \min_{0 \leq s \leq 1} \log\sum_{x} P_{X|0}(x)^{s} P_{X|1}(x)^{1-s}.
\]
In this paper, we set $0^0 = \lim\limits_{s \to 0} 0^{s} = 0$. All logarithms are natural.

For an encoder $K = K_{U|X}$ that compresses $X \to U$ with finite output alphabet, denote by $P^{K}_{U|0}$, $P^{K}_{U|1}$ the induced conditional distributions, $P^{K}_{U}$ the marginal, and $I_{K}(U;X)$ the mutual information. Suppose samples $U^{n} = (U_1, \cdots, U_{n})$ are i.i.d. drawn from either $P^{K}_{U|0}$ or $P^{K}_{U|1}$, the error exponent of using $U^{n}$ to infer $Y$ is given by the Chernoff information $C(P^{K}_{U|0}, P^{K}_{U|1})$. Define the distortion of the encoder $K_{U|X}$ as
\begin{equation}
\label{eq: definition distortion d(K)}
    d(K) = C(P_{X|0}, P_{X|1}) - C(P^{K}_{U|0}, P^{K}_{U|1}).
\end{equation}
The distortion $d(K)$ measures how much information of inferring $Y$ is lost in the encoding process, and is well-defined according to the following lemma.

\begin{lemma}
\label{lem: d(K) geq 0}
    For any encoder $K_{U|X}$, $d(K) \geq 0$.
\end{lemma}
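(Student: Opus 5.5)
We need to show $C(P^K_{U|0},P^K_{U|1}) \le C(P_{X|0},P_{X|1})$. This is a data-processing inequality for Chernoff information. The cleanest route is to prove it for each fixed $s\in[0,1]$ and then take the maximum over $s$.

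Fix $s\in[0,1]$ and define
\[
G_X(s)=\sum_x P_{X|0}(x)^sP_{X|1}(x)^{1-s}, \qquad G_U(s)=\sum_u P^K_{U|0}(u)^sP^K_{U|1}(u)^{1-s}.
\]
The key claim is $G_U(s)\ge G_X(s)$. Then $-\log G_U(s)\le -\log G_X(s)\le C(P_{X|0},P_{X|1})$ for every $s$. Maximizing the left-hand side over $s$ gives $C(P^K_{U|0},P^K_{U|1})\le C(P_{X|0},P_{X|1})$, i.e.\ $d(K)\ge 0$.

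To prove the claim for $s\in(0,1)$, write $P^K_{U|y}(u)=\sum_x K(u|x)P_{X|y}(x)$. Apply Hölder's inequality with exponents $1/s$ and $1/(1-s)$ to the vectors $\bigl(K(u|x)^sP_{X|0}(x)^s\bigr)_x$ and $\bigl(K(u|x)^{1-s}P_{X|1}(x)^{1-s}\bigr)_x$:
\[
\sum_x K(u|x)P_{X|0}(x)^sP_{X|1}(x)^{1-s}\le \Bigl(\sum_x K(u|x)P_{X|0}(x)\Bigr)^s\Bigl(\sum_x K(u|x)P_{X|1}(x)\Bigr)^{1-s}.
\]
Summing over $u$ and using $\sum_u K(u|x)=1$, the left-hand side becomes $G_X(s)$ and the right-hand side is $G_U(s)$. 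Equivalently, this is joint concavity of $(a,b)\mapsto a^sb^{1-s}$ combined with a Jensen argument.

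The endpoints $s\in\{0,1\}$ need care because of the convention $0^0=0$.
\begin{itemize}
\item By assumption \eqref{eq: condition no instantly recognizable x}, both $P_{X|y}$ are strictly positive, so $G_X(0)=G_X(1)=1$.
\item On the $U$ side, $G_U(0)=\sum_{u:P^K_{U|0}(u)>0}P^K_{U|1}(u)$. If $P^K_{U|1}(u)>0$, then some $x$ has $K(u|x)>0$, and hence $P^K_{U|0}(u)>0$ by positivity. So the two supports of $P^K_{U|y}$ coincide and $G_U(0)=1$; likewise $G_U(1)=1$.
\end{itemize}
Thus equality holds at the endpoints, and the claim holds on all of $[0,1]$. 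Alternatively, one could use continuity in $s$, but the support argument is more direct.

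The main obstacle is only the endpoint bookkeeping under the $0^0$ convention; the interior case is a direct application of Hölder's inequality. Since the maximum over the compact interval $[0,1]$ is attained on both sides, no further limiting argument is needed.
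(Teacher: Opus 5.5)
Your proof is correct and is essentially the paper's argument: the paper justifies the lemma by the data processing inequality for Chernoff information, with the direct proof deferred to Lemma~\ref{lem: holder BsK geq BsX}, which applies H\"older's inequality with exponents $1/s$ and $1/(1-s)$ for each output symbol $u$, sums over $u$ using $\sum_u K(u|x)=1$, and then optimizes over $s$. Your support argument at $s\in\{0,1\}$ spells out what the paper only asserts there, namely that both sides equal $1$.
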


\begin{proof}
    Since $U \leftrightarrow X \leftrightarrow Y$ forms a Markov chain, the proof is straightforward by the data processing inequality of Chernoff information \cite{vanerven2014}. A direct proof via H\"older's inequality is given in Lemma \ref{lem: holder BsK geq BsX}.
\end{proof}

Given rate $R \geq 0$ and alphabet $\mathcal{U}_{M} = \{1,\cdots,M\}$ with finite $M$, define the \emph{feasible encoder set}
\begin{equation}
\label{eq: feasible encoder set K(M)R}
    \mathcal{K}^{(M)}_{R} = \Big\{K_{U|X}: K(u|x) \geq 0, \sum_{u} K(u|x) = 1, \forall (u,x) \in \mathcal{U}_{M} \times \mathcal{X}; \ I_{K}(U;X) \leq R \Big\},
\end{equation}
with its strictly positive subset
\begin{equation}
\label{eq: positive feasible encoder set K(M)R+}
    \mathcal{K}^{(M)}_{R,+} = \Big\{K_{U|X} \in \mathcal{K}^{(M)}_{R}: K(u|x) > 0, \ \forall (u,x) \in \mathcal{U}_{M} \times \mathcal{X} \Big\},
\end{equation}
and define the \emph{entire feasible set}
\begin{equation}
\label{eq: union feasible encoder set}
    \mathscr{K}_{R} = \bigcup_{M \geq 1} \mathcal{K}^{(M)}_{R}.
\end{equation}
Then, our problem can be formulated as
\begin{problem}
\label{prob: initial problem}
Given $R \geq 0$, seek
\[
K^{*} \in \arg \inf_{K \in \mathscr{K}_{R}} d(K),
\]
equivalently,
\begin{equation}
\label{eq: problem 1}
    K^{*} \in \arg \sup_{K \in \mathscr{K}_{R}} C(P^{K}_{U|0}, P^{K}_{U|1}).
\end{equation}
\end{problem}
We refer to \eqref{eq: problem 1} as the \emph{mutual information (MI) constrained Chernoff bottleneck}, for which the attainability of the optimal solutions is established in Section \ref{sec: Cardinality Bounds and Attainability}.

For rate $R \geq 0$, define the supremum in \eqref{eq: problem 1} as
\begin{equation}
\label{eq: definition CR}
    C(R) \stackrel{\Delta}{=} \sup_{K \in \mathscr{K}_{R}} C(P^{K}_{U|0}, P^{K}_{U|1}),
\end{equation}
which is our object of study. Let $F(R) \stackrel{\Delta}{=} e^{-C(R)}$.

Given alphabet $\mathcal{U}_{M} = \{1,\cdots,M\}$ with finite $M$. For $K \in \mathcal{K}^{(M)}_{R}$ and $s \in [0,1]$, define
\begin{equation}
\label{eq: definition BM(K,s)}
    B_{M}(K, s) \stackrel{\Delta}{=} \sum_{u \in \mathcal{U}_{M}} P^{K}_{U|0}(u)^{s} P^{K}_{U|1}(u)^{1-s},
\end{equation}
and for $K \in \mathcal{K}^{(M)}_{R}$, define the corresponding Chernoff information
\begin{equation}
\label{eq: definition Chernoff CK}
    \mathcal{C}(K) \stackrel{\Delta}{=} -\log \min_{s \in [0,1]} B_{M}(K, s).
\end{equation}

We abbreviate $C_{X} = C(P_{X|0}, P_{X|1}) > 0$, which is strictly positive since $P_{X|0} \not\equiv P_{X|1}$. Denote by
\begin{equation}
\label{eq: definition of ell and L}
    \ell \stackrel{\Delta}{=} \min_{x \in \mathcal{X}} \frac{P_{X|1}(x)}{P_{X|0}(x)}, \quad L \stackrel{\Delta}{=} \max_{x \in \mathcal{X}} \frac{P_{X|1}(x)}{P_{X|0}(x)},
\end{equation}
which are both well defined according to \eqref{eq: condition no instantly recognizable x}, and let
\begin{equation}
\label{eq: definition of Gamma}
    \Gamma \stackrel{\Delta}{=} \max \{L, \ell^{-1}\} = \exp \Big(\max_{x} \Big|\log \frac{P_{X|1}(x)}{P_{X|0}(x)} \Big| \Big) > 1.
\end{equation}

For all $x \in \mathcal{X}$, let $\rho(x) \stackrel{\Delta}{=} \dfrac{P_{X|1}(x)}{P_{X|0}(x)} \in [\ell, L]$, and $\{\rho_1, \cdots, \rho_{k} \}$ be the distinct values taken by $\rho$, then $2 \leq k \leq |\mathcal{X}|$ (the lower bound is because $P_{X|0} \not\equiv P_{X|1}$). Define $\nu : \mathcal{X} \to \{1, \cdots, k \}$ by $\nu(x) = j$ if and only if $\rho(x) = \rho_{j}$, and set variable
\begin{equation}
\label{eq: definition V = nu(X)}
    V = \nu(X).
\end{equation}
Thus $V$ merges exactly those symbols that have the same likelihood ratio, and it is the minimal sufficient statistic of $X$ for $Y$ \cite{cover2006}. Moreover, $P_{V}(j) = \sum_{x: \nu(x)=j} P_{X}(x)$, and since $V$ is a function of $X$, we have $I(V;X) = H(V) \leq H(X)$, with equality if and only if all $|\mathcal{X}|$ ratios are distinct. We write $\nu^{-1}$ as the inverse mapping of $\nu$, and define
\begin{equation}
\label{eq: definition likelihood ratio classes}
    \mathcal{X}_{j} \stackrel{\Delta}{=} \nu^{-1}(j) = \{x \in \mathcal{X}: \rho(x) = \rho_{j}\}, \quad j = 1, \cdots, k,
\end{equation}
for the \textit{likelihood-ratio classes}, so that $\mathcal{X} = \bigsqcup_{j=1}^{k} \mathcal{X}_{j}$. Denote by $K^{X}$ the identity encoder that outputs $X$ itself (with no compression), and
\begin{equation}
\label{eq: definition KV}
    K^{V}(u|x) = \mathbf{1}\{u = \nu(x)\}, \quad (u,x) \in \mathcal{U}_{k} \times \mathcal{X},
\end{equation}
the deterministic encoder that outputs $V$.

\section{Global Shape of the \texorpdfstring{$C(R)$}{C(R)} Curve}\label{sec: global shape}

With $C(R)$ defined in \eqref{eq: definition CR}, the following lemma shows some basic shape of the $C(R)$ curve.

\begin{lemma}
\label{lem: basic shape of C(R)}
\begin{enumerate}[label=(\roman*)]
    \item\label{item: basic shape of C(R) 1} $C(0) = 0$;
    \item\label{item: basic shape of C(R) 2} $C(R)$ is nondecreasing on $[0, \infty)$;
    \item\label{item: basic shape of C(R) 3} $0 \leq C(R) \leq C_{X}$ for all $R \in [0, \infty)$.
\end{enumerate}
\end{lemma}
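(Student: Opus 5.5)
The plan is to prove the three items in the order (iii), (ii), (i). Each follows from elementary properties of the feasible sets $\mathscr{K}_R$ together with Lemma~\ref{lem: d(K) geq 0}.

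\emph{Item (iii).} First we check that $\mathscr{K}_R$ is nonempty for every $R\ge 0$. The constant encoder $K(1|x)=1$ with $M=1$ has $I_K(U;X)=0\le R$, so it belongs to $\mathscr{K}_R$ and the supremum in \eqref{eq: definition CR} is over a nonempty set. For any $K$, the Chernoff information $\mathcal{C}(K)\ge 0$. Indeed, at $s=0$ (or $s=1$) we get $B_M(K,s)=\sum_u P^K_{U|1}(u)=1$, so $\min_s B_M(K,s)\le 1$ and hence $-\log\min_s B_M(K,s)\ge 0$. (One can also see $B_M(K,s)\le 1$ for every $s$ from H\"older's inequality, but the endpoint suffices.) This gives $C(R)\ge 0$. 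The upper bound comes from Lemma~\ref{lem: d(K) geq 0}: $d(K)\ge 0$ means $C(P^K_{U|0},P^K_{U|1})\le C_X$ for every encoder, so the supremum is also at most $C_X$.

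\emph{Item (ii).} If $0\le R_1\le R_2$, then the constraint $I_K(U;X)\le R_1$ implies $I_K(U;X)\le R_2$. So $\mathcal{K}^{(M)}_{R_1}\subseteq\mathcal{K}^{(M)}_{R_2}$ for each $M$, and therefore $\mathscr{K}_{R_1}\subseteq\mathscr{K}_{R_2}$. A supremum over a larger set is no smaller, so $C(R_1)\le C(R_2)$.

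\emph{Item (i).} The only step requiring an argument is identifying which encoders lie in $\mathscr{K}_0$.
\begin{itemize}
\item The condition $I_K(U;X)=0$ means $U$ is independent of $X$, so $K(u|x)=P^K_U(u)$ for all $x$; here we use $P_X(x)>0$ for every $x$.
\item For such $K$ and each $y\in\{0,1\}$, $P^K_{U|y}(u)=\sum_x P_{X|y}(x)K(u|x)=P^K_U(u)$. Hence $P^K_{U|0}=P^K_{U|1}$.
\item Then $B_M(K,s)=\sum_u P^K_U(u)=1$ for all $s$, so $\mathcal{C}(K)=0$.
\end{itemize}
Taking the supremum gives $C(0)\le 0$, and together with item (iii) this yields $C(0)=0$. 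Nothing here is a real obstacle; the one point to handle carefully is the convention $0^0=0$ in $B_M$. This affects only outputs $u$ with zero probability under one hypothesis, and in the computations above such terms are either absent or vanish consistently, so the conclusions are unaffected.
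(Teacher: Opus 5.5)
Your proof is correct and follows the paper's argument:
\begin{itemize}
\item monotonicity comes from the inclusion $\mathscr{K}_{R_1}\subseteq\mathscr{K}_{R_2}$;
\item $C(0)=0$ holds because $I_K(U;X)=0$ together with the Markov chain forces $P^K_{U|0}=P^K_{U|1}$;
\item the upper bound $C_X$ follows from Lemma~\ref{lem: d(K) geq 0}.
\end{itemize}
The only difference is cosmetic: you obtain $C(R)\ge 0$ directly from $B_M(K,0)=1$, which is valid since assumption \eqref{eq: condition no instantly recognizable x} makes the supports of $P^K_{U|0}$ and $P^K_{U|1}$ coincide, whereas the paper leaves this implicit as a consequence of (i) and (ii).
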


\begin{proof}
    For \ref{item: basic shape of C(R) 1}, Markovity and $I_{K}(U;X) = 0$ implies $P^{K}_{U|0} \equiv P^{K}_{U|1}$. Hence, $C(0) = 0$. For \ref{item: basic shape of C(R) 2}, monotonicity holds because $R \mapsto \mathscr{K}_{R}$ is nondecreasing for inclusion. For \ref{item: basic shape of C(R) 3}, the upper bound follows from Lemma \ref{lem: d(K) geq 0}, i.e., the data processing inequality for Chernoff information \cite{vanerven2014, csiszar1967}.
\end{proof}

Theorem \ref{the: saturation and strict monotonicity} gives a stronger result for the shape of $C(R)$, and its proof relies on the following lemmas. Throughout, $B_{M}(K,s)$ and $V$ are defined in \eqref{eq: definition BM(K,s)} and \eqref{eq: definition V = nu(X)}, respectively.

\begin{lemma}[Sufficiency of $V$]
\label{lem: KV preserves Bs}
    Let $|\mathcal{V}| = k$, and $K^{V}$ be defined in \eqref{eq: definition KV}, we have
    \[
    I_{K^{V}}(U;X) = H(V), \quad B_{k}(K^{V},s) = B_{|\mathcal{X}|}(K^{X},s), \ \forall s \in [0,1].
    \]
    In particular, $C(P^{K^{V}}_{U|0}, P^{K^{V}}_{U|1}) = C_{X}$.
\end{lemma}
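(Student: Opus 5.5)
The plan is a direct computation. Both claims follow once we note that $K^{V}$ is deterministic and that the induced conditional distributions of $U = V$ are obtained by summing $P_{X|0}$ and $P_{X|1}$ over the likelihood-ratio classes $\mathcal{X}_{j}$ of \eqref{eq: definition likelihood ratio classes}.

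\textbf{The rate.} Since $K^{V}(u|x) = \mathbf{1}\{u = \nu(x)\}$, the output $U = \nu(X) = V$ is a deterministic function of $X$. Hence $H(U|X) = 0$ and
\[
I_{K^{V}}(U;X) = H(U) - H(U|X) = H(V).
\]

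\textbf{The induced distributions.} For $j \in \{1,\dots,k\}$ and $y \in \{0,1\}$,
\[
P^{K^{V}}_{U|y}(j) = \sum_{x \in \mathcal{X}_{j}} P_{X|y}(x).
\]
On $\mathcal{X}_{j}$ we have $P_{X|1}(x) = \rho_{j} P_{X|0}(x)$. Therefore $P^{K^{V}}_{U|1}(j) = \rho_{j}\, P^{K^{V}}_{U|0}(j)$, where $P^{K^{V}}_{U|0}(j) > 0$ by \eqref{eq: condition no instantly recognizable x} and because $\mathcal{X}_{j}$ is nonempty.

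\textbf{Equality of $B$.} Fix $s \in [0,1]$. For the identity encoder, grouping the sum by classes gives
\[
B_{|\mathcal{X}|}(K^{X},s) = \sum_{j=1}^{k} \sum_{x \in \mathcal{X}_{j}} P_{X|0}(x)^{s} \big(\rho_{j} P_{X|0}(x)\big)^{1-s} = \sum_{j=1}^{k} \rho_{j}^{1-s} \sum_{x \in \mathcal{X}_{j}} P_{X|0}(x).
\]
For $K^{V}$,
\[
B_{k}(K^{V},s) = \sum_{j=1}^{k} P^{K^{V}}_{U|0}(j)^{s} \big(\rho_{j} P^{K^{V}}_{U|0}(j)\big)^{1-s} = \sum_{j=1}^{k} \rho_{j}^{1-s} P^{K^{V}}_{U|0}(j).
\]
These two expressions coincide. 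The endpoints $s = 0$ and $s = 1$ need no special care, because every base is strictly positive and the convention $0^{0} = 0$ is never invoked.

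\textbf{Chernoff information.} Since $B_{k}(K^{V},\cdot) = B_{|\mathcal{X}|}(K^{X},\cdot)$ as functions on $[0,1]$, their minima over $s \in [0,1]$ agree. By \eqref{eq: definition Chernoff CK}, $\mathcal{C}(K^{V}) = \mathcal{C}(K^{X}) = C_{X}$.

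The proof has no real obstacle. The only points to handle carefully are that the factorization $P_{X|1} = \rho_{j} P_{X|0}$ holds on each class, and the positivity of all masses, which \eqref{eq: condition no instantly recognizable x} guarantees.
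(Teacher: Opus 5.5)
Your proof is correct and follows essentially the same route as the paper: the rate equals $H(V)$ because $U = \nu(X)$ is a deterministic function of $X$, and the classwise identity $P_{X|1} = \rho_{j} P_{X|0}$ on $\mathcal{X}_{j}$ reduces both $B$'s to $\sum_{j} \rho_{j}^{1-s} P_{X|0}(\mathcal{X}_{j})$. The paper matches the two sums class by class rather than writing out both totals, which is only a cosmetic difference; your note that strict positivity avoids any issue at $s \in \{0,1\}$ is correct.
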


\begin{proof}
    Appendix~\ref{proof: KV preserves Bs}.
\end{proof}

\begin{lemma}[Hölder]
\label{lem: holder BsK geq BsX}
    For any encoder $K \in \mathcal{K}^{(M)}_{R}$ and $s \in [0,1]$,
    \begin{equation}
    \label{eq: BsK geq BsX}
        B_{M}(K,s) \geq B_{|\mathcal{X}|}(K^{X},s).
    \end{equation}
    Moreover, if $I_{K}(U;X) < H(V)$, then the inequality \eqref{eq: BsK geq BsX} is strict for every $s \in (0,1)$.
\end{lemma}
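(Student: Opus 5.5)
Write $P_y(x) = P_{X|y}(x)$. The plan is to apply Hölder's inequality to each output letter $u$ separately and then sum over $u$. Fix $s\in(0,1)$; the endpoints $s=0,1$ are trivial, since both sides equal $1$ when all $P_1(x)>0$. For each $u$, Hölder's inequality with exponents $1/s$ and $1/(1-s)$, applied to $a_x = (K(u|x)P_0(x))^{s}$ and $b_x = (K(u|x)P_1(x))^{1-s}$, gives
\[
\sum_x K(u|x) P_0(x)^s P_1(x)^{1-s} \le \Big(\sum_x K(u|x)P_0(x)\Big)^{s}\Big(\sum_x K(u|x)P_1(x)\Big)^{1-s} = P^K_{U|0}(u)^s P^K_{U|1}(u)^{1-s}.
\]
Summing over $u$ and using $\sum_u K(u|x)=1$, the left-hand side becomes $\sum_x P_0(x)^sP_1(x)^{1-s} = B_{|\mathcal X|}(K^X,s)$. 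This proves \eqref{eq: BsK geq BsX}.

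For strictness, I use the equality case of Hölder. For fixed $u$ and $s\in(0,1)$, equality holds iff the vectors $(K(u|x)P_0(x))_x$ and $(K(u|x)P_1(x))_x$ are proportional. Since all $P_y(x)>0$, this means the ratio $\rho(x)=P_1(x)/P_0(x)$ is constant on the support $S_u=\{x: K(u|x)>0\}$. Equivalently, $S_u$ lies inside a single likelihood-ratio class $\mathcal X_{j(u)}$. Outputs with $P^K_U(u)=0$ contribute zero to both sides, so they can be discarded. Hence, if equality holds in the summed inequality, then every $u$ with positive probability satisfies $S_u\subseteq\mathcal X_{j(u)}$. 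This forces $V=\nu(X)$ to be a deterministic function of $U$ almost surely, namely $V=j(U)$.

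The conclusion then follows from information inequalities: $H(V)\le I(U;V)\le I(U;X)$. The first inequality holds because $H(V|U)=0$. The second is data processing along the chain $U - X - V$; alternatively, $V$ is a function of $X$, so $I(U;X)\ge I(U;V)$. Therefore equality at any $s\in(0,1)$ forces $I_K(U;X)\ge H(V)$. Contrapositively, $I_K(U;X)<H(V)$ gives strict inequality for every $s\in(0,1)$.

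I expect the main obstacle to be the equality-case bookkeeping. The proportionality must be stated carefully when some $K(u|x)=0$, and the support condition must be correctly translated into $H(V|U)=0$. This is routine once zero-probability outputs are excluded and the positivity assumption \eqref{eq: condition no instantly recognizable x} is used.
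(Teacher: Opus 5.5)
Your proposal is correct and follows the paper's proof essentially step for step: Hölder per output letter with exponents $1/s$ and $1/(1-s)$, summation using $\sum_u K(u|x)=1$, and then the Hölder equality case forcing $\rho$ to be constant on each support, so that $V$ is a function of $U$. The only difference is the final chain, where you write $H(V)=I(U;V)\le I(U;X)$ and the paper writes $I_K(U;X)\ge I(g(U);X)=I(V;X)=H(V)$; this is the same data-processing argument.
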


\begin{proof}
    Appendix~\ref{proof: holder BsK geq BsX}.
\end{proof}

\begin{lemma}[Time-sharing]
\label{lem: time sharing}
    Let $K_1, K_2$ be encoders with output alphabets $\mathcal{U}_{M_1}, \mathcal{U}_{M_2}$, producing $U_1$ and $U_2$ from $X$, respectively. For $\theta \in [0,1]$, let $Q \sim \mathrm{Bern}(\theta)$ be independent of $X$, and let $K$ be the encoder with output alphabet $\mathcal{U}_{M_1+M_2}$ producing $U = U_1$ if $Q = 1$ and $U = M_1 + U_2$ if $Q = 0$. Then, for any $s \in [0,1]$,
    \[
    I_{K}(U;X) = \theta I_{K_1}(U;X) + (1-\theta) I_{K_2}(U;X), \quad B_{M_1+M_2}(K,s) = \theta B_{M_1}(K_1,s) + (1-\theta) B_{M_2}(K_2,s).
    \]
\end{lemma}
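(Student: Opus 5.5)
The plan is to write down the joint law of $(Q,X,U)$ for the composite encoder and then check the two identities one at a time. By construction,
\[
K(u|x) = \theta K_1(u|x) \ \text{ for } u \in \mathcal{U}_{M_1}, \qquad K(M_1+u|x) = (1-\theta) K_2(u|x) \ \text{ for } u \in \mathcal{U}_{M_2}.
\]
The two blocks of outputs are disjoint, so $Q$ is a deterministic function of $U$. The computations below rest on this fact.

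\emph{The $B$ identity.} For $y \in \{0,1\}$, the induced conditionals are $P^{K}_{U|y}(u) = \theta P^{K_1}_{U|y}(u)$ on the first block and $P^{K}_{U|y}(M_1+u) = (1-\theta)P^{K_2}_{U|y}(u)$ on the second. Substitute these into \eqref{eq: definition BM(K,s)}. On the first block each term becomes $\theta^{s}\theta^{1-s} P^{K_1}_{U|0}(u)^{s}P^{K_1}_{U|1}(u)^{1-s} = \theta\, P^{K_1}_{U|0}(u)^{s}P^{K_1}_{U|1}(u)^{1-s}$. The second block works the same way with the factor $1-\theta$. Summing over both blocks gives $B_{M_1+M_2}(K,s)=\theta B_{M_1}(K_1,s)+(1-\theta)B_{M_2}(K_2,s)$.

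The only delicate points are the boundary cases, and the convention $0^0=0$ is what makes them consistent.
\begin{itemize}
\item At $s\in\{0,1\}$, write the factor as $\theta^{s}\theta^{1-s}=\theta$ and only then apply the convention to the probability factors. Both sides then agree term by term.
\item When $\theta\in\{0,1\}$, one block has zero mass and contributes $0$ on both sides.
\end{itemize}

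\emph{The mutual-information identity.} Because $Q$ is a function of $U$,
\[
I_K(U;X)=I(U,Q;X)=I(Q;X)+I(U;X|Q).
\]
Since $Q$ is independent of $X$, we have $I(Q;X)=0$. Conditioned on $Q=1$, the pair $(X,U)$ has law $P_X K_1$. Conditioned on $Q=0$, it has the law of $(X, M_1+U_2)$, which is a relabelling of $P_X K_2$ and therefore has the same mutual information. Hence
\[
I(U;X|Q)=\theta I_{K_1}(U;X)+(1-\theta)I_{K_2}(U;X).
\]

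As a cross-check, one can also compute directly. The marginal is $P^K_U=\theta P^{K_1}_U\oplus(1-\theta)P^{K_2}_U$, and the $\log\theta$ factors cancel inside $\log\frac{K(u|x)}{P_U^K(u)}$.

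I do not expect a real obstacle here. The only step needing care is the $0^0$ convention at the endpoints $s\in\{0,1\}$ and at $\theta\in\{0,1\}$, which is handled by factoring out $\theta^{s}\theta^{1-s}$ before applying the convention.
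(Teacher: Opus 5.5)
Your proposal is correct and matches the paper's proof: both use that $Q=\mathbf{1}\{U\le M_1\}$ is a function of $U$ and independent of $X$ to write $I_K(U;X)=I(Q;X)+I(U;X|Q)$, and both obtain the $B$ identity blockwise from $(c\alpha)^{s}(c\beta)^{1-s}=c\,\alpha^{s}\beta^{1-s}$ with $c\in\{\theta,1-\theta\}$. Your boundary-case caveat is harmless but not needed: under $0^0=0$ one has $x^0=\mathbf{1}\{x>0\}$, which is multiplicative, so this identity holds for all $c,\alpha,\beta\ge 0$ and $s\in[0,1]$ regardless of when the convention is applied (it would also hold under $0^0=1$).
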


\begin{proof}
    Appendix~\ref{proof: time sharing}.
\end{proof}

\begin{theorem}[Saturation and strict monotonicity]
\label{the: saturation and strict monotonicity}
    $C(R)$ is strictly increasing on $[0, H(V)]$, and $C(R) = C_{X}$ for every $R \geq H(V)$.
\end{theorem}

\begin{proof}
    \textit{Saturation.} Let $R \geq H(V)$. By Lemma \ref{lem: KV preserves Bs}, $I_{K^{V}}(U;X) = H(V) \leq R$, so that $K^{V} \in \mathscr{K}_{R}$, and $C(P^{K^{V}}_{U|0}, P^{K^{V}}_{U|1}) = C_{X}$. Hence $C(R) \geq C_{X}$, while Lemma \ref{lem: basic shape of C(R)} gives $C(R) \leq C_{X}$. Therefore $C(R) = C_{X}$.

    \textit{Strict monotonicity.} Let $0 \leq R_1 < R_2 \leq H(V)$, and let $K^{*} \in \mathscr{K}_{R_1}$ with output alphabet $\mathcal{U}_{M^{*}}$ attain $C(R_1)$, i.e., $\mathcal{C}(K^{*}) = C(R_1)$ (the existence of $K^{*}$ is guaranteed by Theorem \ref{the: supp U leq X+1}). Write $I^{*} = I_{K^{*}}(U;X) \leq R_1 < H(V)$.

    By Young's inequality (Fact \ref{fact: Young's inequality}), for every $s \in [0,1]$,
    \[
    B_{M^{*}}(K^{*},s) \leq \sum_{u \in \mathcal{U}_{M^{*}}} \big( s P^{K^{*}}_{U|0}(u) + (1-s) P^{K^{*}}_{U|1}(u) \big) = 1,
    \]
    while $B_{M^{*}}(K^{*},0) = B_{M^{*}}(K^{*},1) = 1$. If $B_{M^{*}}(K^{*},s) \equiv 1$ on $[0,1]$, we select $\bar{s} = \frac{1}{2}$. Otherwise, $\min_{s \in [0,1]} B_{M^{*}}(K^{*},s) < 1$, then every $\bar{s} \in \arg\min_{s \in [0,1]} B_{M^{*}}(K^{*},s)$ lies in $(0,1)$. In both cases, we can select $\bar{s} \in (0,1)$ and
    \begin{equation}
    \label{eq: choice of bar s}
        B_{M^{*}}(K^{*},\bar{s}) = \min_{s \in [0,1]} B_{M^{*}}(K^{*},s) = e^{-\mathcal{C}(K^{*})} = F(R_1).
    \end{equation}
    Moreover, since $I^{*} < H(V)$, Lemma \ref{lem: holder BsK geq BsX} gives
    \begin{equation}
    \label{eq: strict gap at bar s}
        B_{M^{*}}(K^{*},\bar{s}) > B_{|\mathcal{X}|}(K^{X},\bar{s}).
    \end{equation}

    Now let $K'$ be the encoder with output alphabet $\mathcal{U}_{k+M^{*}}$ given by Lemma \ref{lem: time sharing}, with $K_1 = K^{V}$, $K_2 = K^{*}$ and
    \[
    \theta = \frac{R_2 - I^{*}}{H(V) - I^{*}} \in (0, 1].
    \]
    By Lemma \ref{lem: time sharing} and Lemma \ref{lem: KV preserves Bs},
    \[
    I_{K'}(U;X) = \theta H(V) + (1-\theta) I^{*} = R_2,
    \]
    so that $K' \in \mathscr{K}_{R_2}$. Using Lemma \ref{lem: time sharing}, then Lemma \ref{lem: KV preserves Bs}, then \eqref{eq: strict gap at bar s} with $\theta > 0$, and finally \eqref{eq: choice of bar s},
    \[
    \begin{aligned}
    F(R_2) & \leq e^{-\mathcal{C}(K')} \leq B_{k+M^{*}}(K',\bar{s}) = \theta B_{k}(K^{V},\bar{s}) + (1-\theta) B_{M^{*}}(K^{*},\bar{s}) \\
    & = \theta B_{|\mathcal{X}|}(K^{X},\bar{s}) + (1-\theta) B_{M^{*}}(K^{*},\bar{s}) < B_{M^{*}}(K^{*},\bar{s}) = F(R_1).
    \end{aligned}
    \]
    Therefore, $C(R_2) > C(R_1)$, which completes the proof.
\end{proof}

\begin{corollary}[The rate constraint is active]
\label{cor: rate constraint active}
    For $R \leq H(V)$, every optimal encoder of Problem \ref{prob: initial problem} satisfies $I_{K}(U;X) = R$.
\end{corollary}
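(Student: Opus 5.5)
We argue by contradiction, reusing the strict monotonicity established in Theorem~\ref{the: saturation and strict monotonicity}. Fix $R \leq H(V)$ and let $K$ be an optimal encoder for Problem~\ref{prob: initial problem}, so that $K \in \mathscr{K}_{R}$ and $\mathcal{C}(K) = C(R)$. Write $I = I_{K}(U;X) \leq R$, and suppose for contradiction that $I < R$.

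The key observation is that $K$ is also feasible at the smaller rate $I$: $K \in \mathscr{K}_{I}$, since $\mathscr{K}_{I}$ is defined by the constraint $I_{K}(U;X) \leq I$. Hence $C(I) \geq \mathcal{C}(K) = C(R)$. We have $0 \leq I < R \leq H(V)$, with both $I$ and $R$ in $[0, H(V)]$. By Theorem~\ref{the: saturation and strict monotonicity}, $C$ is strictly increasing there, so $C(I) < C(R)$. This contradicts $C(I) \geq C(R)$, and therefore $I_{K}(U;X) = R$.

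There is essentially no obstacle; only two points need checking. First, strict monotonicity must hold on the closed interval including the endpoint $H(V)$. The proof of the theorem allows $R_2 = H(V)$, giving $\theta = 1$, which is legitimate. Second, the case $R = 0$ is trivial, since $I \leq 0$ forces $I = 0 = R$. If one prefers not to invoke the theorem as a black box, the same conclusion follows directly from its time-sharing construction. Mix $K$ with $K^{V}$ using weight $\theta = (R - I)/(H(V) - I) > 0$. By Lemma~\ref{lem: holder BsK geq BsX} and Lemma~\ref{lem: time sharing}, the mixture has rate $R$ and Chernoff information strictly larger than $\mathcal{C}(K)$, which again contradicts optimality.
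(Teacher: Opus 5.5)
Your argument is correct and is exactly the paper's proof. The containment $K \in \mathscr{K}_{I_K(U;X)}$ gives $C(I_K(U;X)) \geq C(R)$, which contradicts the strict monotonicity of $C$ on $[0,H(V)]$ from Theorem~\ref{the: saturation and strict monotonicity}; your direct time-sharing variant also works, provided you evaluate the mixture at a point $\bar{s}\in(0,1)$ that minimizes $B_M(K,\cdot)$, as in the theorem's proof.
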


\begin{proof}
    Assume $K$ is optimal at rate $R$ with $I_{K}(U;X) < R$. Since $K \in \mathscr{K}_{I_{K}(U;X)}$, we have $C(I_{K}(U;X)) \geq C(R)$, contradicting the strict monotonicity of $C$ on $[0, H(V)]$ established in Theorem \ref{the: saturation and strict monotonicity}.
\end{proof}

Time-sharing guarantees the concavity of the IB curve $R \mapsto \max\{I(U;Y) : I(U;X) \leq R\}$, whereas $C(R)$ need not be concave, as the following example shows.

\begin{example}[$C(R)$ curve not necessarily concave]
\label{exa: not concave}
    Let $\mathcal{X} = \{1,2\}$, $P_{Y}(0) = P_{Y}(1) = \frac{1}{2}$, $P_{X|0} = (1-a, a)$ and $P_{X|1} = (a, 1-a)$ with $0 < a \leq 10^{-3}$. Then, $C(R)$ is not concave on $[0, \log 2]$.
\end{example}

\begin{proof}
    Appendix~\ref{proof: counterexample}.
\end{proof}

\section{Cardinality Bounds and Attainability}\label{sec: Cardinality Bounds and Attainability}

Since Problem \ref{prob: initial problem} is searching $K$ in the entire feasible set, which is the union of an infinite number of sets, the attainability of the optimal solution has not been determined. Theorem \ref{the: supp U leq X+1} provides a guarantee in this regard, while Lemma \ref{lem: K_R non empty compact convex} shows many good properties of the feasible encoder set $\mathcal{K}^{(M)}_{R}$, and is used in the proof of Theorem \ref{the: supp U leq X+1}.

\begin{lemma}
\label{lem: K_R non empty compact convex}
    Given finite $M$, for any $R \geq 0$, the feasible encoder set $\mathcal{K}^{(M)}_{R}$ is non-empty, compact and convex.
\end{lemma}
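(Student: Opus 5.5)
We prove the three properties of $\mathcal{K}^{(M)}_{R}$ separately, viewing an encoder $K$ as a matrix in $\mathbb{R}^{M \times |\mathcal{X}|}$, i.e., as a point of the finite-dimensional space of $M \times |\mathcal{X}|$ real matrices.

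\emph{Non-emptiness.} Take the constant encoder $K(u|x) = \mathbf{1}\{u = 1\}$ for all $x$. Then $U$ is independent of $X$, so $I_{K}(U;X) = 0 \leq R$. Hence this encoder lies in $\mathcal{K}^{(M)}_{R}$ for every $R \geq 0$ and every $M \geq 1$.

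\emph{Compactness.} The set of row-stochastic matrices $\{K: K(u|x) \geq 0, \sum_{u} K(u|x) = 1\}$ is a product of $|\mathcal{X}|$ probability simplices in $\mathbb{R}^{M}$, so it is closed and bounded, hence compact. Since $P_{X}$ is fixed, the map $K \mapsto I_{K}(U;X)$ is continuous on this set. To see this, write
\[
I_{K}(U;X) = \sum_{x,u} P_{X}(x) K(u|x) \log \frac{K(u|x)}{\sum_{x'} P_{X}(x') K(u|x')}.
\]
Each term is a continuous function of $K$, with the convention $0 \log 0 = 0$. The only point needing care is where the denominator vanishes. There the numerator $K(u|x)$ also vanishes, because $P_{X}(x) > 0$ for all $x$, and we use the bound $t \log(t/m) \to 0$ with $t \leq m / P_{X}(x)$. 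Equivalently, $I = H(U) - H(U|X)$, and both entropies are continuous in $K$. Therefore $\{K : I_{K}(U;X) \leq R\}$ is closed, being the preimage of $(-\infty, R]$ under a continuous map. Intersecting it with the compact set of stochastic matrices gives compactness of $\mathcal{K}^{(M)}_{R}$.

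\emph{Convexity.} The stochastic constraints are linear, so they are preserved under convex combinations. For the rate constraint we invoke the standard fact that, for a fixed input distribution $P_{X}$, mutual information $I(U;X)$ is a convex function of the channel $P_{U|X}$ \cite{cover2006}. The mechanism is this: the joint distribution $P_{X} K$ and the product $P_{X} \otimes P^{K}_{U}$ are both linear in $K$, and $I_{K}(U;X) = \mathrm{D}(P_{X}K \,\|\, P_{X} \otimes P^{K}_{U})$, where relative entropy is jointly convex. Hence, if $K_1, K_2 \in \mathcal{K}^{(M)}_{R}$ and $\lambda \in [0,1]$, then
\[
I_{\lambda K_1 + (1-\lambda) K_2}(U;X) \leq \lambda I_{K_1}(U;X) + (1-\lambda) I_{K_2}(U;X) \leq R.
\]
So the sublevel set is convex.

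The only mildly delicate step is the continuity of $I_{K}$ at boundary points where some output probability $P^{K}_{U}(u) = 0$. This is handled by the entropy decomposition $I = H(U) - H(U|X)$ as above. Everything else is routine.
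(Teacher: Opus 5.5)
Your proof is correct and follows essentially the same route as the paper: the constant encoder for non-emptiness, closedness of the product of simplices together with continuity of $K \mapsto I_{K}(U;X)$ for compactness, and joint convexity of relative entropy applied to $\mathrm{D}(P_{X}K \,\|\, P_{X} \otimes P^{K}_{U})$ for convexity. The paper only asserts the continuity of $I_{K}$; your justification at boundary points where $P^{K}_{U}(u) = 0$ supplies that step.
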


\begin{proof}
    Appendix~\ref{proof: K_R non empty compact convex}.
\end{proof}

\begin{theorem}[Cardinality bound]
\label{the: supp U leq X+1}
For any $R \geq 0$,
\begin{equation}
\label{eq: sup = max X+1}
    \sup_{K \in \mathscr{K}_{R}} C(P^{K}_{U|0}, P^{K}_{U|1}) = \max_{K \in \mathcal{K}^{(|\mathcal{X}|+1)}_{R}} C(P^{K}_{U|0}, P^{K}_{U|1}),
\end{equation}
which indicates the cardinality of $U$ can be restricted to satisfy $|\mathcal{U}| \leq |\mathcal{X}| + 1$ without loss of generality. Moreover, the maximum in \eqref{eq: sup = max X+1} is achievable.
\end{theorem}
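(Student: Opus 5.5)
Our approach is a Carathéodory / support-lemma reduction that holds the parameter $s$ fixed. The key observation is that for each fixed $s$ the quantity $B_M(K,s)$ is a sum over outputs of a function of the posterior-type vector associated with $u$. For an encoder $K$ with output alphabet $\mathcal{U}_M$, write $P^K_U(u)$ and the backward channel $q_u(x) = P^K_{X|U}(x|u)$, so that $P_X = \sum_u P^K_U(u) q_u$. Then $P^K_{U|y}(u) = P^K_U(u)\sum_x q_u(x)P_{Y|X}(y|x)/P_Y(y)$, and by homogeneity
\[
B_M(K,s) = \sum_u P^K_U(u)\, \phi_s(q_u), \qquad \phi_s(q) \stackrel{\Delta}{=} \Big(\sum_x q(x)\tfrac{P_{X,Y}(x,0)}{P_Y(0)P_X(x)}\Big)^{s}\Big(\sum_x q(x)\tfrac{P_{X,Y}(x,1)}{P_Y(1)P_X(x)}\Big)^{1-s}.
\]
Similarly $I_K(U;X) = H(X) - \sum_u P^K_U(u) H(q_u)$. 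Both are therefore averages over $u$ of continuous functions of $q_u$ on the simplex $\Delta(\mathcal{X})$.

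\emph{Step 1 (reduction for arbitrary $M$).} Fix any $K \in \mathcal{K}^{(M)}_R$ and choose $\bar s \in \arg\min_s B_M(K,s)$. We consider the $|\mathcal{X}|+1$ real functionals of a probability measure $\mu$ on $\Delta(\mathcal{X})$ given by
\[
\int q(x)\,d\mu \ \ (|\mathcal{X}|-1 \text{ of them}), \qquad \int H(q)\,d\mu, \qquad \int \phi_{\bar s}(q)\,d\mu .
\]
By Carathéodory's theorem (the Fenchel--Eggleston strengthening is not needed), there is a measure $\mu'$ supported on at most $|\mathcal{X}|+1$ points of $\{q_u\}$ that matches all of these values. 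The resulting encoder $K'$, defined by $K'(u|x) = P_{U'}(u) q'_u(x)/P_X(x)$, reproduces $P_X$ and hence is a valid channel. It satisfies $I_{K'}(U;X) = I_K(U;X) \le R$ and $B(K',\bar s) = B(K,\bar s)$. Consequently
\[
\min_s B(K',s) \le B(K',\bar s) = \min_s B(K,s),
\]
that is, $\mathcal{C}(K') \ge \mathcal{C}(K)$. The point of freezing $\bar s$ is that the coupling through $s$ then only helps: we need not preserve the whole function $s\mapsto B$, only its value at the old minimizer. Hence the supremum over $\mathscr{K}_R$ equals the supremum over $\mathcal{K}^{(|\mathcal{X}|+1)}_R$, after relabeling outputs and padding with zero-probability outputs.

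\emph{Step 2 (attainment).} By Lemma~\ref{lem: K_R non empty compact convex}, $\mathcal{K}^{(|\mathcal{X}|+1)}_R$ is nonempty and compact. It remains to show that $K \mapsto \mathcal{C}(K)$ is upper semicontinuous there, i.e.\ that $K\mapsto \min_s B(K,s)$ is lower semicontinuous. Since $(K,s) \mapsto B(K,s)$ is jointly continuous on the compact set $\mathcal{K}\times[0,1]$ (with the convention $0^0=0$ one must check this at the boundary), the minimum over the compact interval $[0,1]$ is in fact continuous in $K$. Weierstrass then yields a maximizer.

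\emph{Main obstacle.} We expect the delicate point to be the continuity of $B$ at $s\in\{0,1\}$ when some $P_{U|y}(u)=0$. The convention $0^0=0$ makes $P^s$ discontinuous at $s=0$ and $P=0$. Under \eqref{eq: condition no instantly recognizable x}, however, $P^K_{U|0}(u)=0$ if and only if $P^K_{U|1}(u)=0$, so such terms vanish identically and we may drop them. On the remaining terms each factor is positive and continuous. If continuity in $K$ still fails where an output's mass tends to zero, we fall back on the bound $0 \le P_0^sP_1^{1-s} \le \max(P_0,P_1)$, which tends to zero; this gives joint continuity, and the remainder of the argument is routine.
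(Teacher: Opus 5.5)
\textbf{The gap: your Carath\'eodory count is off by one.} Step 1 fails on exactly the number the theorem is about. Besides normalization, you impose $|\mathcal{X}|+1$ constraints: the $|\mathcal{X}|-1$ marginals, $\int H\,d\mu$, and $\int \phi_{\bar s}\,d\mu$. So the points $F(q_u)=(q_u(1),\dots,q_u(|\mathcal{X}|-1),H(q_u),\phi_{\bar s}(q_u))$ lie in $\mathbb{R}^{|\mathcal{X}|+1}$. Carath\'eodory then only gives a convex combination of at most $|\mathcal{X}|+2$ of them. Equivalently, the weights satisfy $|\mathcal{X}|+2$ linear equalities, so a vertex of that polytope can have $|\mathcal{X}|+2$ positive entries.

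This count is sharp for your construction. Take $|\mathcal{X}|=2$ and $\bar s\in(0,1)$. The curve $q\mapsto(q(1),H(q),\phi_{\bar s}(q))$ lies in no plane: $H$ has unbounded slope at the endpoints, and $\phi_{\bar s}$ is not affine because $P_{X|0}\not\equiv P_{X|1}$. Hence an encoder with four posteriors in general position on this curve, all with positive weight, has a unique exact representation, and it uses all four points. As written, Step 1 therefore proves only $|\mathcal{U}|\le|\mathcal{X}|+2$. Your parenthetical dismissing Fenchel--Eggleston is precisely where the bound is lost. That strengthening also cannot be applied to the finite, disconnected set $\{q_u\}$.

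\textbf{Two ways to repair it.}

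\emph{Option 1: support lemma on the whole simplex.} Apply Fenchel--Eggleston to $F(\Delta(\mathcal{X}))$, which is connected because $F$ is continuous on the simplex ($\phi_{\bar s}$ is continuous since $P_{X|y}(x)>0$). You must then accept $|\mathcal{X}|+1$ new posteriors that need not be among the $q_u$. They still average to $P_X$ and give the same rate and the same $B(\cdot,\bar s)$.

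\emph{Option 2: the paper's route.} Push your own remark about freezing $\bar s$ one step further. With $\bar s$ fixed, $B(K,\bar s)=\sum_u P_U(u)\phi_{\bar s}(q_u)$ is linear in the weights, so you need not preserve it, only not increase it. The paper proceeds as follows:
\begin{enumerate}
\item When $m\ge|\mathcal{X}|+2$, take a nonzero null vector $c$ of the $m$ vectors $(1,q_u(1),\dots,q_u(|\mathcal{X}|-1),\mathrm{D}(q_u\|P_X))\in\mathbb{R}^{|\mathcal{X}|+1}$.
\item Move the weights along $\pm c$, choosing the sign that does not increase $\sum_u c_u\phi_{\bar s}(q_u)$, until some weight reaches zero.
\item Iterate until at most $|\mathcal{X}|+1$ outputs remain.
\end{enumerate}
Dropping the equality on $\phi_{\bar s}$ recovers exactly the missing dimension, using only elementary linear algebra.

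\textbf{Step 2 is fine.} Your joint-continuity argument uses $P^{K}_{U|0}(u)=0\iff P^{K}_{U|1}(u)=0$ together with $P_0^sP_1^{1-s}\le\max(P_0,P_1)$. It is correct, and more careful than the paper's one-line appeal to continuity.
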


\begin{proof}
Assume $K \in \mathscr{K}_{R}$ is an arbitrary feasible encoder with output alphabet $\mathcal{U}_{m}$. Let $B_{m}(K, s)$ be defined in \eqref{eq: definition BM(K,s)}, and choose
\begin{equation}
\label{eq: select parameter s_K}
    s_{K} \in \arg \min_{0 \leq s \leq 1} B_{m}(K, s).
\end{equation}
Then, the corresponding Chernoff information is
\[
\mathcal{C}(K) = - \log B_{m}(K, s_{K}).
\]

Assume $P_{U}^{K}(u) > 0$ for all $u \in \mathcal{U}_{m}$, otherwise, it could be directly removed and reduce $m$. For each $u \in \mathcal{U}_{m}$, define $\lambda_{u} \stackrel{\Delta}{=} P_{U}^{K}(u)$, $w_{u} \stackrel{\Delta}{=} P_{X|U=u}^{K} \in \Delta_{|\mathcal{X}|-1}$, and for $s \in [0,1]$,
\[
g_{s}(w) \stackrel{\Delta}{=} \Big( \frac{\sum_{x} w(x) P_{Y|X}(0|x)}{P_{Y}(0)} \Big)^{s} \Big(\frac{\sum_{x} w(x) P_{Y|X}(1|x)}{P_{Y}(1)} \Big)^{1-s}.
\]
Then,
\[
B_{m}(K, s) = \sum_{u=1}^{m} \lambda_{u} g_{s}(w_{u}).
\]
Construct vectors
\[
v_{u} = \big(1, w_{u}(1), \cdots, w_{u}(|\mathcal{X}|-1), \mathrm{D}(w_{u} \| P_{X}) \big) \in \mathbb{R}^{|\mathcal{X}|+1}, \quad \forall u \in \mathcal{U}_{m}.
\]

If $m \geq |\mathcal{X}| + 2$, then vectors $v_1, \cdots, v_{m}$ are linearly dependent, which means there exists $c_1, \cdots, c_{m}$ not all zeros such that $\sum_{u=1}^{m} c_{u} v_{u} = 0$. In this case, expand each coordinate gives
\[
\sum_{u=1}^{m} c_{u} = 0;
\]
\begin{equation}
\label{eq: sum cuwu(x) = 0}
    \sum_{u=1}^{m} c_{u} w_{u}(x) = 0, \quad x=1,\cdots,|\mathcal{X}|-1;
\end{equation}
\[
\sum_{u=1}^{m} c_{u} \mathrm{D}(w_{u} \| P_{X}) = 0.
\]
Moreover, since
\[
\sum_{u=1}^{m} c_{u} w_{u}(x_{|\mathcal{X}|}) = \sum_{u=1}^{m} c_{u} \big(1- \sum_{x=1}^{|\mathcal{X}|-1} w_{u}(x) \big) = \sum_{u=1}^{m} c_{u} - \sum_{x=1}^{|\mathcal{X}|-1} \sum_{u=1}^{m} c_{u} w_{u}(x) = 0,
\]
the equation \eqref{eq: sum cuwu(x) = 0} holds for all $x \in \mathcal{X}$.

For any $t \in \mathbb{R}$, define the disturbance weight
\[
\lambda_{u}(t) = \lambda_{u} + t c_{u}.
\]
Because $c_1, \cdots, c_{m}$ are not all zeros, and $\sum_{u=1}^{m} c_{u} = 0$, there must be both positive and negative numbers in the sequence $\{c_{u}\}_{u=1}^{m}$. Define
\[
t_{-} = \max_{u: c_{u} > 0} \Big( -\frac{\lambda_{u}}{c_{u}} \Big) < 0, \quad t_{+} = \min_{u: c_{u} < 0} \Big( -\frac{\lambda_{u}}{c_{u}} \Big) > 0.
\]
Then, for any $t \in (t_{-}, t_{+})$, all $\lambda_{u}(t)$ remains positive; when $t = t_{-}$ or $t_{+}$, at least one weight $\lambda_{u}(t)$ goes exactly to zero; for any $t \in [t_{-}, t_{+}]$, we have
\[
\sum_{u=1}^{m} \lambda_{u}(t) = \sum_{u=1}^{m} \lambda_{u} + t \sum_{u=1}^{m} c_{u} = 1;
\]
\[
\sum_{u=1}^{m} \lambda_{u}(t) w_{u}(x) = P_{X}(x) + t \sum_{u=1}^{m} c_{u} w_{u}(x) = P_{X}(x);
\]
\[
I_{t}(U;X) = \sum_{u=1}^{m} \lambda_{u}(t) D(w_{u} \| P_{X}) = I_{K}(U;X) + t \sum_{u=1}^{m} c_{u} D(w_{u} \| P_{X}) = I_{K}(U;X).
\]

Furthermore, for the $s_{K}$ selected in \eqref{eq: select parameter s_K}, define
\[
\hat{B}_{s_{K}}(t) \stackrel{\Delta}{=} \sum_{u=1}^{m} \lambda_{u}(t) g_{s_{K}}(w_{u}) = \hat{B}_{s_{K}}(0) + t \sum_{u=1}^{m} c_{u} g_{s_{K}}(w_{u}).
\]
If $\sum_{u=1}^{m} c_{u} g_{s_{K}}(w_{u}) \geq 0$, let $t = t_{-} < 0$; if $\sum_{u=1}^{m} c_{u} g_{s_{K}}(w_{u}) < 0$, let $t = t_{+} > 0$. Thus, we can always select $t_0 \in \{t_{-}, t_{+}\}$ such that
\[
\hat{B}_{s_{K}}(t_0) \leq \hat{B}_{s_{K}}(0) = B_{m}(K, s_{K}),
\]
and at least one $\lambda_{u}(t_0) = 0$. Define the updated weights $\tilde{\lambda}_{u} = \lambda_{u}(t_0), \forall u \in \mathcal{U}_{m}$, and removing those symbols with zero probability gives the updated alphabet $\mathcal{U}_{\tilde{m}}$ with $\tilde{m} \leq m - 1$. Define a new encoder $\tilde{K}$ on $\mathcal{U}_{\tilde{m}}$ as
\[
\tilde{K}(u|x) = \frac{\tilde{\lambda}_{u} w_{u}(x)}{P_{X}(x)}, \quad \forall u \in \mathcal{U}_{\tilde{m}}.
\]
It satisfies $I_{\tilde{K}}(U;X) = I_{K}(U;X) \leq R$, and $B_{\tilde{m}}(\tilde{K}, s_{K}) \leq B_{m}(K, s_{K})$. Hence,
\[
\mathcal{C}(\tilde{K}) = - \log \min_{0 \leq s \leq 1} B_{\tilde{m}}(\tilde{K}, s) \geq - \log B_{\tilde{m}}(\tilde{K}, s_{K}) \geq - \log B_{m}(K, s_{K}) = \mathcal{C}(K).
\]

Therefore, for any encoder $K$ on $\mathcal{U}_{m}$ with $m \geq |\mathcal{X}| + 2$, we can always construct a new encoder $\tilde{K}$, whose positive probability output symbol is reduced by at least one, and
\[
I_{\tilde{K}}(U;X) = I_{K}(U;X), \quad \mathcal{C}(\tilde{K}) \geq \mathcal{C}(K).
\]
Accordingly,
\[
\max_{K \in \mathcal{K}^{(|\mathcal{X}|+1)}_{R}} \mathcal{C}(K) \geq \max_{K \in \mathcal{K}^{(|\mathcal{X}|+2)}_{R}} \mathcal{C}(K) \geq \cdots \geq \max_{K \in \mathcal{K}^{(|\mathcal{X}|+n)}_{R}} \mathcal{C}(K), \quad n \to \infty.
\]
And since
\[
\mathscr{K}_{R} = \bigcup_{M \geq 1} \mathcal{K}^{(M)}_{R},
\]
we have
\[
\sup_{K \in \mathscr{K}_{R}} \mathcal{C}(K) = \max_{K \in \mathcal{K}^{(|\mathcal{X}|+1)}_{R}} \mathcal{C}(K).
\]

Furthermore, by Lemma \ref{lem: K_R non empty compact convex}, $\mathcal{K}^{(|\mathcal{X}|+1)}_{R}$ is non-empty and compact, and since $\mathcal{C}(K)$ is continuous with respect to $K$, the maximum in \eqref{eq: sup = max X+1} is attainable.
\end{proof}

Theorem \ref{the: supp U leq X+1} demonstrates the supremum in the entire feasible set $\mathscr{K}_{R}$ can be achieved in a feasible encoder set $\mathcal{K}^{(M)}_{R}$ with cardinality $M = |\mathcal{X}|+1$. In fact, the cardinality bound can be further sharpened, as Theorem \ref{the: sharpened cardinality} shows, while Lemma \ref{lem: reduction to V} is applied in the proof. Let $V$ be defined in \eqref{eq: definition V = nu(X)}, taking $k$ distinct values.

\begin{lemma}[Reduction to the sufficient statistic]
\label{lem: reduction to V}
    Consider an encoder $\bar{K} = \bar{K}_{U|V}$ of $V$ with output alphabet $\mathcal{U}_{M}$, denote by $P^{\bar{K}}_{U|y}(u) = \sum_{j=1}^{k} P_{V|y}(j) \bar{K}(u|j)$, $y \in \{0,1\}$, its induced conditional distributions, and by $I_{\bar{K}}(U;V)$ its mutual information under $P_{V}$. Then,
    \begin{enumerate}[label=(\roman*)]
        \item For every encoder $K = K_{U|X}$ with output alphabet $\mathcal{U}_{M}$, there exists an encoder $\bar{K}$ of $V$ with output alphabet $\mathcal{U}_{M}$ such that
        \begin{equation}
        \label{eq: reduction to V conclusion}
            P^{\bar{K}}_{U|y} = P^{K}_{U|y}, \ y \in \{0,1\}, \qquad I_{\bar{K}}(U;V) \leq I_{K}(U;X).
        \end{equation}
        \item For every encoder $\bar{K}$ of $V$ with output alphabet $\mathcal{U}_{M}$, the encoder $K(u|x) = \bar{K}(u|\nu(x))$ of $X$ satisfies
        \begin{equation}
        \label{eq: reduction to V converse}
            P^{K}_{U|y} = P^{\bar{K}}_{U|y}, \ y \in \{0,1\}, \qquad I_{K}(U;X) = I_{\bar{K}}(U;V).
        \end{equation}
    \end{enumerate}
\end{lemma}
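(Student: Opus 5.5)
For part (i), given $K=K_{U|X}$, I will define $\bar K$ by averaging $K$ over each likelihood-ratio class with respect to $P_X$:
\[
\bar K(u|j) \stackrel{\Delta}{=} \sum_{x \in \mathcal{X}_j} \frac{P_X(x)}{P_V(j)} K(u|x) = \sum_{x\in\mathcal{X}_j} P_{X|V}(x|j)\, K(u|x).
\]
This is a valid encoder with output alphabet $\mathcal{U}_M$. It is exactly the conditional law of $U$ given $V=j$ under the joint distribution $P_{X}K$, so $P_{U|V}^{K}=\bar K$.

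The first claim is that the induced conditionals coincide. For $x \in \mathcal{X}_j$ and $y\in\{0,1\}$, the conditional $P_{X|V,Y}(x|j,y) = P_{X|y}(x)/P_{V|y}(j)$ does not depend on $y$. This is because $P_{X|1}(x) = \rho_j P_{X|0}(x)$ on $\mathcal{X}_j$, hence $P_{V|1}(j)=\rho_j P_{V|0}(j)$ and the ratio cancels. It also equals $P_{X|V}(x|j)=P_X(x)/P_V(j)$. So $Y \leftrightarrow V \leftrightarrow X$ is a Markov chain, which is sufficiency of $V$ and already underlies Lemma~\ref{lem: KV preserves Bs}. Then
\[
P^K_{U|y}(u)=\sum_j P_{V|y}(j)\sum_{x\in\mathcal{X}_j} P_{X|V}(x|j)K(u|x) = \sum_j P_{V|y}(j)\bar K(u|j) = P^{\bar K}_{U|y}(u).
\]
The only care needed is to verify this cancellation explicitly. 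Because \eqref{eq: condition no instantly recognizable x} guarantees $P_{V|y}(j)>0$, no division by zero occurs.

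The second claim is the rate inequality. Since $V=\nu(X)$ is a function of $X$, the chain $V \leftrightarrow X \leftrightarrow U$ holds under $P_X K$. The data processing inequality then gives $I(U;V)\le I(U;X)$. Because $P_{U|V}=\bar K$ and the marginal of $V$ is $P_V$, the left side is precisely $I_{\bar K}(U;V)$. Equivalently, one can argue by convexity of $D(\cdot\|P_U)$: $P_U$ is the same under both encoders, and $I_{\bar K}(U;V)=\sum_j P_V(j) D(\bar K(\cdot|j)\|P_U)$, where each term is bounded by the $P_{X|V}$-average of $D(K(\cdot|x)\|P_U)$.

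For part (ii), set $K(u|x)=\bar K(u|\nu(x))$. Then
\[
P^K_{U|y}(u)=\sum_j \sum_{x\in\mathcal{X}_j}P_{X|y}(x)\bar K(u|j)=\sum_j P_{V|y}(j)\bar K(u|j).
\]
For the rate, $U$ depends on $X$ only through $V$, so $U \leftrightarrow V \leftrightarrow X$ holds. Combined with $V$ being a function of $X$, this gives $I(U;X)=I(U;V,X)=I(U;V)$. The marginals agree because $P_V$ is the pushforward of $P_X$.

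I expect no real obstacle. The one step needing attention is the $y$-independence of $P_{X|V,Y}$ in part (i), which is where the definition of the classes $\mathcal{X}_j$ by equal likelihood ratio is essential.
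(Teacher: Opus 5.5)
Your proposal is correct and follows the paper's proof essentially step for step. It uses the same class-averaged encoder $\bar K(u|j)=\sum_{x\in\mathcal{X}_j}\frac{P_X(x)}{P_V(j)}K(u|x)$, the same identity $P_{X|y}(x)=P_{V|y}(j)P_X(x)/P_V(j)$ derived from $P_{V|1}(j)=\rho_j P_{V|0}(j)$, and the same chain-rule argument for the rates, namely $I(U;V)\le I(U;X,V)=I(U;X)$ in (i) and $I(U;X)=I(U;V)+I(U;X|V)=I(U;V)$ in (ii); your alternative convexity argument for the rate inequality is also valid, but it is not needed.
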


\begin{proof}
    Appendix~\ref{proof: reduction to V}.
\end{proof}

\begin{theorem}[Sharpened cardinality bound]
\label{the: sharpened cardinality}
    For any $R \geq 0$,
    \begin{equation}
    \label{eq: sup = max k+1}
        \sup_{K \in \mathscr{K}_{R}} C(P^{K}_{U|0}, P^{K}_{U|1}) = \max_{K \in \mathcal{K}^{(k+1)}_{R}} C(P^{K}_{U|0}, P^{K}_{U|1}),
    \end{equation}
    and the maximum is attained by an encoder of the form $K(u|x) = \bar{K}(u|\nu(x))$ for some encoder $\bar{K} = \bar{K}_{U|V}$ of $V$ with output alphabet $\mathcal{U}_{k+1}$. Hence the cardinality of $U$ can be restricted to $|\mathcal{U}| \leq k+1$ without loss of generality.
\end{theorem}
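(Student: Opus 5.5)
The plan is to reduce the problem to an equivalent problem over encoders of $V$, apply the Carathéodory-type argument of Theorem~\ref{the: supp U leq X+1} with $V$ in place of $X$, and then lift the result back to $X$ using Lemma~\ref{lem: reduction to V}(ii).

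\textbf{Reduction to $V$.} Define the auxiliary value
\[
\bar{C}(R) = \sup\big\{ C(P^{\bar{K}}_{U|0}, P^{\bar{K}}_{U|1}) : \bar{K} = \bar{K}_{U|V} \text{ with finite output alphabet},\ I_{\bar{K}}(U;V) \le R \big\}.
\]
Take any $K \in \mathscr{K}_R$. Lemma~\ref{lem: reduction to V}(i) gives an encoder $\bar{K}$ of $V$ with the same induced conditionals $P_{U|y}$, so the same Chernoff information, and with $I_{\bar{K}}(U;V) \le I_K(U;X) \le R$. Hence $C(R) \le \bar{C}(R)$.

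Conversely, take any feasible $\bar{K}$ for $\bar{C}(R)$. Lemma~\ref{lem: reduction to V}(ii) says that $K(u|x) = \bar{K}(u|\nu(x))$ has the same induced conditionals and $I_K(U;X) = I_{\bar{K}}(U;V) \le R$. Hence $\bar{C}(R) \le C(R)$, and so $C(R) = \bar{C}(R)$.

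\textbf{Cardinality bound for $V$.} Next I show that $\bar{C}(R)$ is attained by some $\bar{K}$ with output alphabet $\mathcal{U}_{k+1}$. I will repeat the proof of Theorem~\ref{the: supp U leq X+1} verbatim, with $(V, P_V, P_{Y|V})$ in place of $(X, P_X, P_{Y|X})$. The source $V$ is a legitimate finite source for this purpose:
\begin{itemize}
\item $P_V(j) > 0$ for every $j$;
\item $P_{V|y}(j) = \sum_{x \in \mathcal{X}_j} P_{X|y}(x) > 0$ for $y \in \{0,1\}$;
\item $B(\bar{K}, s) = \sum_u \lambda_u g_s(w_u)$, where $w_u = P_{V|U=u}$ and $g_s$ is now defined through $P_{Y|V}$.
\end{itemize}
Fix $s_{\bar{K}}$ as a minimizer of $B(\bar{K}, \cdot)$ over $[0,1]$. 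The vectors $\big(1, w_u(1), \dots, w_u(k-1), \mathrm{D}(w_u\|P_V)\big)$ lie in $\mathbb{R}^{k+1}$.

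If the output alphabet has $m \ge k+2$ symbols, these vectors are linearly dependent. Perturbing the weights $\lambda_u$ along the dependence, as in that proof, then has three effects:
\begin{itemize}
\item it removes at least one output symbol;
\item it preserves $P_V$ and $I(U;V)$;
\item it does not increase $B(\cdot, s_{\bar{K}})$, so it does not decrease the Chernoff information.
\end{itemize}
Iterating until $m \le k+1$ shows that the supremum may be restricted to $\bar{\mathcal{K}}^{(k+1)}_R$, the set of encoders of $V$ with output alphabet $\mathcal{U}_{k+1}$ and $I(U;V) \le R$.

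\textbf{Attainment.} The set $\bar{\mathcal{K}}^{(k+1)}_R$ is nonempty, compact and convex, by the same argument as Lemma~\ref{lem: K_R non empty compact convex}. The Chernoff information is continuous in $\bar{K}$, because $B$ is jointly continuous in $(\bar{K}, s)$ on a compact $s$-interval. So a maximizer $\bar{K}^*$ exists.

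\textbf{Lifting back to $X$.} Set $K^*(u|x) = \bar{K}^*(u|\nu(x))$. By Lemma~\ref{lem: reduction to V}(ii), $K^* \in \mathcal{K}^{(k+1)}_R$ and $C(P^{K^*}_{U|0}, P^{K^*}_{U|1}) = \bar{C}(R) = C(R)$. Since $\mathcal{K}^{(k+1)}_R \subseteq \mathscr{K}_R$, the maximum over $\mathcal{K}^{(k+1)}_R$ equals the supremum, which proves \eqref{eq: sup = max k+1} together with the stated form of the optimizer.

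\textbf{Main obstacle.} The only real subtlety is checking that the proof of Theorem~\ref{the: supp U leq X+1} uses nothing specific to $X$ beyond it being a finite source with positive marginals. The proof uses $P_X$ only through $\sum_u \lambda_u w_u = P_X$ and $I = \sum_u \lambda_u \mathrm{D}(w_u \| P_X)$, and both identities hold verbatim for $V$. Positivity of $P_V$ and $P_{V|y}$ is immediate from positivity of $P_X$ and \eqref{eq: condition no instantly recognizable x}. Nondegeneracy carries over as well, since $P_{V|0} \not\equiv P_{V|1}$ because $k \ge 2$. Lemma~\ref{lem: reduction to V} itself may be used as given.
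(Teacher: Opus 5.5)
Your proposal is correct and follows essentially the same route as the paper. You equate $C(R)$ with the curve of the reduced problem for $V$ using both parts of Lemma~\ref{lem: reduction to V}, run the argument of Theorem~\ref{the: supp U leq X+1} on the pair $(V,Y)$ after checking positivity and nondegeneracy, and lift the optimal $\bar{K}^{*}$ back through $\nu$. The only difference is presentational: you spell out the perturbation and compactness steps for $V$, whereas the paper simply invokes Theorem~\ref{the: supp U leq X+1} verbatim for $(V,Y)$.
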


\begin{proof}
    Consider the pair $(V,Y)$ with joint distribution
    \[
    P_{V,Y}(j,y) = \sum_{x \in \mathcal{X}_{j}} P_{X,Y}(x,y), \quad (j,y) \in \{1, \cdots, k\} \times \{0,1\}.
    \]
    The marginals are $P_{Y}(y) > 0$ and $P_{V}(j) = \sum_{x \in \mathcal{X}_{j}} P_{X}(x) > 0$, and by \eqref{eq: condition no instantly recognizable x}, $P_{V|y}(j) = P_{X|y}(\mathcal{X}_{j}) > 0$ for all $j$ and $y$. Moreover, summing $P_{X|1}(x) = \rho_{j} P_{X|0}(x)$ over $\mathcal{X}_{j}$ derives $P_{V|1}(j) = \rho_{j} P_{V|0}(j)$. Thus, $P_{X|0} \not\equiv P_{X|1}$ indicates $P_{V|0} \not\equiv P_{V|1}$.

    With the notation of Lemma \ref{lem: reduction to V}, and similar to the definition of $\mathscr{K}_{R}$ in \eqref{eq: union feasible encoder set}, denote by the entire feasible set from $V$ to $U$ as $\bar{\mathscr{K}}_{R}$, and define the curve of the reduced problem as
    \begin{equation}
    \label{eq: definition C_V(R)}
    C_{V}(R) \stackrel{\Delta}{=} \sup_{\bar{K} \in \bar{\mathscr{K}}_{R}} C(P^{\bar{K}}_{U|0}, P^{\bar{K}}_{U|1}).
    \end{equation}
    Since the proof of Theorem \ref{the: supp U leq X+1} uses only the standing assumptions of Section \ref{sec: Setup}, it applies verbatim to the pair $(V,Y)$, whose alphabet has cardinality $k$. Therefore, there exists an encoder $\bar{K}^{*}$ of $V$ with output alphabet $\mathcal{U}_{k+1}$ such that
    \begin{equation}
    \label{eq: reduced problem optimum}
        I_{\bar{K}^{*}}(U;V) \leq R, \quad C \big(P^{\bar{K}^{*}}_{U|0}, P^{\bar{K}^{*}}_{U|1} \big) = C_{V}(R).
    \end{equation}

    We now show that $C(R) = C_{V}(R)$. Let $K \in \mathscr{K}_{R}$ be arbitrary, with output alphabet $\mathcal{U}_{M}$, and let $\bar{K}$ be defined by Lemma \ref{lem: reduction to V}(i). By \eqref{eq: reduction to V conclusion}, $I_{\bar{K}}(U;V) \leq I_{K}(U;X) \leq R$ and $P^{\bar{K}}_{U|y} = P^{K}_{U|y}$ for $y \in \{0,1\}$. Hence
    \[
    \mathcal{C}(K) = C \big(P^{K}_{U|0}, P^{K}_{U|1} \big) = C \big(P^{\bar{K}}_{U|0}, P^{\bar{K}}_{U|1} \big) \leq C_{V}(R),
    \]
    and taking the supremum over $K \in \mathscr{K}_{R}$ gives $C(R) \leq C_{V}(R)$. Conversely, let $K^{*}(u|x) = \bar{K}^{*}(u|\nu(x))$. By \eqref{eq: reduction to V converse} and \eqref{eq: reduced problem optimum}, $K^{*}$ has output alphabet $\mathcal{U}_{k+1}$, $I_{K^{*}}(U;X) = I_{\bar{K}^{*}}(U;V) \leq R$, and $P^{K^{*}}_{U|y} = P^{\bar{K}^{*}}_{U|y}$ for $y \in \{0,1\}$, so that
    \[
    K^{*} \in \mathcal{K}^{(k+1)}_{R}, \quad \mathcal{C}(K^{*}) = C_{V}(R) \geq C(R).
    \]

    Combining the two directions, and using $\mathcal{K}^{(k+1)}_{R} \subseteq \mathscr{K}_{R}$,
    \[
    C(R) \leq \mathcal{C}(K^{*}) \leq \max_{K \in \mathcal{K}^{(k+1)}_{R}} \mathcal{C}(K) \leq \sup_{K \in \mathscr{K}_{R}} \mathcal{C}(K) = C(R).
    \]
    Hence all inequalities hold with equality, and the supremum over $\mathcal{K}^{(k+1)}_{R}$ is attained by $K^{*}$, which proves \eqref{eq: sup = max k+1} with a maximizer of the form $K^{*}(u|x) = \bar{K}^{*}(u|\nu(x))$.
\end{proof}

\begin{remark}
    Theorem \ref{the: sharpened cardinality} strictly improves Theorem \ref{the: supp U leq X+1} when $k < |\mathcal{X}|$, i.e., at least two symbols share the same likelihood ratio. Lemma \ref{lem: reduction to V} also shows that the curve $C(R)$ depends on the source only through the joint distribution $P_{V,Y}$: compressing $X$ is never better than compressing its minimal sufficient statistic, at any rate.
\end{remark}

Therefore, we can reformulate the MI constrained Chernoff bottleneck as follows:

\begin{problem}
\label{prob: reformulated problem}
Given $R \geq 0$ and $\mathcal{U}_{M} = \{1,\cdots,M\}$, seek
\begin{equation}
\label{eq: problem 2}
    K^{*} \in \arg \max_{K \in \mathcal{K}^{(M)}_{R}} C(P^{K}_{U|0}, P^{K}_{U|1}).
\end{equation}
\end{problem}

Theorem \ref{the: sharpened cardinality} implies that we can recover the optimal solution of Problem \ref{prob: initial problem} by setting $M \geq k + 1$ in Problem \ref{prob: reformulated problem}, for which we establish an alternating optimization algorithm in Section \ref{sec: An Iterative Algorithm}.

\section{An Alternating Optimization Algorithm}\label{sec: An Iterative Algorithm}

In this section, we fix the alphabet $\mathcal{U}_{M} = \{1,\cdots,M\}$. With $B_{M}(K, s)$ defined in \eqref{eq: definition BM(K,s)}, the maximization in \eqref{eq: problem 2} can be rewritten as
\[
\min_{K \in \mathcal{K}^{(M)}_{R}}  \min_{0 \leq s \leq 1} \ B_{M}(K, s).
\]
We establish an iterative algorithm for Problem \ref{prob: reformulated problem} by alternately minimizing over $K$ and $s$:
\begin{enumerate}[label=(\alph*)]
    \item\label{item: algorithm step 1} Fix $s$, update $K$ with
    \begin{equation}
    \label{eq: fix s update K with}
        K \in \arg \min_{K \in \mathcal{K}^{(M)}_{R}} \ B_{M}(K, s);
    \end{equation}
    \item\label{item: algorithm step 2} Fix $K$, update $s$ with
    \begin{equation}
    \label{eq: fix K update s with}
        s = \arg \min_{0 \leq s \leq 1} \ B_{M}(K, s).
    \end{equation}
\end{enumerate}

\subsection{Step \ref{item: algorithm step 1}: Generalized Blahut-Arimoto Algorithm}

In this step, we fix $s$ and update $K$. For fixed $s \in (0,1)$, and any $w(u) > 0$, Young's inequality (Fact \ref{fact: Young's inequality}) gives
\[
P^{K}_{U|0}(u)^{s} P^{K}_{U|1}(u)^{1-s} \leq s w(u) P^{K}_{U|0}(u) + (1-s) w(u)^{-\frac{s}{1-s}} P^{K}_{U|1}(u),
\]
and equality holds when $w(u) P^{K}_{U|0}(u) = w(u)^{-\frac{s}{1-s}} P^{K}_{U|1}(u)$. Namely,
\[
P^{K}_{U|0}(u)^{s} P^{K}_{U|1}(u)^{1-s} = \min_{w(u) > 0} \big(s w(u) P^{K}_{U|0}(u) + (1-s) w(u)^{-\frac{s}{1-s}} P^{K}_{U|1}(u) \big),
\]
with the minimum achieved by
\begin{equation}
\label{eq: w(u) achieve minimum}
    w(u) = \Big(\frac{P^{K}_{U|1}(u)}{P^{K}_{U|0}(u)} \Big)^{1-s}.
\end{equation}

\begin{lemma}
\label{lem: uniform bound of w(u)}
    Let $\mathcal{K}^{(M)}_{R,+}$ be defined in \eqref{eq: positive feasible encoder set K(M)R+}, and $\ell$ and $L$ be defined in \eqref{eq: definition of ell and L}. If $K_{U|X} \in \mathcal{K}^{(M)}_{R,+}$, then for any $u \in \mathcal{U}_{M}$,
    \[
    0 < \ell \leq \frac{P^{K}_{U|1}(u)}{P^{K}_{U|0}(u)} \leq L < +\infty.
    \]
    Hence, $\ell^{1-s} \leq w(u) \leq L^{1-s}$.
\end{lemma}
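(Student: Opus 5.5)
The plan is to write each induced probability as a positive combination over $x$ and then apply the elementary mediant inequality. For every $u \in \mathcal{U}_{M}$ and $y \in \{0,1\}$ we have
\[
P^{K}_{U|y}(u) = \sum_{x \in \mathcal{X}} P_{X|y}(x) K(u|x).
\]
Since $K \in \mathcal{K}^{(M)}_{R,+}$, every $K(u|x) > 0$. By \eqref{eq: condition no instantly recognizable x}, every $P_{X|0}(x) > 0$. Hence $P^{K}_{U|0}(u) > 0$, and the ratio $P^{K}_{U|1}(u)/P^{K}_{U|0}(u)$ is well defined.

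Next, set $a_x = P_{X|0}(x) K(u|x) > 0$. Then $P_{X|1}(x) K(u|x) = \rho(x) a_x$, so
\[
\frac{P^{K}_{U|1}(u)}{P^{K}_{U|0}(u)} = \frac{\sum_{x} \rho(x) a_x}{\sum_{x} a_x}.
\]
This is a convex combination of the values $\rho(x)$ with weights $a_x/\sum_{x'} a_{x'}$. It therefore lies in $[\min_x \rho(x), \max_x \rho(x)] = [\ell, L]$, with $\ell$ and $L$ as in \eqref{eq: definition of ell and L}. Moreover, $\ell > 0$ because every $P_{X|1}(x) > 0$, and $L < \infty$ because every $P_{X|0}(x) > 0$.

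Finally, the definition \eqref{eq: w(u) achieve minimum} gives $w(u) = \big(P^{K}_{U|1}(u)/P^{K}_{U|0}(u)\big)^{1-s}$. For $s \in (0,1)$ the map $t \mapsto t^{1-s}$ is increasing on $(0,\infty)$, so applying it to the ratio bound yields $\ell^{1-s} \leq w(u) \leq L^{1-s}$. For $s = 1$ both sides equal $1$ and the claim holds trivially.

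None of these steps is a real obstacle. The only point that needs care is checking that the denominators are strictly positive, and this is exactly what the positivity of $K$ together with \eqref{eq: condition no instantly recognizable x} provides. Note that positivity of $K(u|x)$ for a single $x$ would already suffice here. The rate constraint $I_{K}(U;X) \leq R$ plays no role in the argument.
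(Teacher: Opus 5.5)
Your proof is correct and matches the paper's: both write the ratio as $\sum_x K(u|x)P_{X|1}(x) \big/ \sum_x K(u|x)P_{X|0}(x)$ and bound it with the weighted average (mediant) inequality, Fact~\ref{fact: Weighted average inequality}, using the positivity that $K \in \mathcal{K}^{(M)}_{R,+}$ and \eqref{eq: condition no instantly recognizable x} guarantee. You also spell out the monotonicity of $t \mapsto t^{1-s}$ needed for the bound on $w(u)$, a step the paper leaves implicit.
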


\begin{proof}
    Appendix~\ref{proof: uniform bound of w(u)}.
\end{proof}

\begin{remark}
    Lemma \ref{lem: uniform bound of w(u)} gives a uniform bound of $w(u)$. The condition $K_{U|X} \in \mathcal{K}^{(M)}_{R,+}$ holds for every iterate of Algorithm \ref{alg: Algorithm for reformulated problem} by Theorem \ref{the: convergence of algorithm_reformulated problem}.
\end{remark}

Thus, the minimization in \eqref{eq: fix s update K with} becomes
\begin{equation}
\label{eq: minimizing K s w}
    \min_{K \in \mathcal{K}^{(M)}_{R}} \min_{w(u) > 0} \ \sum_{u \in \mathcal{U}_{M}} s w(u) P^{K}_{U|0}(u) + (1-s) w(u)^{-\frac{s}{1-s}} P^{K}_{U|1}(u).
\end{equation}
For $y \in \{0,1\}$, Markovity gives
\[
P^{K}_{U|y}(u) = \sum_{x \in \mathcal{X}} K(u|x) P_{X|y}(x).
\]
Define the parameter matrix $C^{s} \in \mathbb{R}_{+}^{M \times |\mathcal{X}|}$ with entries
\begin{equation}
\label{eq: parameter matrix Cs}
    C^{s}_{u,x} = s w(u) P_{X|0}(x) + (1-s) w(u)^{-\frac{s}{1-s}} P_{X|1}(x),
\end{equation}
where $w(u)$ is given by \eqref{eq: w(u) achieve minimum} at the current encoder. With $w(u)$ fixed, the objective in \eqref{eq: minimizing K s w} is linear in $K$, and the update of $K$ solves
\begin{equation}
\label{eq: sum u,x C K}
    \min_{K \in \mathcal{K}^{(M)}_{R}} \ \sum_{(u,x) \in \mathcal{U}_{M} \times \mathcal{X}} C^{s}_{u,x} K(u|x).
\end{equation}

\begin{remark}[Concavity and the linear majorizer]
\label{rem: concavity of B}
    For fixed $s \in [0,1]$, $B_{M}(K,s)$ is concave in $K$, since $(a,b) \mapsto a^{s} b^{1-s}$ is concave on $\mathbb{R}_{+}^{2}$ and $P^{K}_{U|y}$ is linear in $K$. Consequently, maximizing $\mathcal{C}(K)$ amounts to minimizing the concave function $\min_{s \in [0,1]} B_{M}(K,s)$ over the convex set $\mathcal{K}^{(M)}_{R}$, for which global optimality cannot be expected in general. Moreover, let $w(u)$ and $C^{s}_{u,x}$ be computed from a current encoder $\tilde{K} \in \mathcal{K}^{(M)}_{R,+}$ via \eqref{eq: w(u) achieve minimum} and \eqref{eq: parameter matrix Cs}. For $s \in (0,1)$, direct differentiation gives
    \[
    \frac{\partial B_{M}(K,s)}{\partial K(u|x)} \bigg|_{K = \tilde{K}} = s w(u) P_{X|0}(x) + (1-s) w(u)^{-\frac{s}{1-s}} P_{X|1}(x) = C^{s}_{u,x}.
    \]
    Since $B_{M}(\cdot,s)$ is positively homogeneous of degree one, Euler's identity gives
    \[
    B_{M}(\tilde{K},s) = \sum_{u,x} C^{s}_{u,x} \tilde{K}(u|x).
    \]
    Hence the objective in \eqref{eq: sum u,x C K} is exactly the tangent plane of $B_{M}(\cdot,s)$ at $\tilde{K}$: by concavity, it upper bounds $B_{M}(K,s)$ for every $K$, with equality at $K = \tilde{K}$. Step \ref{item: algorithm step 1} is therefore a successive linearization of a concave function.
\end{remark}

Write the Lagrangian function as
\[
\mathcal{L}(K, \lambda, \mu) = \sum_{u,x} C^{s}_{u,x} K(u|x) + \lambda \big(I_{K}(U;X) - R \big) + \sum_{x} \mu(x) \big(\sum_{u} K(u|x) - 1 \big),
\]
with the trade-off parameter $\lambda \geq 0$. Taking derivative over $K$ gives
\[
\frac{\partial \mathcal{L}}{\partial K} = C^{s}_{u,x} + \lambda P_{X}(x) \log \frac{K(u|x)}{P_{U}(u)} + \mu(x).
\]
Let $\frac{\partial \mathcal{L}}{\partial K} = 0$ and normalize, we have
\begin{equation}
\label{eq: fix PU update KU|X}
    K(u|x) = \frac{P_{U}(u) \text{exp}\Big( -\dfrac{C^{s}_{u,x}}{\lambda P_{X}(x)} \Big)}{ \sum_{u'} P_{U}(u') \text{exp}\Big( -\dfrac{C^{s}_{u',x}}{\lambda P_{X}(x)} \Big)}.
\end{equation}
And for fixed $K(u|x)$, the output distribution $P_{U}(u)$ is given by $P_{U}(u) = \sum_{x} K(u|x) P_{X}(x)$. Thus, we have established the Generalized Blahut-Arimoto (GBA) algorithm for \eqref{eq: sum u,x C K}.

\begin{algorithm}[H]
\caption{Generalized Blahut-Arimoto Algorithm for \eqref{eq: sum u,x C K}}
\label{alg: GBA for K_U|X}
\begin{algorithmic}[1]
\Require $P_{X}$; parameter matrix $C^{s}_{u,x}$; current $K_{U|X}$; trade-off $\lambda > 0$
\Ensure converged encoder $\hat{K}_{U|X}$
\State $K_{U|X}^{(0)} \leftarrow K_{U|X}$; \ $P_{U}^{(0)}(u) \leftarrow \sum_{x} K_{U|X}^{(0)}(u|x) P_{X}(x)$; \ $t \leftarrow 0$
\Repeat
    \State Given $P_{U}^{(t)}$, update $K_{U|X}^{(t+1)}$:
    \[
    K^{(t+1)}(u|x) \leftarrow \frac{P_{U}^{(t)}(u) \exp\Big( -\dfrac{C^{s}_{u,x}}{\lambda P_{X}(x)} \Big)}{ \sum_{u'} P_{U}^{(t)}(u') \exp\Big( -\dfrac{C^{s}_{u',x}}{\lambda P_{X}(x)} \Big)}
    \]

    \State Given $K_{U|X}^{(t+1)}$, update $P_{U}^{(t+1)}$:
    \[
    P_{U}^{(t+1)}(u) \leftarrow \sum_{x} K^{(t+1)}(u|x) P_{X}(x)
    \]
    
    \State $t \leftarrow t+1$
\Until{ $\max_{u \in \mathcal{U}_{M}} \big( P^{(t)}_{U}(u) - P^{(t-1)}_{U}(u) \big) \big/ P^{(t-1)}_{U}(u)$ very small} \Comment{as stated in Rem.\ \ref{rem: stopping criterion GBA}}
\State \Return $\hat{K}_{U|X} = K_{U|X}^{(t)}$
\end{algorithmic}
\end{algorithm}

\begin{remark}
\label{rem: stopping criterion GBA}
    The stopping criterion is that of the Blahut--Arimoto algorithm \cite{blahut1972,arimoto1972}, and it comes with an optimality certificate. Let $\epsilon_{t} = \max_{u} \big( P^{(t)}_{U}(u) - P^{(t-1)}_{U}(u) \big) / P^{(t-1)}_{U}(u) \geq 0$ be the quantity checked at termination. By Corollary \ref{cor: GBA stopping certificate}, the output $\hat{K}_{U|X} = K^{(t)}_{U|X}$ satisfies
    \[
    \mathcal{L}_{\lambda}(\hat{K}) - \min_{K} \mathcal{L}_{\lambda}(K) \leq \lambda \log(1 + \epsilon_{t}) \leq \lambda \epsilon_{t},
    \]
    where $\mathcal{L}_{\lambda}$ is defined in Section \ref{sec: analysis}.
\end{remark}

\subsection{Step \ref{item: algorithm step 2}: Solving a Nonlinear Equation}

In this step, we fix $K$ and update $s$. We consider the encoder $K \in \mathcal{K}^{(M)}_{R,+}$, which holds for every iterate of Algorithm \ref{alg: Algorithm for reformulated problem} by Theorem \ref{the: convergence of algorithm_reformulated problem}.

For fixed $\mathcal{U}_{M}$ and $K_{U|X} \in \mathcal{K}^{(M)}_{R,+}$, define
\begin{equation}
\label{eq: definition of g_{K}(s)}
    g_{K}(s) \stackrel{\Delta}{=} \sum_{u \in \mathcal{U}_{M}} P^{K}_{U|0}(u)^{s} P^{K}_{U|1}(u)^{1-s}, \quad s \in [0,1]
\end{equation}

\begin{lemma}
\label{lem: gKs continuous and infinitely differentiable}
    For $K_{U|X} \in \mathcal{K}^{(M)}_{R,+}$, $g_{K}(s)$ is continuous and infinitely differentiable on $[0,1]$.
\end{lemma}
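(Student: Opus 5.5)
The key step is to show that every $P^{K}_{U|y}(u)$ is strictly positive whenever $K \in \mathcal{K}^{(M)}_{R,+}$. Once that is in hand, each summand of $g_K$ is a smooth function of $s$ on a neighborhood of $[0,1]$.

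For positivity, fix $u$ and $y \in \{0,1\}$. By Markovity,
\[
P^{K}_{U|y}(u) = \sum_{x \in \mathcal{X}} K(u|x) P_{X|y}(x).
\]
Here $K(u|x) > 0$ for all $(u,x)$ by the definition of $\mathcal{K}^{(M)}_{R,+}$. Also $P_{X|y}(x) > 0$ for all $x$ by the no-instantly-recognizable assumption \eqref{eq: condition no instantly recognizable x}. Hence $a_u \stackrel{\Delta}{=} P^{K}_{U|0}(u) > 0$ and $b_u \stackrel{\Delta}{=} P^{K}_{U|1}(u) > 0$. In particular the convention $0^0 = 0$ never enters.

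Now rewrite each summand as
\[
a_u^{s} b_u^{1-s} = b_u \exp\big(s \log(a_u/b_u)\big).
\]
This holds for every real $s$, because $a_u, b_u > 0$ make $\log(a_u/b_u)$ a finite real number. The map $s \mapsto b_u e^{s c_u}$, with $c_u = \log(a_u/b_u)$, is an entire function of $s$. Its derivatives are explicit: $\frac{d^n}{ds^n} = c_u^n b_u e^{s c_u}$. The function $g_K$ is a finite sum of $M$ such terms. It is therefore the restriction to $[0,1]$ of a $C^\infty$ function on $\mathbb{R}$, which gives continuity and infinite differentiability on $[0,1]$, including one-sided derivatives at the endpoints. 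As a byproduct I would record
\[
g_K''(s) = \sum_u c_u^2 b_u e^{s c_u} \geq 0,
\]
so $g_K$ is convex. This will likely be useful for the subsequent $s$-update step.

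There is no real obstacle here. The only point needing care is the positivity of the induced conditionals. Without strict positivity of $K$, a zero entry would give a term $0^{s}$ that jumps at $s = 0$ under the stated convention, and smoothness would fail. So the whole argument rests on $K \in \mathcal{K}^{(M)}_{R,+}$ together with \eqref{eq: condition no instantly recognizable x}. If one wants a quantitative version, Lemma \ref{lem: uniform bound of w(u)} gives $|c_u| \leq \log \Gamma$, so all derivatives are uniformly bounded: $|g_K^{(n)}(s)| \leq (\log\Gamma)^n \max\{1,\Gamma\}$.
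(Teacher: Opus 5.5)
Your argument is correct and is essentially the paper's proof: you get strict positivity of $P^{K}_{U|0}(u)$ and $P^{K}_{U|1}(u)$ via Markovity, then write each summand as the exponential of an affine function of $s$ and sum finitely many smooth terms. One side remark is inaccurate: under \eqref{eq: condition no instantly recognizable x}, $P^{K}_{U|0}(u)=0$ holds iff $P^{K}_{U|1}(u)=0$ iff $K(u|\cdot)\equiv 0$, and then the summand is $0^{s}0^{1-s}=0$ for all $s\in[0,1]$, so a non-strictly-positive $K$ would not produce a jump, and strict positivity is a convenience here rather than a necessity.
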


\begin{proof}
$K_{U|X} \in \mathcal{K}^{(M)}_{R,+}$ and the Markovity guarantees for any $u \in \mathcal{U}_{M}$, $P^{K}_{U|0}(u) > 0, P^{K}_{U|1}(u) > 0$. Since $g_{K}(s)$ is a summation of finite terms, considering $h_{K}^{u}(s) \stackrel{\Delta}{=} P^{K}_{U|0}(u)^{s} P^{K}_{U|1}(u)^{1-s}$, we have
\[
h_{K}^{u}(s) = \exp \big(s \log P^{K}_{U|0}(u) + (1-s) \log P^{K}_{U|1}(u) \big),
\]
which is certainly continuous and infinitely differentiable.
\end{proof}

Calculate the derivatives
\begin{equation}
\label{eq: derivative g_K'(s)}
    g_{K}'(s) = \sum_{u \in \mathcal{U}_{M}} P^{K}_{U|0}(u)^{s} P^{K}_{U|1}(u)^{1-s} \log \frac{P^{K}_{U|0}(u)}{P^{K}_{U|1}(u)},
\end{equation}
and
\begin{equation}
\label{eq: derivative g_K''(s)}
    g_{K}''(s) = \sum_{u \in \mathcal{U}_{M}} P^{K}_{U|0}(u)^{s} P^{K}_{U|1}(u)^{1-s} \Big(\log \frac{P^{K}_{U|0}(u)}{P^{K}_{U|1}(u)} \Big)^2.
\end{equation}
Thus, $g_{K}''(s) \geq 0$ on $[0,1]$. Moreover, if $P^{K}_{U|0} \not\equiv P^{K}_{U|1}$ on $\mathcal{U}_{M}$, we have $g_{K}''(s) > 0$ on $[0,1]$, i.e., $g_{K}'(s)$ is monotonically increasing on $[0,1]$, and
\begin{equation}
\label{eq: g_K'(0) < 0}
    g_{K}'(0) = \sum_{u \in \mathcal{U}_{M}} P^{K}_{U|1}(u) \log \frac{P^{K}_{U|0}(u)}{P^{K}_{U|1}(u)} = - \mathrm{D}(P^{K}_{U|1} \| P^{K}_{U|0}) < 0,
\end{equation}
\begin{equation}
\label{eq: g_K'(1) > 0}
    g_{K}'(1) = \sum_{u \in \mathcal{U}_{M}} P^{K}_{U|0}(u) \log \frac{P^{K}_{U|0}(u)}{P^{K}_{U|1}(u)} = \mathrm{D}(P^{K}_{U|0} \| P^{K}_{U|1}) > 0.
\end{equation}
Hence, we have the following lemma.

\begin{lemma}
\label{lem: unique S^* achieve minimum}
    For $K_{U|X} \in \mathcal{K}^{(M)}_{R,+}$, $P^{K}_{U|0} \not\equiv P^{K}_{U|1}$, there exists the unique $s^{*} \in (0,1)$ such that
    \[
    g_{K}(s^{*}) = \min_{s \in [0,1]} g_{K}(s).
    \]
\end{lemma}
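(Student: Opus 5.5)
The plan is to use strict convexity of $g_{K}$ together with the sign of $g_{K}'$ at the endpoints, all of which has just been established in \eqref{eq: derivative g_K'(s)}--\eqref{eq: g_K'(1) > 0}.

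\emph{Existence.} By Lemma~\ref{lem: gKs continuous and infinitely differentiable}, $g_{K}$ is continuous on the compact interval $[0,1]$, so a minimizer exists.

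\emph{Interiority.} Since $K \in \mathcal{K}^{(M)}_{R,+}$, every $P^{K}_{U|y}(u)$ is strictly positive. Together with $P^{K}_{U|0} \not\equiv P^{K}_{U|1}$, this makes the divergences in \eqref{eq: g_K'(0) < 0} and \eqref{eq: g_K'(1) > 0} finite and strictly positive. Hence $g_{K}'(0) < 0$ and $g_{K}'(1) > 0$.
\begin{itemize}
\item Because $g_{K}'(0) < 0$, $g_{K}(s) < g_{K}(0)$ for small $s > 0$, so $0$ is not a minimizer.
\item Because $g_{K}'(1) > 0$, $g_{K}(s) < g_{K}(1)$ for $s$ slightly below $1$, so $1$ is not a minimizer.
\end{itemize}
Thus every minimizer lies in $(0,1)$.

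\emph{Uniqueness.} Since $P^{K}_{U|0} \not\equiv P^{K}_{U|1}$, at least one $u$ has $\log\big(P^{K}_{U|0}(u)/P^{K}_{U|1}(u)\big) \neq 0$, and its term in \eqref{eq: derivative g_K''(s)} is strictly positive. So $g_{K}'' > 0$ on $[0,1]$ and $g_{K}$ is strictly convex there. A strictly convex function has at most one minimizer on an interval.

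An equivalent route: $g_{K}'$ is continuous and strictly increasing on $[0,1]$, negative at $0$ and positive at $1$. By the intermediate value theorem it has exactly one zero $s^{*} \in (0,1)$. By convexity, this stationary point is the global minimizer.

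The only delicate point is checking that $g_{K}'' > 0$ strictly, which needs a nonzero log-ratio term. This is exactly what the hypothesis $P^{K}_{U|0} \not\equiv P^{K}_{U|1}$ provides, given that all probabilities are positive. No real obstacle is expected.
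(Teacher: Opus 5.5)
Your proposal is correct and follows essentially the same argument as the paper, which combines $g_{K}'(0)<0$, $g_{K}'(1)>0$ and $g_{K}''>0$ on $[0,1]$ to obtain a unique zero of $g_{K}'$ in $(0,1)$; this is exactly your ``equivalent route''. Your version additionally spells out existence via compactness and why the interior stationary point is the unique minimizer, which the paper leaves implicit.
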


\begin{proof}
    The conditions give \eqref{eq: g_K'(0) < 0} and \eqref{eq: g_K'(1) > 0}. And since $g_{K}''(s) > 0$ on $[0,1]$, there exists the unique $s^{*} \in (0,1)$ such that $g_{K}'(s^{*}) = 0$.
\end{proof}

Searching $s^{*} = \arg \min_{s \in [0,1]} g_{K}(s)$ is equivalent to searching $s^{*}$ such that
\begin{equation}
\label{eq: g'_K(s^*) = 0}
    g_{K}'(s^{*}) = 0,
\end{equation}
where $g_{K}'(s)$ is given by \eqref{eq: derivative g_K'(s)}. Non-linear equation \eqref{eq: g'_K(s^*) = 0} can be solved by numerical methods such as the bisection method, Newton's method, or Brent's method \cite{burden2011, brent1973, press2007}.

\subsection{Alternating Optimization Algorithm for Problem \ref{prob: reformulated problem}}

Combining step \ref{item: algorithm step 1} and \ref{item: algorithm step 2}, we establish the algorithm for Problem \ref{prob: reformulated problem} as below.

\begin{algorithm}[H]
\caption{Algorithm for Problem \ref{prob: reformulated problem}}
\label{alg: Algorithm for reformulated problem}
\begin{algorithmic}[1]
\Require $P_{X}$; $P_{X|0}$; $P_{X|1}$; cardinality $M$; rate $R > 0$
\Ensure Encoder $K^{*}$
\State \textbf{Initialize:} $K_{U|X}^{(0)} \in \mathcal{K}^{(M)}_{R,+}$; \ $s^{(0)} \in (0,1)$; \ $l \leftarrow 0$
\State Compute $P_{U}^{(0)}$, $P_{U|0}^{(0)}$, $P_{U|1}^{(0)}$ from $K_{U|X}^{(0)}$; \ $B^{(0)} \leftarrow \sum_{u \in \mathcal{U}_{M}} P_{U|0}^{(0)}(u)^{s^{(0)}} P_{U|1}^{(0)}(u)^{1-s^{(0)}}$
\State \textbf{if} $B^{(0)} = 1$ \textbf{then go to} line 1 \Comment{re-initialize to avoid $P_{U|0}^{(0)} \equiv P_{U|1}^{(0)}$}
\State Compute $\Gamma$ by \eqref{eq: definition of Gamma}
\Repeat
    \State For each $u, x$, compute $w(u)$ from \eqref{eq: w(u) achieve minimum} and $C^{s^{(l)}}_{u,x}$ from \eqref{eq: parameter matrix Cs}
    \State $\lambda_{\text{small}} \leftarrow 0$; \ $\lambda_{\text{big}} \leftarrow \Gamma / R$; \ $K_{U|X}^{(l+1)} \leftarrow K_{U|X}^{(l)}$ \Comment{initial bracket (Lem.\ \ref{lem: bisection search of lambda})}
    \Repeat
        \State $\lambda \leftarrow (\lambda_{\text{small}} + \lambda_{\text{big}}) / 2$ \Comment{bisection (Lem.\ \ref{lem: bisection search of lambda})}
        \State Start from $K_{U|X}^{(l)}$, run Algorithm \ref{alg: GBA for K_U|X} with $\lambda$ to derive $\hat{K}_{U|X}$; compute $\hat{I}_{K}(U;X)$
        \If{$\hat{I}_{K}(U;X) > R$}
            \State $\lambda_{\text{small}} \leftarrow \lambda$
        \Else
            \State $\lambda_{\text{big}} \leftarrow \lambda$
            \If{$\sum_{u,x} C^{s^{(l)}}_{u,x} \hat{K}(u|x) \leq \sum_{u,x} C^{s^{(l)}}_{u,x} K^{(l+1)}(u|x)$} \Comment{safeguard}
                \State $K_{U|X}^{(l+1)} \leftarrow \hat{K}_{U|X}$
            \EndIf
        \EndIf
    \Until{$\lambda_{\text{big}} - \lambda_{\text{small}}$ very small}
    \State Compute $P_{U}^{(l+1)}$, $P_{U|0}^{(l+1)}$, $P_{U|1}^{(l+1)}$ from $K_{U|X}^{(l+1)}$
    \State Compute $g_{K}'(s)$ from \eqref{eq: derivative g_K'(s)}; solve equation \eqref{eq: g'_K(s^*) = 0} to derive $s^{(l+1)}$
    \State $B^{(l+1)} \leftarrow \sum_{u \in \mathcal{U}_{M}} P_{U|0}^{(l+1)}(u)^{s^{(l+1)}} P_{U|1}^{(l+1)}(u)^{1-s^{(l+1)}}$ \Comment{update objective $B$}
    \State $l \leftarrow l+1$
\Until{$|B^{(l)} - B^{(l-1)}|$ very small}
\State \Return $K^{*} = K_{U|X}^{(l)}$
\end{algorithmic}
\end{algorithm}

\subsection{Convergence Analysis}\label{sec: analysis}

We analyze the convergence of Algorithms \ref{alg: GBA for K_U|X} and \ref{alg: Algorithm for reformulated problem}.

\begin{lemma}[Uniform boundedness]
\label{lem: algorithm GBA uniform bounded}
    For $K_{U|X} \in \mathcal{K}^{(M)}_{R,+}$ and $s \in (0,1)$, let $w(u)$ be given by \eqref{eq: w(u) achieve minimum} and $C^{s}_{u,x}$ by \eqref{eq: parameter matrix Cs}. Then, for all $(u,x) \in \mathcal{U}_{M} \times \mathcal{X}$,
    \[
    0 < P_{X|0}(x)^{s}P_{X|1}(x)^{1-s} \leq C^{s}_{u,x} \leq \Gamma \big(sP_{X|0}(x) +(1-s) P_{X|1}(x) \big).
    \]
\end{lemma}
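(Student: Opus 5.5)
Writing $w = w(u)$ and $r = r(u) \stackrel{\Delta}{=} P^{K}_{U|1}(u)/P^{K}_{U|0}(u)$, we have $w = r^{1-s}$ and $w^{-\frac{s}{1-s}} = r^{-s}$. Hence
\[
C^{s}_{u,x} = s\, r^{1-s} P_{X|0}(x) + (1-s)\, r^{-s} P_{X|1}(x).
\]
Since $K \in \mathcal{K}^{(M)}_{R,+}$, Lemma \ref{lem: uniform bound of w(u)} gives $r \in [\ell, L]$. Both bounds then reduce to per-entry estimates in the single scalar $r$.

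\textbf{Lower bound.} We apply the weighted AM--GM (Young) inequality of Fact \ref{fact: Young's inequality} to the two terms with weights $s$ and $1-s$:
\[
s\,a + (1-s)\,b \geq a^{s} b^{1-s}, \qquad a = r^{1-s}P_{X|0}(x),\quad b = r^{-s}P_{X|1}(x).
\]
The powers of $r$ cancel, since $r^{s(1-s)} r^{-s(1-s)} = 1$. This gives $C^{s}_{u,x} \geq P_{X|0}(x)^{s}P_{X|1}(x)^{1-s}$, and the right-hand side is strictly positive by \eqref{eq: condition no instantly recognizable x}. This is exactly the Young inequality used in Step (a), now with $w(u)$ treated as arbitrary, so no optimality of $w$ is needed.

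\textbf{Upper bound.} We bound each power of $r$ separately, using $s, 1-s \in (0,1)$. For the first term, $r^{1-s} \leq \max\{1, L\}^{1-s} \leq \max\{1,L\} \leq \Gamma$. For the second, $r^{-s} \leq \max\{1, \ell^{-1}\}^{s} \leq \Gamma$. Here we use $\Gamma = \max\{L, \ell^{-1}\} > 1$, so $\max\{1,L\} \leq \Gamma$ and $\max\{1,\ell^{-1}\} \leq \Gamma$. Substituting gives
\[
C^{s}_{u,x} \leq \Gamma\big(sP_{X|0}(x) + (1-s)P_{X|1}(x)\big).
\]

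\textbf{Main obstacle.} The only delicate point is the sign bookkeeping when $r$ is below or above $1$. We need $L \geq 1 \geq \ell$, which holds because both conditionals are probability distributions on the same support, so their ratio cannot be uniformly above or below $1$. Even without this fact, the $\max\{1,\cdot\}$ device above handles it, so I expect no real difficulty beyond checking exponents.
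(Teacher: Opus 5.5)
Your proof is correct and follows essentially the same route as the paper. The lower bound uses Young's inequality with weights $s,1-s$, where the powers of $w(u)$ cancel. The upper bound uses the ratio bounds $\ell \le P^{K}_{U|1}(u)/P^{K}_{U|0}(u) \le L$ from Lemma~\ref{lem: uniform bound of w(u)} to bound $w(u)$ and $w(u)^{-s/(1-s)}$ by $\Gamma$. The only difference is cosmetic: the paper invokes $\ell \le 1 \le L$ to get $L^{1-s} \le L$ and $\ell^{-s} \le \ell^{-1}$, whereas your $\max\{1,\cdot\}$ device needs only $\Gamma \ge 1$.
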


\begin{proof}
    Appendix~\ref{proof: algorithm GBA uniform bounded}.
\end{proof}

For fixed $C^{s}_{u,x}$ and $\lambda \geq 0$, denote by $\mathcal{K}^{(M)}_{\infty}$ the set of all encoders with output alphabet $\mathcal{U}_{M}$, with no rate constraint, and define
\[
G(K) \stackrel{\Delta}{=} \sum_{u,x} C^{s}_{u,x} K(u|x), \quad \mathcal{L}_{\lambda}(K) \stackrel{\Delta}{=} G(K) + \lambda I_{K}(U;X),
\]
so that \eqref{eq: sum u,x C K} reads $\min_{K \in \mathcal{K}^{(M)}_{R}} G(K)$. Let $\mathcal{L}^{*}_{\lambda} = \min_{K \in \mathcal{K}^{(M)}_{\infty}} \mathcal{L}_{\lambda}(K)$, which is attained since $\mathcal{K}^{(M)}_{\infty}$ is compact and $\mathcal{L}_{\lambda}$ is continuous. For $\lambda > 0$, Algorithm \ref{alg: GBA for K_U|X} alternately minimizes
\begin{equation}
\label{eq: definition hatL}
    \hat{\mathcal{L}}_{\lambda}(K_{U|X},P_U) = \sum_{u,x} C^{s}_{u,x} K(u|x) + \lambda \sum_{x} P_{X}(x) \sum_{u} K(u|x) \log \frac{K(u|x)}{P_U(u)}
\end{equation}
over $K_{U|X} \in \mathcal{K}^{(M)}_{\infty}$ and $P_{U}$. Since $\sum_{x} P_{X}(x) \sum_{u} K(u|x) \log \frac{K(u|x)}{P_{U}(u)} = I_{K}(U;X) + \mathrm{D}(P^{K}_{U} \| P_{U})$,
\begin{equation}
\label{eq: hatL = L + D}
    \hat{\mathcal{L}}_{\lambda}(K,P_U) = \mathcal{L}_{\lambda}(K) + \lambda \mathrm{D}(P^{K}_{U} \| P_{U}),
\end{equation}
so that $\min_{P_U} \hat{\mathcal{L}}_{\lambda}(K,P_U) = \mathcal{L}_{\lambda}(K)$, attained at $P_{U} = P^{K}_{U}$, and the global minimum of $\hat{\mathcal{L}}_{\lambda}$ equals $\mathcal{L}^{*}_{\lambda}$.

\begin{theorem}[Global convergence of Algorithm \ref{alg: GBA for K_U|X}]
\label{the: hatL global converge}
    Given $C^{s}_{u,x}$ and $\lambda > 0$, assume $P^{(0)}_{U}(u) > 0$ for all $u \in \mathcal{U}_{M}$ in Algorithm \ref{alg: GBA for K_U|X}, which holds if $K^{(0)}_{U|X} \in \mathcal{K}^{(M)}_{R,+}$. Then $K^{(t)}_{U|X}$ converges to a global minimizer $\bar{K}$ of $\mathcal{L}_{\lambda}$ over $\mathcal{K}^{(M)}_{\infty}$, and $(K^{(t)}_{U|X}, P^{(t)}_{U})$ converges to $(\bar{K}, P^{\bar{K}}_{U})$, a global minimizer of $\hat{\mathcal{L}}_{\lambda}$. Moreover, $\mathcal{L}_{\lambda}(K^{(t)}_{U|X})$ is nonincreasing in $t$, and for every global minimizer $K^{*}$ of $\mathcal{L}_{\lambda}$ over $\mathcal{K}^{(M)}_{\infty}$ and every $t \geq 1$,
    \begin{equation}
    \label{eq: GBA rate}
        0 \leq \mathcal{L}_{\lambda}(K^{(t)}_{U|X}) - \mathcal{L}^{*}_{\lambda} \leq \frac{\lambda \, \mathrm{D}(P^{K^{*}}_{U} \| P^{(0)}_{U})}{t}.
    \end{equation}
\end{theorem}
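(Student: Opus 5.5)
The proof follows the standard Csiszár--Tusnády / Blahut--Arimoto analysis, applied to the alternating minimization of $\hat{\mathcal{L}}_{\lambda}$ in \eqref{eq: definition hatL}. Write $a_{u,x} = \exp\big(-C^{s}_{u,x}/(\lambda P_X(x))\big) > 0$. The update of $K^{(t+1)}$ is exactly the minimizer of $\hat{\mathcal{L}}_{\lambda}(\cdot, P^{(t)}_U)$ over $\mathcal{K}^{(M)}_{\infty}$: for each $x$, the $x$-row objective is $\lambda P_X(x)\,\mathrm{D}\big(K(\cdot|x)\,\|\, P^{(t)}_U a_{\cdot,x}/Z_t(x)\big) - \lambda P_X(x)\log Z_t(x)$ with $Z_t(x)=\sum_u P^{(t)}_U(u)a_{u,x}$. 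By \eqref{eq: hatL = L + D}, the update of $P^{(t+1)}_U = P^{K^{(t+1)}}_U$ is the minimizer over $P_U$. Both updates only decrease $\hat{\mathcal{L}}_\lambda$, so
\[
\mathcal{L}_\lambda(K^{(t+1)}) = \hat{\mathcal{L}}_\lambda(K^{(t+1)},P^{(t+1)}_U) \le \hat{\mathcal{L}}_\lambda(K^{(t+1)},P^{(t)}_U) \le \hat{\mathcal{L}}_\lambda(K^{(t)},P^{(t)}_U)=\mathcal{L}_\lambda(K^{(t)}),
\]
which gives monotonicity. Positivity $P^{(t)}_U(u)>0$ propagates because $a_{u,x}>0$, so every iterate is strictly positive and all logarithms are finite.

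The key step is a one-step inequality: for any $K^{*}$ with marginal $P^{*}_U = P^{K^*}_U$,
\[
\mathcal{L}_\lambda(K^{(t+1)}) - \mathcal{L}_\lambda(K^{*}) \le \lambda\big[\mathrm{D}(P^{*}_U\|P^{(t)}_U) - \mathrm{D}(P^{*}_U\|P^{(t+1)}_U)\big].
\]
To derive it, compute $\hat{\mathcal{L}}_\lambda(K^{(t+1)},P^{(t)}_U) = -\lambda\sum_x P_X(x)\log Z_t(x)$. Then, for any $K^*$, the row-wise Gibbs variational identity gives $\hat{\mathcal{L}}_\lambda(K^*,P^{(t)}_U) - \hat{\mathcal{L}}_\lambda(K^{(t+1)},P^{(t)}_U) = \lambda\sum_x P_X(x)\,\mathrm{D}\big(K^*(\cdot|x)\|K^{(t+1)}(\cdot|x)\big)$. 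The left side of the target inequality is at most $\hat{\mathcal{L}}_\lambda(K^{(t+1)},P^{(t)}_U) - \mathcal{L}_\lambda(K^*)$. Using $\hat{\mathcal{L}}_\lambda(K^*,P^{(t)}_U)=\mathcal{L}_\lambda(K^*)+\lambda \mathrm{D}(P^*_U\|P^{(t)}_U)$, the claim reduces to
\[
\sum_x P_X(x)\,\mathrm{D}\big(K^*(\cdot|x)\|K^{(t+1)}(\cdot|x)\big) \ge \mathrm{D}(P^*_U\|P^{(t+1)}_U).
\]
This follows from the log-sum inequality (joint convexity of divergence), since $P^*_U$ and $P^{(t+1)}_U$ are the $P_X$-mixtures of the rows of $K^*$ and $K^{(t+1)}$. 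I expect this identity-plus-convexity bookkeeping to be the main place where care is needed, but it is the classical argument.

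Summing the one-step inequality from $0$ to $t-1$ telescopes. Since $\mathcal{L}_\lambda(K^{(j)})$ is nonincreasing, this gives $t\big(\mathcal{L}_\lambda(K^{(t)})-\mathcal{L}^*_\lambda\big)\le \lambda \mathrm{D}(P^*_U\|P^{(0)}_U)$, where the right side is finite because $P^{(0)}_U>0$. This is \eqref{eq: GBA rate}, and the lower bound $0$ is trivial.

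For convergence of the iterates, the one-step inequality with $K^*$ a minimizer shows that $\mathrm{D}(P^*_U\|P^{(t)}_U)$ is nonincreasing. Take a convergent subsequence of $(K^{(t)},P^{(t)}_U)$ in the compact set, with limit $(\bar K, \bar P)$. By continuity and the rate bound, $\bar K$ is a global minimizer of $\mathcal{L}_\lambda$, and $\bar P = P^{\bar K}_U$. Applying the monotone-divergence argument with $K^*=\bar K$ forces $\mathrm{D}(P^{\bar K}_U\|P^{(t)}_U)\to 0$ along the whole sequence, so $P^{(t)}_U\to P^{\bar K}_U$. Finally, $K^{(t+1)}$ is a continuous function of $P^{(t)}_U$ through the explicit update, and a minimizer satisfies the fixed-point equation \eqref{eq: fix PU update KU|X}. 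Hence $K^{(t)}\to\bar K$, and $(\bar K,P^{\bar K}_U)$ minimizes $\hat{\mathcal{L}}_\lambda$ by \eqref{eq: hatL = L + D}.

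One subtlety remains: $\bar K$ must satisfy the fixed-point map at $P^{\bar K}_U$ even where $P^{\bar K}_U(u)=0$. I would handle this by noting that the update $P_U\mapsto K$ is continuous and that $\bar K$ is the limit of $K^{(t_j+1)}$, which is the image of $P^{(t_j)}_U\to \bar P$.
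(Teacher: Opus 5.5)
Your proposal is correct and follows the paper's proof essentially step for step. It uses the same Gibbs-form identity for $\hat{\mathcal{L}}_{\lambda}(K, P^{(t)}_U)$, the same reduction of the one-step inequality to joint convexity of relative entropy, the same telescoping for the $O(1/t)$ rate, and the same subsequence plus monotone-divergence argument for convergence of the iterates. Two small differences are immaterial: the paper obtains monotonicity from the one-step inequality with $K = K^{(t)}$ rather than from your alternating-minimization chain, and it resolves your final subtlety the same way, by continuity of the update $P_U \mapsto K$ on the whole simplex, which holds because $Z(x) \geq \min_{u} a_{u,x} > 0$. Hence $K^{(t)}$ converges along the full sequence, and its limit must equal the subsequential limit $\bar{K}$. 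Your subsequence gives $K^{(t_j)} \to \bar{K}$ rather than $K^{(t_j+1)} \to \bar{K}$, but this whole-sequence convergence makes the index shift harmless.
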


\begin{proof}
    Appendix~\ref{proof: hatL global converge}.
\end{proof}

\begin{corollary}[Stopping certificate]
\label{cor: GBA stopping certificate}
    Under the assumptions of Theorem \ref{the: hatL global converge}, for every $t \geq 1$,
    \[
    0 \leq \mathcal{L}_{\lambda}(K^{(t)}_{U|X}) - \mathcal{L}^{*}_{\lambda} \leq \lambda \log \max_{u \in \mathcal{U}_{M}} \frac{P^{(t)}_{U}(u)}{P^{(t-1)}_{U}(u)}.
    \]
\end{corollary}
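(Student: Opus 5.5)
The plan is the classical Blahut--Arimoto duality argument. It rests on the variational identity \eqref{eq: hatL = L + D} and on the explicit Gibbs form of the $K$-update. Write $a_{u,x} = C^{s}_{u,x}/(\lambda P_{X}(x))$. For any positive distribution $Q$ on $\mathcal{U}_{M}$, define the partition function $Z_{x}(Q) = \sum_{u} Q(u) e^{-a_{u,x}}$.

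\textbf{Step 1: positivity and the $K$-step.} First I note that $P^{(t)}_{U}(u) > 0$ for all $t$ and $u$. This holds because $P^{(0)}_{U} > 0$ by assumption, and the update multiplies $P^{(t-1)}_{U}(u)$ by positive exponentials. Hence all ratios in the statement are finite.

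Next, for fixed $Q > 0$, a standard Gibbs-variational computation (log-sum inequality, applied per $x$) gives
\[
\min_{K \in \mathcal{K}^{(M)}_{\infty}} \hat{\mathcal{L}}_{\lambda}(K,Q) = -\lambda \sum_{x} P_{X}(x) \log Z_{x}(Q),
\]
attained exactly at the update formula in Algorithm \ref{alg: GBA for K_U|X}. So $K^{(t)}$ attains this minimum for $Q = P^{(t-1)}_{U}$.

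\textbf{Step 2: upper bound on $\mathcal{L}_{\lambda}(K^{(t)})$.} By \eqref{eq: hatL = L + D}, and because $P^{K^{(t)}}_{U} = P^{(t)}_{U}$,
\[
\mathcal{L}_{\lambda}(K^{(t)}) = \hat{\mathcal{L}}_{\lambda}(K^{(t)},P^{(t-1)}_{U}) - \lambda \mathrm{D}(P^{(t)}_{U}\|P^{(t-1)}_{U}) \leq -\lambda \sum_{x} P_{X}(x)\log Z_{x}(P^{(t-1)}_{U}).
\]

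\textbf{Step 3: lower bound on $\mathcal{L}^{*}_{\lambda}$ (dual side).} Since the global minimum of $\hat{\mathcal{L}}_{\lambda}$ equals $\mathcal{L}^{*}_{\lambda}$, Step 1 gives
\[
\mathcal{L}^{*}_{\lambda} = \inf_{Q} \Big(-\lambda \sum_{x} P_{X}(x)\log Z_{x}(Q)\Big).
\]
Boundary $Q$ are handled by continuity, or simply by taking $Q = P^{K^{*}}_{U}$ directly. It then suffices to show, for every $Q$,
\[
\sum_{x} P_{X}(x)\log \frac{Z_{x}(Q)}{Z_{x}(P^{(t-1)}_{U})} \leq \log \max_{u} \frac{P^{(t)}_{U}(u)}{P^{(t-1)}_{U}(u)}.
\]
Jensen's inequality (concavity of $\log$) bounds the left side by $\log \sum_{x} P_{X}(x) Z_{x}(Q)/Z_{x}(P^{(t-1)}_{U})$. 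Exchanging sums, this inner quantity equals
\[
\sum_{u} Q(u) \sum_{x} P_{X}(x)\frac{e^{-a_{u,x}}}{Z_{x}(P^{(t-1)}_{U})}.
\]
By the two update formulas, the inner sum over $x$ is exactly $P^{(t)}_{U}(u)/P^{(t-1)}_{U}(u)$. A $Q$-average of these ratios is at most their maximum, which proves the displayed bound.

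\textbf{Step 4: combine.} Subtracting the lower bound of Step 3 from the upper bound of Step 2 gives the claimed certificate. The left inequality $0 \leq \mathcal{L}_{\lambda}(K^{(t)}) - \mathcal{L}^{*}_{\lambda}$ is just the definition of $\mathcal{L}^{*}_{\lambda}$.

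The only delicate point is Step 3: identifying the inner sum with the ratio $P^{(t)}_{U}/P^{(t-1)}_{U}$, and justifying the dual representation of $\mathcal{L}^{*}_{\lambda}$. Both follow from the Gibbs form and from \eqref{eq: hatL = L + D}.
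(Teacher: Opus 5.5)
Your proof is correct, but it takes a different route from the paper's. The paper gets the certificate in two lines from the one-step inequality \eqref{eq: GBA one-step}, which it had already proved for the $O(1/t)$ rate using the Gibbs form of the update and joint convexity of relative entropy. Applied at step $t-1$ with $K = K^{*}$ and $q^{*} = P^{K^{*}}_{U}$, it gives $\mathcal{L}_{\lambda}(K^{(t)}_{U|X}) - \mathcal{L}^{*}_{\lambda} \leq \lambda\big(\mathrm{D}(q^{*}\|P^{(t-1)}_{U}) - \mathrm{D}(q^{*}\|P^{(t)}_{U})\big) = \lambda \sum_{u} q^{*}(u) \log \frac{P^{(t)}_{U}(u)}{P^{(t-1)}_{U}(u)}$. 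The paper then bounds this average by the maximum.

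You instead use Blahut's weak-duality argument. The Gibbs variational formula makes $-\lambda\sum_{x} P_{X}(x)\log Z_{x}(Q)$ play two roles. At $Q = P^{(t-1)}_{U}$ it equals $\hat{\mathcal{L}}_{\lambda}(K^{(t)}_{U|X},P^{(t-1)}_{U}) \geq \mathcal{L}_{\lambda}(K^{(t)}_{U|X})$. For every $Q$ it lower-bounds $\mathcal{L}^{*}_{\lambda}$. Jensen's inequality together with the identity $\sum_{x} P_{X}(x) e^{-a_{u,x}}/Z_{x}(P^{(t-1)}_{U}) = P^{(t)}_{U}(u)/P^{(t-1)}_{U}(u)$ then closes the gap.

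The two routes trade off as follows. Yours is self-contained: it needs neither the three-point inequality nor joint convexity, and it presents the certificate as a primal--dual gap rather than as a by-product of a descent inequality. The paper's route costs almost nothing because it reuses machinery it needs anyway. Its intermediate quantity $\lambda\sum_{u} q^{*}(u)\log\frac{P^{(t)}_{U}(u)}{P^{(t-1)}_{U}(u)}$ is also, by Jensen, no larger than your post-Jensen bound $\lambda\log\sum_{u} q^{*}(u)\frac{P^{(t)}_{U}(u)}{P^{(t-1)}_{U}(u)}$ at $Q = q^{*}$. Both relax to the same maximum.

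One minor tidy-up: you need neither positivity of $Q$ nor a continuity argument. The log-sum inequality gives $\hat{\mathcal{L}}_{\lambda}(K,Q) \geq -\lambda\sum_{x} P_{X}(x)\log Z_{x}(Q)$ for every encoder $K$ and every distribution $Q$. Your Jensen step also holds for arbitrary $Q$.
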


\begin{proof}
    Let $K^{*}$ be a global minimizer of $\mathcal{L}_{\lambda}$ and $q^{*} = P^{K^{*}}_{U}$. Applying \eqref{eq: GBA one-step} at step $t-1$ with $K = K^{*}$, in the notation of Appendix \ref{proof: hatL global converge},
    \[
    \mathcal{L}_{\lambda}(K_{t}) - \mathcal{L}^{*}_{\lambda} \leq \lambda \big( \mathrm{D}(q^{*} \| q_{t-1}) - \mathrm{D}(q^{*} \| q_{t}) \big) = \lambda \sum_{u} q^{*}(u) \log \frac{q_{t}(u)}{q_{t-1}(u)} \leq \lambda \log \max_{u} \frac{q_{t}(u)}{q_{t-1}(u)}.
    \]
\end{proof}

In Algorithm \ref{alg: Algorithm for reformulated problem}, the trade-off parameter $\lambda$ is searched by bisection. To justify this, let $K(\lambda)$ be any minimizer of $\mathcal{L}_{\lambda}(K)$ in $\mathcal{K}^{(M)}_{\infty}$, i.e.,
\begin{equation}
\label{eq: definition K(lambda)}
    K(\lambda) \in \arg \min_{K \in \mathcal{K}^{(M)}_{\infty}} \mathcal{L}_{\lambda}(K),
\end{equation}
which Algorithm \ref{alg: GBA for K_U|X} approximates by Theorem \ref{the: hatL global converge}, and let $I(\lambda) \stackrel{\Delta}{=} I_{K(\lambda)}(U;X)$. The minimizer in \eqref{eq: definition K(lambda)} need not be unique, and the following result holds for every choice of it.

\begin{lemma}[Bisection search of $\lambda$]
\label{lem: bisection search of lambda}
    Let $C^{s}_{u,x}$ be given by \eqref{eq: parameter matrix Cs} for some $K_{U|X} \in \mathcal{K}^{(M)}_{R,+}$ and $s \in (0,1)$, let $R > 0$, $\Gamma$ be defined in \eqref{eq: definition of Gamma}, and $G^{*} = \min_{K \in \mathcal{K}^{(M)}_{R}} G(K)$ be the optimal value of \eqref{eq: sum u,x C K}. Then, for every choice of $K(\lambda)$ in \eqref{eq: definition K(lambda)}:
    \begin{enumerate}[label=(\roman*)]
        \item\label{item: bisection monotone} $I(\lambda)$ is nonincreasing on $[0, +\infty)$, i.e., $I(\lambda_1) \geq I(\lambda_2)$ for all $0 \leq \lambda_1 < \lambda_2$.
            
        \item\label{item: bisection certificate} If $I(\lambda) \leq R$, then $K(\lambda) \in \mathcal{K}^{(M)}_{R}$ and
        \begin{equation}
        \label{eq: bisection certificate}
            0 \leq G(K(\lambda)) - G^{*} \leq \lambda \big(R - I(\lambda) \big).
        \end{equation}
        In particular, $K(\lambda)$ solves \eqref{eq: sum u,x C K} if $I(\lambda) = R$ or $\lambda = 0$.
        
        \item\label{item: bisection threshold} There exists $\lambda^{*} \in [0, \Gamma/R)$ such that $I(\lambda) > R$ for $\lambda < \lambda^{*}$, and $I(\lambda) \leq R$ for $\lambda > \lambda^{*}$.
    \end{enumerate}
    Consequently, the bisection in lines 7--19 of Algorithm \ref{alg: Algorithm for reformulated problem}, initialized with $\lambda_{\text{small}} = 0$ and $\lambda_{\text{big}} = \Gamma / R$, keeps $\lambda^{*} \in [\lambda_{\text{small}}, \lambda_{\text{big}}]$ and $I(\lambda_{\text{big}}) \leq R$ at every step, and after $n$ steps, $\lambda_{\text{big}} - \lambda_{\text{small}} = \Gamma / (R \, 2^{n})$.
\end{lemma}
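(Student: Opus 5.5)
The plan is to prove the three items separately, each by a short variational argument about the minimizers $K(\lambda)$ of $\mathcal{L}_{\lambda}$ over the compact set $\mathcal{K}^{(M)}_{\infty}$. None of the arguments uses uniqueness of the minimizer, so every conclusion holds for every choice of $K(\lambda)$.

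For \ref{item: bisection monotone} I will use the standard exchange argument. Let $K_i = K(\lambda_i)$ and $I_i = I(\lambda_i)$. Optimality of each $K_i$ at its own $\lambda_i$ gives
\[
G(K_1)+\lambda_1 I_1 \leq G(K_2)+\lambda_1 I_2, \qquad G(K_2)+\lambda_2 I_2 \leq G(K_1)+\lambda_2 I_1.
\]
Adding the two inequalities cancels the $G$ terms and leaves $(\lambda_1-\lambda_2)(I_1-I_2) \leq 0$. Since $\lambda_1<\lambda_2$, this gives $I_1 \geq I_2$.

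For \ref{item: bisection certificate} I will use weak duality. Feasibility is immediate, since $K(\lambda) \in \mathcal{K}^{(M)}_{\infty}$ and $I(\lambda)\le R$ together mean $K(\lambda) \in \mathcal{K}^{(M)}_{R}$, which gives $G(K(\lambda)) \geq G^{*}$. For the upper bound, take any $K \in \mathcal{K}^{(M)}_{R}$. Because $\lambda \ge 0$ and $I_K(U;X)\le R$,
\[
G(K) \geq G(K)+\lambda\big(I_K(U;X)-R\big) = \mathcal{L}_\lambda(K)-\lambda R \geq \mathcal{L}_\lambda(K(\lambda)) - \lambda R = G(K(\lambda))+\lambda\big(I(\lambda)-R\big).
\]
Taking the minimum over $K$ yields \eqref{eq: bisection certificate}. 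The special cases $I(\lambda)=R$ and $\lambda=0$ make the right-hand side zero, so $K(\lambda)$ is optimal in those cases.

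Item \ref{item: bisection threshold} is the main step, and the key is the quantitative bound $\lambda I(\lambda) < \Gamma$ for every $\lambda$. To get it, I compare $K(\lambda)$ with a constant encoder $K_0(u|x)=\mathbf{1}\{u=u_0\}$, for any fixed $u_0$. This encoder has $I_{K_0}(U;X)=0$, so optimality of $K(\lambda)$ gives
\[
\lambda I(\lambda) \leq G(K_0) - G(K(\lambda)).
\]
For the right-hand side, $G(K_0)=\sum_x C^s_{u_0,x}$, and $G(K(\lambda)) \geq \sum_x \min_u C^s_{u,x}$ because each $K(\cdot|x)$ is a probability vector. Lemma~\ref{lem: algorithm GBA uniform bounded} then gives
\[
\lambda I(\lambda) \leq \sum_x \Big[\Gamma\big(sP_{X|0}(x)+(1-s)P_{X|1}(x)\big) - P_{X|0}(x)^sP_{X|1}(x)^{1-s}\Big] = \Gamma - \sum_x P_{X|0}(x)^sP_{X|1}(x)^{1-s} < \Gamma,
\]
where the last sum is strictly positive by \eqref{eq: condition no instantly recognizable x}. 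Consequently, $I(\lambda)>R$ forces $\lambda<\Gamma/R$.

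Now define $\lambda^{*} = \sup\{\lambda\geq 0 : I(\lambda)>R\}$, with $\lambda^{*}=0$ if this set is empty. By \ref{item: bisection monotone} the set $\{\lambda : I(\lambda)>R\}$ is an initial interval of $[0,\infty)$, so $I(\lambda)>R$ for $\lambda<\lambda^{*}$ and $I(\lambda)\le R$ for $\lambda>\lambda^{*}$. It remains to show $\lambda^{*}<\Gamma/R$. By the bound above, $I(\lambda) > R$ implies $\lambda I(\lambda) <\Gamma$ with $I(\lambda)>R$. I expect the delicate point to be making this strict at the supremum. To handle it, I will use the uniform gap: $I(\lambda)>R$ implies $\lambda < (\Gamma-\delta)/R$, where $\delta=\sum_x P_{X|0}(x)^sP_{X|1}(x)^{1-s}>0$ does not depend on $\lambda$. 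Hence $\lambda^{*} \le (\Gamma-\delta)/R<\Gamma/R$.

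The bisection consequences then follow by induction on the bisection steps. Initially $\lambda_{\text{small}}=0\le\lambda^{*}$, and $\lambda_{\text{big}}=\Gamma/R>\lambda^{*}$, so $I(\lambda_{\text{big}})\le R$. At each step the rule moves $\lambda_{\text{small}}$ to $\lambda$ only when $I(\lambda)>R$, which forces $\lambda\le\lambda^{*}$. It moves $\lambda_{\text{big}}$ to $\lambda$ only when $I(\lambda)\le R$, which forces $\lambda\ge\lambda^{*}$, since $I(\lambda)>R$ for every $\lambda<\lambda^{*}$. So both invariants are preserved at every step. Each step halves the interval, which gives $\lambda_{\text{big}}-\lambda_{\text{small}}=\Gamma/(R\,2^n)$ after $n$ steps.
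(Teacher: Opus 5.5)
Your proposal is correct and follows the paper's proof essentially step for step: the same exchange argument for (i), the same weak-duality comparison for (ii), and for (iii) the same comparison with the constant encoder via Lemma~\ref{lem: algorithm GBA uniform bounded}, which gives $\lambda I(\lambda) \leq \Gamma - \sum_{x} P_{X|0}(x)^{s} P_{X|1}(x)^{1-s}$ and hence $\lambda^{*} < \Gamma/R$ with the uniform gap. The only cosmetic difference is that you define $\lambda^{*} = \sup\{\lambda : I(\lambda) > R\}$, whereas the paper uses $\inf\{\lambda : I(\lambda) \leq R \text{ for some choice of } K(\lambda)\}$; by the monotonicity in (i) these are the same threshold.
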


\begin{proof}
    Appendix~\ref{proof: bisection search of lambda}.
\end{proof}

\begin{remark}
    Lemma \ref{lem: bisection search of lambda} justifies lines 7--19 of Algorithm \ref{alg: Algorithm for reformulated problem}. The bracket $[0, \Gamma/R]$ is valid at every outer iteration, since the bound $\lambda^{*} < \Gamma/R$ relies only on Lemma \ref{lem: algorithm GBA uniform bounded}. Locating $\lambda^{*}$ to accuracy $\delta > 0$ takes $\lceil \log_{2} (\Gamma / (R \delta)) \rceil$ runs of Algorithm \ref{alg: GBA for K_U|X}. Finally, $K^{(l+1)}_{U|X}$ is only replaced by encoders of rate at most $R$, so it stays feasible for \eqref{eq: sum u,x C K}, and its suboptimality is bounded by \eqref{eq: bisection certificate}.
\end{remark}

Many previous theorems and lemmas assume $s \in (0,1)$; the following lemma ensures this.

\begin{lemma}[Interiority of $s^{*}$]
\label{lem: interiority of s^*}
    For $K_{U|X} \in \mathcal{K}^{(M)}_{R,+}$ with $P^{K}_{U|0} \not\equiv P^{K}_{U|1}$, let $\mathcal{C}(K)$ be defined in \eqref{eq: definition Chernoff CK} as the corresponding Chernoff information, $\Gamma$ be defined in \eqref{eq: definition of Gamma}, and $s^{*}$ be the unique root of \eqref{eq: g'_K(s^*) = 0}. Then,
    \begin{equation}
    \label{eq: interval of s^*}
        \frac{\mathcal{C}(K)}{\log \Gamma} \leq s^{*} \leq 1 - \frac{\mathcal{C}(K)}{\log \Gamma}.
    \end{equation}
    Moreover, $\mathcal{C}(K) \leq \frac{1}{2} \log \Gamma$, which indicates the interval \eqref{eq: interval of s^*} is non-empty.
\end{lemma}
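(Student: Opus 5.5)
Write $p_u = P^{K}_{U|0}(u)$, $q_u = P^{K}_{U|1}(u)$ and $r_u = q_u/p_u$. By Lemma~\ref{lem: uniform bound of w(u)} we have $r_u \in [\ell, L] \subseteq [\Gamma^{-1}, \Gamma]$ for every $u$. The plan is to bound $g_{K}(s)$ from below by a function that depends only on $s$ and $\Gamma$. That bound, combined with $g_{K}(s^{*}) = e^{-\mathcal{C}(K)}$, will then force $s^{*}$ to lie away from the endpoints.

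\textbf{Step 1: a lower bound on $g_K$.} Rewrite $g_{K}(s) = \sum_{u} p_u r_u^{1-s}$. Since $r_u \geq \Gamma^{-1}$, this gives
\[
g_{K}(s) \geq \Gamma^{-(1-s)} \sum_{u} p_u = \Gamma^{-(1-s)}.
\]
Symmetrically, writing $g_{K}(s) = \sum_{u} q_u r_u^{-s}$ and using $r_u \leq \Gamma$ gives $g_{K}(s) \geq \Gamma^{-s}$. Together,
\[
g_{K}(s) \geq \Gamma^{-\min\{s,1-s\}}.
\]

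\textbf{Step 2: evaluate at the minimizer.} By Lemma~\ref{lem: unique S^* achieve minimum}, $s^{*}$ is the minimizer, so $g_K(s^{*}) = e^{-\mathcal{C}(K)}$. Taking logarithms in the bound of Step 1 at $s = s^{*}$ yields
\[
\mathcal{C}(K) \leq \min\{s^{*}, 1-s^{*}\} \log \Gamma .
\]
Dividing by $\log\Gamma > 0$ gives both inequalities of \eqref{eq: interval of s^*}. Since $\min\{s,1-s\} \leq \tfrac12$, the same inequality also gives $\mathcal{C}(K) \leq \tfrac12 \log\Gamma$, which is exactly the condition for the interval to be non-empty.

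\textbf{Possible refinement.} The non-emptiness bound can alternatively be obtained without $s^{*}$. Evaluate at $s = 1/2$, where Step 1 gives $g_K(1/2) \geq \Gamma^{-1/2}$, and use $\min_s g_K \leq g_K(1/2)$. Note, however, that $\min_s g_K \leq g_K(1/2)$ yields an upper bound on the minimum, not a lower bound, so this route only becomes a proof after recombining it with Step 2. The cleanest argument is therefore the one above.

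\textbf{Expected obstacle.} The only step needing care is justifying the ratio bound $r_u \in [\Gamma^{-1},\Gamma]$. This is supplied by Lemma~\ref{lem: uniform bound of w(u)} together with $\ell \geq \Gamma^{-1}$ and $L \leq \Gamma$, both immediate from the definition \eqref{eq: definition of Gamma}. Positivity of all $p_u, q_u$ is guaranteed by $K \in \mathcal{K}^{(M)}_{R,+}$, so the rewritings of $g_K$ in Step 1 are legitimate.
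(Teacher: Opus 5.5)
Your proof is correct, and it takes a different route from the paper's.

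\textbf{The paper's route.} The paper works with $\psi_K = \log g_K$. It first shows $\psi_K$ is convex, using Cauchy--Schwarz on $g_K, g_K', g_K''$. It then uses $\psi_K(0)=\psi_K(1)=0$ and the endpoint derivatives $\psi_K'(0) = -\mathrm{D}(P^{K}_{U|1}\|P^{K}_{U|0})$ and $\psi_K'(1) = \mathrm{D}(P^{K}_{U|0}\|P^{K}_{U|1})$, and compares $\psi_K(s^{*})$ with the tangent lines at $0$ and $1$. This yields $\mathcal{C}(K) \le s^{*}\,\mathrm{D}(P^{K}_{U|1}\|P^{K}_{U|0})$ and $\mathcal{C}(K) \le (1-s^{*})\,\mathrm{D}(P^{K}_{U|0}\|P^{K}_{U|1})$. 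Only then are both divergences bounded by $\log\Gamma$, using the same likelihood-ratio bound you invoke.

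\textbf{How your route relates.} Your Step 1 replaces the tangent-line argument with a pointwise worst-case bound on $r_u^{-s}$ and $r_u^{1-s}$. The paper's tangent bound is exactly the Jensen-refined version of yours: $g_K(s) = \sum_u q_u e^{-s\log r_u} \ge e^{-s\,\mathrm{D}(P^{K}_{U|1}\|P^{K}_{U|0})} \ge \Gamma^{-s}$, and symmetrically $g_K(s) = \sum_u p_u e^{(1-s)\log r_u} \ge e^{-(1-s)\,\mathrm{D}(P^{K}_{U|0}\|P^{K}_{U|1})} \ge \Gamma^{-(1-s)}$.

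\textbf{What each buys.} Your approach is more elementary: it needs no derivatives and no log-convexity, and it gives a lower bound $g_K(s) \ge \Gamma^{-\min\{s,1-s\}}$ valid uniformly in $s$. The paper's approach gives sharper intermediate inequalities, with the KL divergences in place of $\log\Gamma$, and these can localize $s^{*}$ more tightly.

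\textbf{A correction to your ``possible refinement'' paragraph.} An $s^{*}$-free proof of $\mathcal{C}(K) \le \frac12\log\Gamma$ does exist. It does not come from evaluating at $s=1/2$. It comes from observing that your Step 1 bound gives $g_K(s) \ge \Gamma^{-\min\{s,1-s\}} \ge \Gamma^{-1/2}$ for every $s \in [0,1]$. Hence $\min_{s} g_K(s) \ge \Gamma^{-1/2}$ directly.
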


\begin{proof}
By Lemma \ref{lem: uniform bound of w(u)}, for any $u \in \mathcal{U}_{M}$,
\[
\Big| \log \frac{P^{K}_{U|1}(u)}{P^{K}_{U|0}(u)} \Big| \leq \max \{\log L, \log \ell^{-1} \} = \log \Gamma.
\]
Thus,
\begin{equation}
\label{eq: D(P1 P0) D(P0 P1) leq log Gamma}
    D(P^{K}_{U|1} \| P^{K}_{U|0}) \leq \log \Gamma, \quad D(P^{K}_{U|0} \| P^{K}_{U|1}) \leq \log \Gamma.
\end{equation}

With $g_{K}(s)$ defined in \eqref{eq: definition of g_{K}(s)}, $K_{U|X} \in \mathcal{K}^{(M)}_{R,+}$ indicates $g_{K}(s) > 0$. Let $\psi_{K}(s) \stackrel{\Delta}{=} \log g_{K}(s)$. While Lemma \ref{lem: gKs continuous and infinitely differentiable} guarantees the differentiability, calculate the derivatives gives
\begin{equation}
\label{eq: derivatives psi_K'(s) psi_K''(s)}
    \psi_{K}'(s) = \frac{g_{K}'(s)}{g_{K}(s)}, \quad \psi_{K}''(s) = \frac{g_{K}''(s)g_{K}(s) - (g_{K}'(s))^2}{(g_{K}(s))^2}.
\end{equation}
By \eqref{eq: derivative g_K'(s)}, \eqref{eq: derivative g_K''(s)} and Cauchy-Schwarz inequality, we have $(g_{K}'(s))^2 \leq g_{K}''(s)g_{K}(s)$, and equality holds if and only if $P^{K}_{U|0} \equiv P^{K}_{U|1}$. Hence, $\psi_{K}(s)$ is strictly convex on $[0,1]$. Moreover, $g_{K}(0) = g_{K}(1) = 1$. Thus, $\psi_{K}(0) = \psi_{K}(1) = 0$, and by \eqref{eq: g_K'(0) < 0}, \eqref{eq: g_K'(1) > 0} and \eqref{eq: derivatives psi_K'(s) psi_K''(s)},
\[
\psi_{K}'(0) = g_{K}'(0) = -D(P^{K}_{U|1} \| P^{K}_{U|0}), \quad \psi_{K}'(1) = g_{K}'(1) = D(P^{K}_{U|0} \| P^{K}_{U|1}).
\]

Since $\arg \min_{s \in [0,1]} \psi_{K}(s) = \arg \min_{s \in [0,1]} g_{K}(s) = s^{*}$, and $\psi_{K}(s^{*}) = - \mathcal{C}(K)$, convexity of $\psi_{K}(s)$ gives
\begin{equation}
\label{eq: convexity gives tangent 0}
    - \mathcal{C}(K) = \psi_{K}(s^{*}) \geq \psi_{K}(0) + \psi_{K}'(0) s^{*} = -D(P^{K}_{U|1} \| P^{K}_{U|0}) s^{*},
\end{equation}
and
\begin{equation}
\label{eq: convexity gives tangent 1}
    - \mathcal{C}(K) = \psi_{K}(s^{*}) \geq \psi_{K}(1) + \psi_{K}'(1) (s^{*}-1) = D(P^{K}_{U|0} \| P^{K}_{U|1})(s^{*}-1).
\end{equation}
Combining \eqref{eq: D(P1 P0) D(P0 P1) leq log Gamma}, \eqref{eq: convexity gives tangent 0}, \eqref{eq: convexity gives tangent 1} derives \eqref{eq: interval of s^*} and $\mathcal{C}(K) \leq \frac{1}{2} \log \Gamma$.
\end{proof}

\begin{theorem}[Monotone convergence of Algorithm \ref{alg: Algorithm for reformulated problem}]
\label{the: convergence of algorithm_reformulated problem}
    In Algorithm \ref{alg: Algorithm for reformulated problem}, assume that, at every outer iteration $l \geq 0$, line 21 returns the exact minimizer $s^{(l+1)}$ of $B_{M}(K^{(l+1)}, \cdot)$ over $[0,1]$. Then, for every $l \geq 0$, $K^{(l)}_{U|X} \in \mathcal{K}^{(M)}_{R,+}$, $s^{(l)} \in (0,1)$, and
    \begin{equation}
    \label{eq: monotone B}
        0 < F(R) \leq B^{(l+1)} \leq B^{(l)} \leq \cdots \leq B^{(0)} < 1.
    \end{equation}
    Consequently, $\{B^{(l)}\}_{l=0}^{\infty}$ converges to some $B^{*} \in [F(R), B^{(0)}] \subset (0,1)$, and the Chernoff information $\mathcal{C}(K^{(l)}) = -\log B^{(l)}$, $l \geq 1$, is nondecreasing and converges to $-\log B^{*} \leq C(R)$.
\end{theorem}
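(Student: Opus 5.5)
The plan is to prove, by induction on $l$, the joint invariant
\[
K^{(l)}_{U|X} \in \mathcal{K}^{(M)}_{R,+}, \qquad s^{(l)} \in (0,1), \qquad B^{(l)} < 1,
\]
and to obtain the chain of inequalities in \eqref{eq: monotone B} along the way. The base case comes directly from the initialization. We have $K^{(0)} \in \mathcal{K}^{(M)}_{R,+}$ and $s^{(0)} \in (0,1)$. Young's inequality (Fact \ref{fact: Young's inequality}) gives $B^{(0)} \leq 1$, and line 3 re-initializes whenever $B^{(0)} = 1$, so $B^{(0)} < 1$.

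\emph{Feasibility and positivity of $K^{(l+1)}$.} In lines 7--19, $K^{(l+1)}$ starts as $K^{(l)}$ and is only ever replaced by a GBA output $\hat{K}$ with $\hat{I}_{K}(U;X) \leq R$. Hence the rate constraint is preserved. For positivity, run Algorithm \ref{alg: GBA for K_U|X} from $K^{(l)}$ and induct on the inner index $t$. Initially $P^{(0)}_{U} > 0$ because $K^{(l)} > 0$ and $P_X > 0$. The quantities $w(u)$ and $C^{s^{(l)}}_{u,x}$ are finite and positive by Lemma \ref{lem: uniform bound of w(u)} and Lemma \ref{lem: algorithm GBA uniform bounded}; this uses $s^{(l)} \in (0,1)$ and $K^{(l)} > 0$. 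So the update of $K^{(t+1)}$ is a normalized product of positive numbers, which gives $K^{(t+1)} > 0$, and therefore $P^{(t+1)}_{U} > 0$. Thus every candidate $\hat{K}$ lies in $\mathcal{K}^{(M)}_{R,+}$, and so does $K^{(l+1)}$.

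\emph{Monotonicity.} Let $G$ be the linear objective \eqref{eq: sum u,x C K} built at $(K^{(l)}, s^{(l)})$. The safeguard in lines 15--17 guarantees $G(K^{(l+1)}) \leq G(K^{(l)})$, since the comparison starts from $K^{(l+1)} = K^{(l)}$ and only accepts decreases. By Remark \ref{rem: concavity of B}, this $G$ is the tangent plane of the concave function $B_{M}(\cdot, s^{(l)})$ at $K^{(l)}$. It therefore majorizes $B_{M}(\cdot, s^{(l)})$ and equals it at $K^{(l)}$. Combining these facts with the exact $s$-minimization in line 21 gives
\[
B^{(l+1)} = \min_{s} B_{M}(K^{(l+1)}, s) \leq B_{M}(K^{(l+1)}, s^{(l)}) \leq G(K^{(l+1)}) \leq G(K^{(l)}) = B_{M}(K^{(l)}, s^{(l)}) = B^{(l)}.
\]
For $l = 0$ the last equality is simply the definition of $B^{(0)}$ with the arbitrary $s^{(0)}$. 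For $l \geq 1$ it holds because $s^{(l)}$ is the minimizer. Consequently $B^{(l+1)} \leq B^{(0)} < 1$.

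\emph{Interiority of $s^{(l+1)}$.} If $P_{U|0} \equiv P_{U|1}$ held for $K^{(l+1)}$, then $B_{M}(K^{(l+1)}, \cdot) \equiv 1$, contradicting $B^{(l+1)} < 1$. So Lemma \ref{lem: unique S^* achieve minimum} applies, and $s^{(l+1)} \in (0,1)$; Lemma \ref{lem: interiority of s^*} moreover gives explicit bounds on it. This closes the induction.

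\emph{Lower bound and convergence.} Each $K^{(l)}$ with $l \geq 1$ lies in $\mathcal{K}^{(M)}_{R} \subseteq \mathscr{K}_{R}$, so $\mathcal{C}(K^{(l)}) \leq C(R)$, that is, $B^{(l)} \geq F(R) > 0$. For $l = 0$ we have $B^{(0)} \geq \min_{s} B_{M}(K^{(0)}, s) \geq F(R)$. The sequence $B^{(l)}$ is therefore monotone and bounded, hence convergent to some $B^{*} \in [F(R), B^{(0)}]$, and the statements about $\mathcal{C}(K^{(l)}) = -\log B^{(l)}$ follow. The main obstacle I expect is the positivity and well-definedness bookkeeping: $w(u)$ and $C^{s}_{u,x}$ must be finite at every step, which requires carrying $K^{(l)} > 0$ and $s^{(l)} \in (0,1)$ together in a single induction. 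The monotonicity step itself is straightforward once the tangent-majorizer identity is in hand.
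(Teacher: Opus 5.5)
Your proposal is correct and follows the paper's proof essentially step for step. It uses the same induction on $K^{(l)} \in \mathcal{K}^{(M)}_{R,+}$, $s^{(l)} \in (0,1)$, $B^{(l)} < 1$, and the same Young-inequality base case together with the re-initialization in line 3. It also uses the same tangent-plane majorizer chain $B^{(l+1)} \leq B_{M}(K^{(l+1)}, s^{(l)}) \leq G(K^{(l+1)}) \leq G(K^{(l)}) = B^{(l)}$, the same nondegeneracy argument for $s^{(l+1)} \in (0,1)$, and the same lower bound via $\mathcal{K}^{(M)}_{R} \subseteq \mathscr{K}_{R}$.

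The differences are cosmetic. You prove positivity of the inner GBA iterates directly, where the paper cites the proof of Theorem~\ref{the: hatL global converge}. You leave implicit that $F(R) > 0$, which holds because $C(R) \leq C_{X} < \infty$.
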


\begin{proof}
    By lines 2 and 22, $B^{(l)} = B_{M}(K^{(l)}, s^{(l)})$ for every $l \geq 0$. For $l \geq 0$ with $K^{(l)}_{U|X} \in \mathcal{K}^{(M)}_{R,+}$ and $s^{(l)} \in (0,1)$, let $C^{s^{(l)}}_{u,x}$ be computed from $K^{(l)}_{U|X}$ in line 6, and write $G^{(l)}(K) = \sum_{u,x} C^{s^{(l)}}_{u,x} K(u|x)$. We prove by induction on $l$ that
    \begin{equation}
    \label{eq: induction claim}
        K^{(l)}_{U|X} \in \mathcal{K}^{(M)}_{R,+}, \quad s^{(l)} \in (0,1), \quad B^{(l)} \leq B^{(0)} < 1.
    \end{equation}

    \textit{Base case.} By line 1, $K^{(0)}_{U|X} \in \mathcal{K}^{(M)}_{R,+}$ and $s^{(0)} \in (0,1)$. By Young's inequality (Fact \ref{fact: Young's inequality}) with $w(u) = 1$,
    \[
    B^{(0)} \leq \sum_{u \in \mathcal{U}_{M}} \big(s^{(0)} P^{K^{(0)}}_{U|0}(u) + (1-s^{(0)}) P^{K^{(0)}}_{U|1}(u) \big) = 1,
    \]
    with equality if and only if $P^{K^{(0)}}_{U|0} \equiv P^{K^{(0)}}_{U|1}$, which is excluded by line 3. Hence $B^{(0)} < 1$.

    \textit{Induction step.} Assume \eqref{eq: induction claim} holds for $l$. In lines 7--19, $K^{(l+1)}_{U|X}$ is initialized as $K^{(l)}_{U|X}$ and replaced only by an output $\hat{K}_{U|X}$ of Algorithm \ref{alg: GBA for K_U|X} started from $K^{(l)}_{U|X}$ with $\hat{I}_{K}(U;X) \leq R$ and $G^{(l)}(\hat{K}) \leq G^{(l)}(K^{(l+1)})$. Such an output is strictly positive, by the proof of Theorem \ref{the: hatL global converge}. Hence
    \[
    K^{(l+1)}_{U|X} \in \mathcal{K}^{(M)}_{R,+}, \qquad G^{(l)}(K^{(l+1)}) \leq G^{(l)}(K^{(l)}).
    \]
    By Remark \ref{rem: concavity of B}, $G^{(l)}$ is the tangent plane of $B_{M}(\cdot, s^{(l)})$ at $K^{(l)}$, so that $B_{M}(K, s^{(l)}) \leq G^{(l)}(K)$ for every $K$, with equality at $K = K^{(l)}$. Since $s^{(l+1)}$ minimizes $B_{M}(K^{(l+1)}, \cdot)$,
    \[
    \begin{aligned}
    B^{(l+1)} = B_{M}(K^{(l+1)}, s^{(l+1)}) & \leq B_{M}(K^{(l+1)}, s^{(l)}) \leq G^{(l)}(K^{(l+1)}) \\
    & \leq G^{(l)}(K^{(l)}) = B_{M}(K^{(l)}, s^{(l)}) = B^{(l)} \leq B^{(0)} < 1.
    \end{aligned}
    \]
    If $P^{K^{(l+1)}}_{U|0} \equiv P^{K^{(l+1)}}_{U|1}$, then $B_{M}(K^{(l+1)}, \cdot) \equiv 1$, contradicting $B^{(l+1)} = \min_{s \in [0,1]} B_{M}(K^{(l+1)}, s) < 1$. Hence Lemma \ref{lem: unique S^* achieve minimum} gives $s^{(l+1)} \in (0,1)$, which completes the induction and proves the monotonicity in \eqref{eq: monotone B}.

    \textit{Lower bound.} Since $K^{(l)}_{U|X} \in \mathcal{K}^{(M)}_{R} \subseteq \mathscr{K}_{R}$,
    \[
    B^{(l)} \geq \min_{s \in [0,1]} B_{M}(K^{(l)}, s) = e^{-\mathcal{C}(K^{(l)})} \geq e^{-C(R)} = F(R),
    \]
    and $F(R) > 0$ since $C(R) \leq C_{X} < \infty$ by Lemma \ref{lem: basic shape of C(R)}.

    \textit{Convergence.} $\{B^{(l)}\}$ is nonincreasing and bounded below by $F(R)$, hence converges to some $B^{*} \in [F(R), B^{(0)}]$. For $l \geq 1$, $s^{(l)}$ minimizes $B_{M}(K^{(l)}, \cdot)$, so that $\mathcal{C}(K^{(l)}) = -\log B^{(l)}$. This sequence is nondecreasing and converges to $-\log B^{*} \leq -\log F(R) = C(R)$.
\end{proof}

\begin{remark}
\label{rem: s uniformly bounded away from 0 1}
    Lemma \ref{lem: interiority of s^*} ensures that $s^{(l)}$ lies in the interior of $(0,1)$ for each iteration $l \geq 1$ of Algorithm \ref{alg: Algorithm for reformulated problem}. Moreover, by Theorem \ref{the: convergence of algorithm_reformulated problem}, $\mathcal{C}(K^{(l)}) = -\log B^{(l)} \geq -\log B^{(0)} > 0$ for every $l \geq 1$, so that $s^{(l)}$ is uniformly bounded away from $0$ and $1$:
    \[
    s^{(l)} \in \Big[\frac{-\log B^{(0)}}{\log \Gamma}, \ 1 - \frac{-\log B^{(0)}}{\log \Gamma} \Big] \subset (0,1), \quad \forall \ l \geq 1.
    \]
\end{remark}

\section{Numerical Experiments}\label{sec: numerical experiments}

\subsection{The \texorpdfstring{$C(R)$}{C(R)} Curve}\label{sec: exp C(R)}

We compute $C(R)$ for the six sources in Table~\ref{tab: sources}, all with $P_{Y}(0) = P_{Y}(1) = \frac{1}{2}$. In sources A, C, D and E, some symbols share a likelihood ratio, so that $k < |\mathcal{X}|$ and $H(V) < H(X)$, while in source B all likelihood ratios are distinct, so that $H(V) = H(X)$. Source A is symmetric, C is asymmetric, D has a weak signal, and E has a larger alphabet. Source F is the source of Example~\ref{exa: not concave} with $a = 10^{-3}$.

\begin{table}[H]
    \centering
    \footnotesize
    \caption{Sources used in Section~\ref{sec: exp C(R)}, with $P_{Y}(0) = P_{Y}(1) = \frac{1}{2}$ and $a = 10^{-3}$ (natural logarithms).}
    \label{tab: sources}
    \begin{tabular}{@{}cllcccc@{}}
    \toprule
     & $P_{X|0}$ & $P_{X|1}$ & $k$ & $H(V)$ & $H(X)$ & $C_{X}$ \\
    \midrule
    A & $(0.4,0.2,0.2,0.2)$ & $(0.2,0.2,0.2,0.4)$ & 3 & 1.089 & 1.366 & 0.0349 \\
    B & $(0.5,0.3,0.2)$ & $(0.2,0.3,0.5)$ & 3 & 1.096 & 1.096 & 0.0699 \\
    C & $(0.5,0.2,0.2,0.1)$ & $(0.1,0.3,0.3,0.3)$ & 3 & 1.030 & 1.376 & 0.1208 \\
    D & $(0.3,0.25,0.25,0.2)$ & $(0.25,0.25,0.25,0.25)$ & 3 & 1.037 & 1.384 & 0.0025 \\
    E & $(0.2,0.15,0.15,0.1,0.1,0.1,0.1,0.1)$ & $(0.05,0.075,0.075,0.1,0.1,0.2,0.2,0.2)$ & 4 & 1.277 & 2.066 & 0.0658 \\
    F & $(1-a,a)$ & $(a,1-a)$ & 2 & 0.693 & 0.693 & 2.7612 \\
    \bottomrule
    \end{tabular}
\end{table}

For each source we set $M = k+1$, which suffices by Theorem~\ref{the: sharpened cardinality}, and run Algorithm~\ref{alg: Algorithm for reformulated problem} at $50$ evenly spaced rates in $(0, H(V)]$ and, when $H(V) < H(X)$, at $10$ rates in $(H(V), H(X)]$. Since maximizing $\mathcal{C}(K)$ amounts to minimizing a concave function (Remark~\ref{rem: concavity of B}), the algorithm may stop at a local optimum, so at each rate we keep the best of $k+3$ initializations: two random encoders; the time-sharing encoder of Lemma~\ref{lem: time sharing} between $K^{V}$ and a constant encoder; and, for each $j = 1, \cdots, k$, the encoder that outputs $j$ with probability $\theta$ when $X \in \mathcal{X}_{j}$ and a constant symbol otherwise. In the last two, $\theta$ is the largest value with $I_{K}(U;X) \leq R$, and the encoder is mixed with weight $10^{-9}$ with a random feasible encoder, so that it lies in $\mathcal{K}^{(M)}_{R,+}$. Every plotted value is attained by an explicit encoder and is therefore a lower bound on $C(R)$.

\begin{figure}[t]
    \centering
    \includegraphics[width=\linewidth]{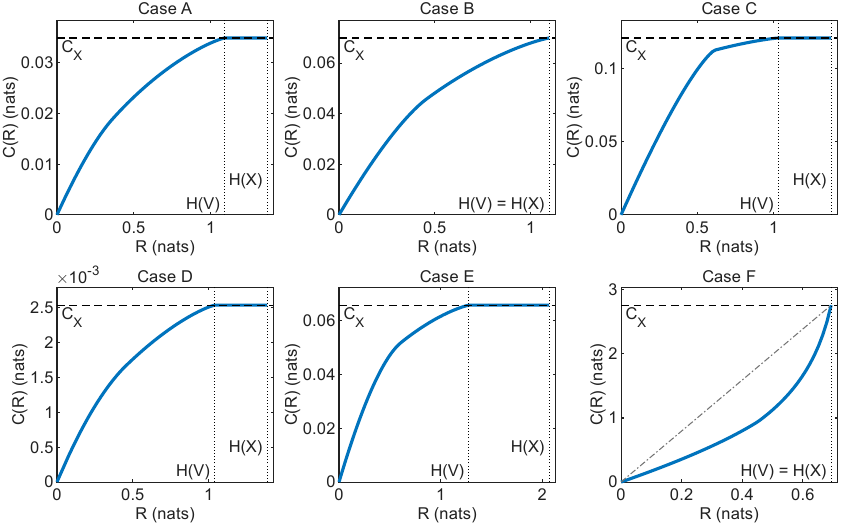}
    \caption{$C(R)$ for the sources of Table~\ref{tab: sources}. Dashed: $C_{X}$. Dotted: $H(V)$ and $H(X)$. Dash-dotted (F): the chord from $(0,0)$ to $(H(V), C_{X})$.}
    \label{fig: C(R) curves}
\end{figure}

Figure~\ref{fig: C(R) curves} shows the results. In every case, $C(R)$ is strictly increasing on $[0, H(V)]$ and equals $C_{X}$ on $[H(V), H(X)]$, in agreement with Theorem~\ref{the: saturation and strict monotonicity}: rate beyond $H(V)$ only distinguishes symbols with the same likelihood ratio, which does not help the test. For source B, $H(V) = H(X)$ and the curve has no flat part. Source F illustrates Example~\ref{exa: not concave}: its whole curve lies below the chord from $(0,0)$ to $(H(V), C_{X})$. Since B is concave while both B and F have $H(V) = H(X)$, the non-concavity is not a consequence of $H(V) = H(X)$.

\subsection{Structure of the Optimal Encoder}\label{sec: exp structure}

We examine the optimal encoders of source C in Table~\ref{tab: sources}, whose classes $\mathcal{X}_{1} = \{1\}$, $\mathcal{X}_{2} = \{2,3\}$, $\mathcal{X}_{3} = \{4\}$ have likelihood ratios $\rho_{1} = 0.2$, $\rho_{2} = 1.5$, $\rho_{3} = 3$, and whose curve in Figure~\ref{fig: C(R) curves} has a kink near $R = 0.61$. We take three rates: $R = 0.30$; $R = h(P_{V}(1)) = h(0.3) \approx 0.61$, where $h$ is the binary entropy function, which is the rate of the deterministic encoder that separates $\mathcal{X}_{1}$ from the other classes; and $R = 0.80$. At each rate we compute the best encoder $K^{*}$ as in Section~\ref{sec: exp C(R)} and map it, as in Lemma~\ref{lem: reduction to V}(i), to the encoder of $V$
\[
\bar{K}^{*}(u|j) = \sum_{x \in \mathcal{X}_{j}} \frac{P_{X}(x)}{P_{V}(j)} K^{*}(u|x),
\]
which has the same $P_{U|0}$ and $P_{U|1}$ and no larger rate. Outputs with $P_{U}(u) \leq 10^{-4}$ are dropped, outputs with equal likelihood ratio are merged, and the remaining outputs are sorted by $P_{U|1}(u)/P_{U|0}(u)$.

\begin{figure}[t]
    \centering
    \includegraphics[width=\linewidth]{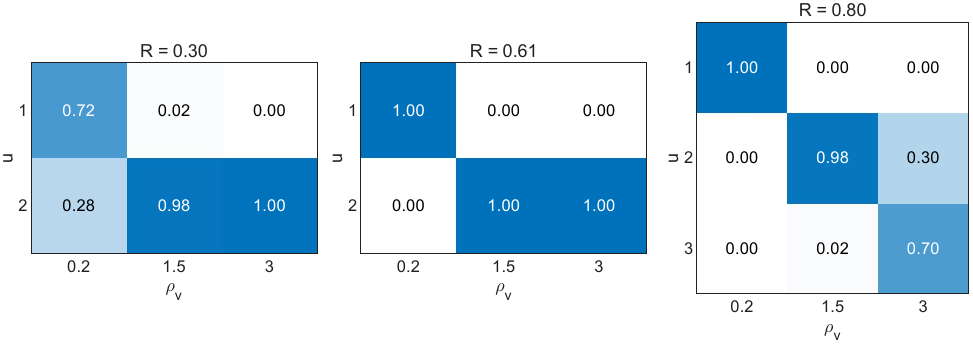}
    \caption{Optimal encoders $\bar{K}^{*}(u|v)$ of source C at three rates. Columns: classes $v$, labeled by $\rho_{v}$. Rows: used outputs $u$, sorted by $P_{U|1}(u)/P_{U|0}(u)$.}
    \label{fig: structure}
\end{figure}

Figure~\ref{fig: structure} shows three features. First, the optimal encoders are randomized: at $R = 0.30$, class $1$ is sent to output $1$ with probability $0.72$, and at $R = 0.80$, class $3$ is split between outputs $2$ and $3$ with probabilities $0.30$ and $0.70$. This differs from quantizer design under an alphabet constraint, where deterministic likelihood-ratio quantizers are optimal \cite{tsitsiklis1993}. Second, the encoders are monotone in the likelihood ratio: each output collects a contiguous range of classes in the order of $\rho_{v}$, so that $\bar{K}^{*}$ is a randomized likelihood-ratio quantizer. The same randomized, monotone structure appears for source E. Third, at $R = h(0.3)$ the optimal encoder is the deterministic quantizer $\{\mathcal{X}_{1}\}, \{\mathcal{X}_{2} \cup \mathcal{X}_{3}\}$, and this rate is exactly where the curve of source C has its kink. Below it, the encoder randomizes the separation of $\mathcal{X}_{1}$; above it, $\mathcal{X}_{1}$ stays separated and the encoder randomizes the separation of $\mathcal{X}_{2}$ and $\mathcal{X}_{3}$ with a third output.

\subsection{Real Dataset Practice}\label{sec: exp real}

We apply the Chernoff bottleneck to topic detection, where a document is observed word by word and every word is compressed separately before the decision, as in decentralized detection \cite{tsitsiklis1993dd, chamberland2003}. We use the 20 Newsgroups corpus \cite{lang1995} in its date-sorted train/test split, with the topics \texttt{rec.sport.hockey} ($Y = 0$) and \texttt{sci.med} ($Y = 1$), which have $598$ and $594$ training documents and $399$ and $393$ test documents. The alphabet $\mathcal{X}$ consists of the $1000$ most frequent words in the training documents of the two topics, and all other words are discarded. From the training word counts $N_{y}(x)$ we estimate
\[
P_{X|y}(x) = \frac{N_{y}(x) + 1}{\sum_{x'} N_{y}(x') + |\mathcal{X}|}, \quad y = 0, 1,
\]
which satisfies \eqref{eq: condition no instantly recognizable x}, and set $P_{Y}(0) = P_{Y}(1) = \frac{1}{2}$. This gives $k = 826$, $H(X) = 5.80$ and $C_{X} = 0.158$ nats per word.

We compute encoders with $M = 16$ outputs at $R \in \{0.25, 0.5, 1, 2\}$ nats per word by Algorithm~\ref{alg: Algorithm for reformulated problem}, keeping at each rate the best of four initializations: two random encoders, the encoder of the previous rate, and a likelihood-ratio quantizer that divides the words, in the order of $\rho(x)$, into $M - 1$ groups of equal $P_{X}$-mass and is time-shared with a constant symbol so that $I_{K}(U;X) \leq R$. Since $M < k+1$, the resulting values of $\mathcal{C}(K)$, reported as $C(R)$ in Table~\ref{tab: newsgroups}, are lower bounds on $C(R)$. Each word is encoded independently by the same $K$, and the decision on $U^{n} = (U_{1}, \cdots, U_{n})$ is the MAP test
\[
\hat{Y} = 1 \iff \sum_{i=1}^{n} \log \frac{P^{K}_{U|1}(U_{i})}{P^{K}_{U|0}(U_{i})} > 0,
\]
where $P^{K}_{U|y}$ is computed from the training estimates. Without compression ($U_{i} = X_{i}$), this test is the naive Bayes classifier. We measure the error on the test data in two ways, in both cases averaged over the two topics: (a) the $n$ words are drawn i.i.d.\ from the word frequencies of each topic in the test documents ($2 \times 10^{5}$ repetitions per $n$ and topic), so that the model of Section~\ref{sec: Setup} holds up to the mismatch between training and test data; (b) the $n$ words are drawn with replacement from a single test document ($200$ repetitions per document), which is the actual task.

\begin{figure}[t]
    \centering
    \includegraphics[width=\linewidth]{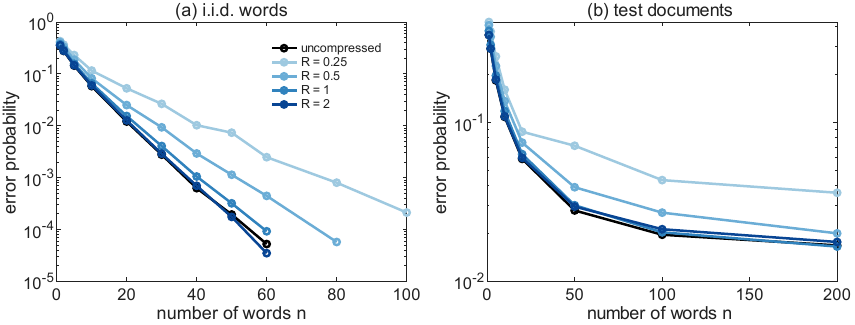}
    \caption{Test error of topic detection (hockey vs.\ medicine) from $n$ words, each compressed to $R$ nats by the Chernoff bottleneck, compared with the uncompressed naive Bayes classifier. (a) Words drawn i.i.d.\ from each topic. (b) Words drawn from each test document.}
    \label{fig: newsgroups}
\end{figure}

\begin{table}[t]
    \centering
    \caption{Rate, Chernoff information, and the fitted decay rate of the error in Figure~\ref{fig: newsgroups}(a), i.e., the slope of $-\log P_{e}$ in $n$ over $10^{-4} \leq P_{e} \leq 10^{-1}$ (nats).}
    \label{tab: newsgroups}
    \setlength{\tabcolsep}{5pt}
    \renewcommand{\arraystretch}{1.1}
    \begin{tabular}{@{}ccccc@{}}
    \toprule
    $R$ & $R / H(X)$ & $C(R)$ & $C(R) / C_{X}$ & fitted slope \\
    \midrule
    $0.25$ & $0.04$ & $0.073$ & $0.47$ & $0.071$ \\
    $0.5$ & $0.09$ & $0.113$ & $0.71$ & $0.107$ \\
    $1$ & $0.17$ & $0.142$ & $0.90$ & $0.128$ \\
    $2$ & $0.34$ & $0.154$ & $0.98$ & $0.141$ \\
    uncompressed & $1$ & $0.158$ & $1$ & $0.151$ \\
    \bottomrule
    \end{tabular}
\end{table}

The computed encoders use $3$, $8$, $12$ and $12$ of the $16$ outputs, and all of them are randomized likelihood-ratio quantizers, as in Section~\ref{sec: exp structure}: each output collects words of similar $\rho(x)$, and the assignment is monotone in $\rho(x)$. Table~\ref{tab: newsgroups} shows how much of the exponent is kept: at $R = 1$ nat per word, which is $17\%$ of $H(X)$, the encoder keeps $90\%$ of $C_{X}$, and even at $R = 0.25$, which is $4\%$ of $H(X)$, it keeps $47\%$.

Figure~\ref{fig: newsgroups}(a) confirms that $C(R)$ predicts the performance on real word statistics: the error decays exponentially in $n$, with fitted rates within about $10\%$ of $C(R)$, the small shortfall being due to the polynomial factor in $n$ and the mismatch between training and test data. Consequently, compression to $R$ nats per word multiplies the number of words needed for a given error by about $C_{X}/C(R)$, i.e., by $1.11$ at $R = 1$ and $1.40$ at $R = 0.5$. Figure~\ref{fig: newsgroups}(b) shows the actual task. The curves are ordered as $C(R)$; at $R = 1$ the error is close to that of the uncompressed classifier for all $n$, and at $R = 2$ the two are indistinguishable. The error levels off near $2\%$ rather than decaying as $e^{-nC}$, because the words of a document follow the document's own distribution rather than being i.i.d.\ given the topic; the uncompressed classifier has the same floor, so it is not caused by compression. Finally, the benefit of the mutual information constraint is largest at low rates: at $R = 0.25$, the best deterministic likelihood-ratio quantizer with at most five cells and entropy at most $R$ has Chernoff information $0.059$, against $0.073$ for the Chernoff bottleneck, whereas at $R \geq 1$ both are within a few percent of $C_{X}$.

\section{Conclusion}\label{sec: conclusion}

We studied the MI constrained Chernoff bottleneck, which replaces the relevance $I(U;Y)$ of the classical IB by the Chernoff information of the downstream binary test. The curve $C(R)$ is governed by the variable $V$ that merges the symbols of $X$ with equal likelihood ratio: $C(R)$ increases strictly up to $R = H(V)$, where it reaches the uncompressed exponent $C_{X}$, and stays constant beyond, since additional rate only distinguishes symbols with equal likelihood ratios (Theorem~\ref{the: saturation and strict monotonicity}). For $R \leq H(V)$, every optimal encoder uses the full rate budget (Corollary~\ref{cor: rate constraint active}). Unlike the classical IB curve, which is concave by time-sharing, the $C(R)$ curve need not be concave (Example~\ref{exa: not concave}). An optimal encoder exists with at most $|\mathcal{X}| + 1$ outputs (Theorem~\ref{the: supp U leq X+1}) and, by the reduction to $V$, with at most $k + 1$ outputs, where $k$ is the number of distinct likelihood ratios (Theorem~\ref{the: sharpened cardinality}).

Since the Chernoff information is not an expectation of a per-letter quantity, the classical Blahut--Arimoto algorithm does not apply directly. We therefore proposed Algorithm~\ref{alg: Algorithm for reformulated problem}, which alternates between the encoder $K$ and the parameter $s$. For fixed $s$, the encoder minimizes the tangent plane of the concave function $B_{M}(\cdot, s)$ at the current encoder under the MI constraint (Remark~\ref{rem: concavity of B}), which is a rate--distortion problem solved by a generalized Blahut--Arimoto algorithm (Algorithm~\ref{alg: GBA for K_U|X}) with a bisection over the trade-off parameter $\lambda$. Algorithm~\ref{alg: GBA for K_U|X} converges to a global minimizer of the Lagrangian at rate $O(1/t)$ with a stopping certificate (Theorem~\ref{the: hatL global converge}, Corollary~\ref{cor: GBA stopping certificate}), and the bisection has an explicit bracket and certificate (Lemma~\ref{lem: bisection search of lambda}). For fixed encoder $K$, the parameter $s$ is the unique root of a nonlinear equation (Lemma~\ref{lem: unique S^* achieve minimum}). The iterates of Algorithm~\ref{alg: Algorithm for reformulated problem} remain feasible, $s$ stays uniformly bounded away from $0$ and $1$, and the Chernoff information is nondecreasing and convergent (Lemma~\ref{lem: interiority of s^*}, Theorem~\ref{the: convergence of algorithm_reformulated problem}, Remark~\ref{rem: s uniformly bounded away from 0 1}). Since the problem amounts to minimizing a concave function, the limit need not be globally optimal, which we mitigate by structured initializations.

The numerical experiments confirm the theory. Figure~\ref{fig: C(R) curves} shows the computed $C(R)$ curves of six synthetic sources: each curve increases strictly up to $H(V)$ and stays at $C_{X}$ beyond; for source B, whose likelihood ratios are all distinct, $H(V) = H(X)$ and the curve has no flat part; and the curve of source F, which is the source of Example~\ref{exa: not concave}, lies entirely below its chord (Section~\ref{sec: exp C(R)}). Figure~\ref{fig: structure} shows the computed encoders of source C at three rates. They are randomized likelihood-ratio quantizers, each output collecting a contiguous range of likelihood ratios, in contrast to quantizer design under an alphabet constraint, where deterministic likelihood-ratio quantizers are optimal. At the kink of $C(R)$, the encoder is exactly a deterministic quantizer (Section~\ref{sec: exp structure}). In a real-data application, topic detection on the 20 Newsgroups corpus, compressing each word to $1$ nat, only $17\%$ of its entropy, keeps $90\%$ of the error exponent (Table~\ref{tab: newsgroups}). When the words are drawn i.i.d.\ from the test word frequencies, the error decays exponentially at a rate within about $10\%$ of $C(R)$ (Figure~\ref{fig: newsgroups}(a)). On the actual test documents, the errors are ordered as $C(R)$, and at $1$ nat the classifier is nearly as accurate as the uncompressed one (Figure~\ref{fig: newsgroups}(b)). The advantage over deterministic likelihood-ratio quantization is largest at low rates (Section~\ref{sec: exp real}).

More broadly, the Chernoff bottleneck judges a representation by how fast the error of the downstream decision decays, rather than by an average measure of relevance. We believe this makes it a principled criterion for learning compact representations for inference, and hope that the theory, algorithm, and experiments presented here serve as a foundation for its further study.

\newpage

\appendix

\section{Several Facts Used in the Proof}

\begin{fact}[Young's inequality]
\label{fact: Young's inequality}
For any $x,y > 0$, and $p,q > 1$ with $\frac{1}{p} + \frac{1}{q} = 1$, we have
\[
xy \leq \frac{x^{p}}{p} + \frac{y^{q}}{q},
\]
equality holds when $x^p = y^q$.
\end{fact}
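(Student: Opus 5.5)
We will derive the inequality from the strict concavity of the logarithm, i.e.\ the weighted arithmetic--geometric mean inequality with two weights. Put $a = x^{p} > 0$, $b = y^{q} > 0$ and $\theta = 1/p \in (0,1)$, so that $1 - \theta = 1/q$ by the conjugacy relation $\frac{1}{p} + \frac{1}{q} = 1$. Then the claim reads
\[
a^{\theta} b^{1-\theta} \leq \theta a + (1-\theta) b .
\]
This is also the form in which the fact is used throughout the paper, with $\theta = s$.

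First, since $\log$ is concave on $(0,\infty)$, Jensen's inequality for two points gives
\[
\log\big(\theta a + (1-\theta) b\big) \geq \theta \log a + (1-\theta)\log b = \log\big(a^{\theta} b^{1-\theta}\big).
\]
Exponentiating, which is monotone, gives the inequality. Substituting back $a^{\theta} = x$ and $b^{1-\theta} = y$ turns the left side into $xy$ and the right side into $\frac{x^{p}}{p} + \frac{y^{q}}{q}$.

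For the equality case, note that $\log'' (t) = -t^{-2} < 0$, so $\log$ is strictly concave. With $\theta \in (0,1)$, equality in the two-point Jensen inequality therefore holds if and only if $a = b$, that is, $x^{p} = y^{q}$. Conversely, if $a = b$ then both sides equal $a$, so equality indeed holds.

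There is no real obstacle here. The only points needing care are that $x, y > 0$, so the logarithms are defined, and that $p, q > 1$ ensures $\theta \in (0,1)$, which is what strict concavity requires for the equality characterization. As a calculus alternative, one could fix $y$ and minimize $\phi(x) = \frac{x^{p}}{p} + \frac{y^{q}}{q} - xy$ over $x > 0$. The unique critical point is $x^{p-1} = y$, equivalently $x^{p} = y^{q}$, and $\phi$ vanishes there. Since $\phi$ is strictly convex, this point is the global minimum, which gives the same conclusion.
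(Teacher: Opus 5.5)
Your proof is correct. Setting $a=x^{p}$, $b=y^{q}$, $\theta=1/p$ reduces the claim to the two-point weighted AM--GM inequality $a^{\theta}b^{1-\theta}\le\theta a+(1-\theta)b$, which follows from concavity of $\log$. Strict concavity with $\theta\in(0,1)$ then gives equality exactly when $a=b$. Your calculus alternative is also sound, since $(p-1)q=p$ makes $x^{p-1}=y$ equivalent to $x^{p}=y^{q}$.

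The paper gives no proof of this fact; it lists it as a standard result in the appendix. There is therefore no competing argument to compare against.

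One point is worth noting. You establish the equality case as an ``if and only if'', which is stronger than the stated ``equality holds when $x^{p}=y^{q}$''. That converse is what the paper tacitly relies on in the base case of Theorem~\ref{the: convergence of algorithm_reformulated problem}. There, $B^{(0)}=1$ is taken to force $P^{K^{(0)}}_{U|0}\equiv P^{K^{(0)}}_{U|1}$, which requires the ``only if'' direction applied termwise with $x=P^{K^{(0)}}_{U|0}(u)^{s^{(0)}}$ and $y=P^{K^{(0)}}_{U|1}(u)^{1-s^{(0)}}$.
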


\begin{fact}[H\"older's inequality]
\label{fact: H\"older's inequality}
For nonnegative $a_{1}, \cdots, a_{n}$, $b_{1}, \cdots, b_{n}$, and $p, q > 1$ with $\frac{1}{p} + \frac{1}{q} = 1$,
\[
\sum_{i=1}^{n} a_{i} b_{i} \leq \Big(\sum_{i=1}^{n} a_{i}^{p}\Big)^{1/p} \Big(\sum_{i=1}^{n} b_{i}^{q}\Big)^{1/q},
\]
with equality if and only if there exist $\alpha, \beta \geq 0$, not both zero, such that $\alpha a_{i}^{p} = \beta b_{i}^{q}$ for all $i$.
\end{fact}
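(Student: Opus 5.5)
We derive the inequality from Young's inequality (Fact~\ref{fact: Young's inequality}) after normalizing both sequences, and read off the equality case termwise. Set $A = (\sum_i a_i^p)^{1/p}$ and $B = (\sum_i b_i^q)^{1/q}$.

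\emph{Degenerate case.} If $A = 0$, all $a_i = 0$. Both sides of the inequality vanish, and the equality condition holds with $(\alpha,\beta) = (1,0)$. The case $B = 0$ is symmetric, with $(\alpha,\beta) = (0,1)$.

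\emph{Main case.} Assume $A, B > 0$ and put $x_i = a_i/A$ and $y_i = b_i/B$, so that $\sum_i x_i^p = \sum_i y_i^q = 1$.
\begin{enumerate}
\item First, extend Young's inequality $x_i y_i \leq x_i^p/p + y_i^q/q$ to nonnegative arguments. If $x_i y_i = 0$ the inequality is trivial, and otherwise it is exactly Fact~\ref{fact: Young's inequality}.
\item Summing over $i$ gives $\sum_i x_i y_i \leq 1/p + 1/q = 1$. Multiplying by $AB$ yields the inequality.
\end{enumerate}

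\emph{Equality.} Since the bound is a sum of termwise inequalities, equality in Hölder holds if and only if $x_i y_i = x_i^p/p + y_i^q/q$ for every $i$. The main point is an ``if and only if'' form of Young's equality case on $[0,\infty)^2$; Fact~\ref{fact: Young's inequality} only states sufficiency.
\begin{enumerate}
\item Fix $y \geq 0$ and consider $f(x) = x^p/p + y^q/q - xy$ on $[0,\infty)$.
\item The function $f$ is strictly convex because $p > 1$. Its derivative $x^{p-1} - y$ vanishes only at $x = y^{1/(p-1)} = y^{q-1}$.
\item At that point $f = 0$, so $f \geq 0$ with equality exactly when $x^p = y^{(q-1)p} = y^q$. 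This argument also covers the cases where $x$ or $y$ is zero.
\end{enumerate}
Hence equality holds if and only if $a_i^p/A^p = b_i^q/B^q$ for all $i$. This is the stated condition with $\alpha = B^q > 0$ and $\beta = A^p > 0$, which proves necessity.

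\emph{Sufficiency.} Suppose $\alpha a_i^p = \beta b_i^q$ for all $i$ with $\alpha, \beta \geq 0$ not both zero.
\begin{enumerate}
\item If $\beta = 0$, then $\alpha > 0$ forces all $a_i = 0$, and equality holds trivially. The case $\alpha = 0$ is symmetric.
\item If $\alpha, \beta > 0$, summing over $i$ gives $\alpha A^p = \beta B^q$. Therefore $a_i^p/A^p = b_i^q/B^q$ for every $i$, which is the termwise equality condition above, so Hölder holds with equality.
\end{enumerate}
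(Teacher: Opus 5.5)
Your argument is correct and complete. The paper gives no proof of this statement; it is quoted in the appendix as a standard fact, so there is no competing route to compare against. What you wrote is the usual textbook derivation: normalize by $A$ and $B$, apply Young termwise, and sum.

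Your treatment of the equality case is also sound. Strict convexity of $x \mapsto x^{p}/p + y^{q}/q - xy$ on $[0,\infty)$, with unique minimizer $x = y^{q-1}$, upgrades the one-directional equality clause of Fact~\ref{fact: Young's inequality} to an ``if and only if'' on $[0,\infty)^{2}$, boundary cases included. This necessity direction is exactly the one the paper relies on in the strictness part of Lemma~\ref{lem: holder BsK geq BsX}. There, equality is used to produce $c_{u} > 0$ with $K(u|x)P_{X|1}(x) = c_{u} K(u|x) P_{X|0}(x)$.

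Three cosmetic remarks:
\begin{itemize}
\item In your sufficiency part, the subcases $\beta = 0$ and $\alpha = 0$ are vacuous once $A, B > 0$.
\item In the subcase $\alpha, \beta > 0$, the identity $\alpha A^{p} = \beta B^{q}$ forces $A = 0$ if and only if $B = 0$. Dividing by $A^{p}$ and $B^{q}$ is therefore legitimate precisely because the degenerate case was handled first.
\item Your convexity argument reproves Young's inequality itself, so the earlier appeal to Fact~\ref{fact: Young's inequality} could be dropped.
\end{itemize}
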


\begin{fact}[Weighted average inequality]
\label{fact: Weighted average inequality}
For finite $k$ and $a_{i}, b_{i} > 0$, $i = 1, \cdots, k$, we have
\[
\min_{1 \leq i \leq k} \frac{a_{i}}{b_{i}} \leq \frac{\sum_{i=1}^{k} a_{i}}{\sum_{i=1}^{k} b_{i}} \leq \max_{1 \leq i \leq k} \frac{a_{i}}{b_{i}}.
\]
\end{fact}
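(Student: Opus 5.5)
The plan is to bound each term separately and then sum, using only the positivity of the $b_{i}$. Set
\[
m \stackrel{\Delta}{=} \min_{1 \leq i \leq k} \frac{a_{i}}{b_{i}}, \qquad M \stackrel{\Delta}{=} \max_{1 \leq i \leq k} \frac{a_{i}}{b_{i}},
\]
both of which are well defined because $k$ is finite and every $b_{i} > 0$. For each fixed index $i$ we have $m \leq a_{i}/b_{i} \leq M$ by the definitions of the minimum and maximum. Multiplying through by $b_{i} > 0$ preserves the inequalities, giving the termwise bounds $m\, b_{i} \leq a_{i} \leq M\, b_{i}$ for $i = 1, \cdots, k$.

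Next we will sum these $k$ inequalities over $i$. Since summation preserves inequalities, this yields
\[
m \sum_{i=1}^{k} b_{i} \leq \sum_{i=1}^{k} a_{i} \leq M \sum_{i=1}^{k} b_{i}.
\]
Finally, we divide by $\sum_{i=1}^{k} b_{i}$, which is strictly positive because each $b_{i} > 0$. This gives exactly the claimed chain $m \leq \sum_{i} a_{i} / \sum_{i} b_{i} \leq M$.

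There is no genuine obstacle in this argument. The only point needing care is that every multiplication and division is by a strictly positive quantity, so that no inequality flips direction. This is guaranteed by the hypothesis $b_{i} > 0$. The hypothesis $a_{i} > 0$ is not actually needed for the inequality, but it ensures that all the ratios, and hence $m$ and $M$, are positive. That positivity is how the fact is used later, for example in Lemma~\ref{lem: uniform bound of w(u)}, where it gives the bound $\ell \leq P^{K}_{U|1}(u)/P^{K}_{U|0}(u) \leq L$.
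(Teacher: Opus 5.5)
Your argument is correct and complete. Bounding each $a_i$ between $m\,b_i$ and $M\,b_i$, summing, and dividing by $\sum_i b_i > 0$ is the standard justification, and since the paper states this as a Fact without proof, there is no different route to compare against; your side remark that only $b_i > 0$ is needed is also accurate.
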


\begin{fact}[Heine-Borel Theorem]
\label{fact: Heine-Borel Theorem}
    A subset of $\mathbb{R}^{n}$ is compact if and only if it is closed and bounded. \cite[Theorem~2.41]{rudin1976}
\end{fact}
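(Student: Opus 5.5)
The statement is the classical Heine--Borel theorem, cited from \cite[Theorem~2.41]{rudin1976}, so I only sketch the standard argument and prove the two directions separately. For the direction \emph{compact $\Rightarrow$ closed and bounded}, let $S \subseteq \mathbb{R}^{n}$ be compact.

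\emph{Boundedness.} The open balls $\{B(0,m)\}_{m \geq 1}$ cover $S$. A finite subcover exists, so $S \subseteq B(0,m_{0})$ for some $m_{0}$.

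\emph{Closedness.} Fix $p \notin S$. For each $q \in S$, take disjoint open balls $B(q, r_{q})$ and $B(p, r_{q})$ with $r_{q} = |p-q|/3$. Finitely many balls $B(q_{i}, r_{q_{i}})$ cover $S$. Then $B(p, \min_{i} r_{q_{i}})$ is a neighborhood of $p$ disjoint from $S$. Hence the complement of $S$ is open.

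For the converse, \emph{closed and bounded $\Rightarrow$ compact}, I first show that every closed box $Q_{0} = [a_{1},b_{1}] \times \cdots \times [a_{n},b_{n}]$ is compact.
\begin{enumerate}[label=(\roman*)]
    \item Suppose, for contradiction, that some open cover $\{G_{\alpha}\}$ of $Q_{0}$ has no finite subcover.
    \item Bisect every edge of $Q_{0}$, giving $2^{n}$ sub-boxes. At least one of them, call it $Q_{1}$, admits no finite subcover.
    \item Iterating gives nested closed boxes $Q_{0} \supseteq Q_{1} \supseteq \cdots$ with $\operatorname{diam} Q_{m} = 2^{-m} \operatorname{diam} Q_{0}$, none finitely covered.
    \item By the nested interval property in each coordinate, which is the completeness of $\mathbb{R}$ via suprema, there is a point $x^{*} \in \bigcap_{m} Q_{m}$.
    \item Then $x^{*} \in G_{\alpha_{0}}$ for some $\alpha_{0}$, and $G_{\alpha_{0}}$ contains a ball $B(x^{*}, r)$.
    \item Once $\operatorname{diam} Q_{m} < r$, we get $Q_{m} \subseteq G_{\alpha_{0}}$, so $Q_{m}$ is finitely covered, a contradiction.
\end{enumerate}

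Finally, let $S$ be closed and bounded, and choose a box $Q_{0} \supseteq S$. Take any open cover of $S$ and add the open set $\mathbb{R}^{n} \setminus S$; this covers $Q_{0}$. Extract a finite subcover of $Q_{0}$ and discard $\mathbb{R}^{n} \setminus S$, leaving a finite subcover of $S$.

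The only genuinely nontrivial step is the nested-box argument, specifically the existence of $x^{*}$. This is where completeness of $\mathbb{R}$ enters: I would take $x^{*}_{i} = \sup_{m} a^{(m)}_{i}$, where $a^{(m)}_{i}$ are the left endpoints of the $i$th edges of $Q_{m}$.
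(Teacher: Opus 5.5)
Your proof is correct. The paper gives no argument of its own and only cites Rudin's Theorem 2.41.

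For the direction \emph{closed and bounded $\Rightarrow$ compact}, your route is essentially Rudin's. You prove compactness of a box by repeated bisection, using the nested-cell property obtained from suprema of left endpoints. You then use that a closed subset of a compact set is compact, by adjoining the complement to the cover.

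For the direction \emph{compact $\Rightarrow$ closed and bounded}, you argue directly with open covers: the exhausting balls $B(0,m)$ give boundedness, and the separating-balls argument gives closedness. Rudin instead passes through the intermediate property that every infinite subset of $E$ has a limit point in $E$. He then shows that an unbounded or non-closed set contains an infinite subset with no limit point in $E$. Your version is shorter and works verbatim in any metric space. Rudin's detour additionally yields the equivalence with limit-point compactness, which the paper never needs.

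In fact the paper only uses the closed-and-bounded $\Rightarrow$ compact direction, to get compactness of $\mathcal{K}^{(M)}_{R}$ in Lemma~\ref{lem: K_R non empty compact convex}. That is exactly the part where your argument coincides with the cited one.
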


\begin{fact}[Joint convexity of relative entropy]
\label{fact: joint convexity of relative entropy}
    For probability distributions $P_{i}, Q_{i}$, $i = 1, \cdots, n$, on a finite set and weights $\pi_{i} \geq 0$ with $\sum_{i} \pi_{i} = 1$,
    \[
    \mathrm{D}\Big(\sum_{i=1}^{n} \pi_{i} P_{i} \,\Big\|\, \sum_{i=1}^{n} \pi_{i} Q_{i}\Big) \leq \sum_{i=1}^{n} \pi_{i} \mathrm{D}(P_{i} \| Q_{i}).
    \]
    \cite[Theorem~2.7.2]{cover2006}
\end{fact}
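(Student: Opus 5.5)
The plan is to reduce the statement to a pointwise inequality over the finite alphabet, namely the log-sum inequality, and then sum over symbols. Write $\mathcal{A}$ for the common finite set. For each $a \in \mathcal{A}$ set $p_{i} = \pi_{i} P_{i}(a)$ and $q_{i} = \pi_{i} Q_{i}(a)$. The claim will follow from
\[
\Big(\sum_{i} p_{i}\Big) \log \frac{\sum_{i} p_{i}}{\sum_{i} q_{i}} \leq \sum_{i} p_{i} \log \frac{p_{i}}{q_{i}},
\]
since the right-hand side equals $\sum_{i} \pi_{i} P_{i}(a) \log \frac{P_{i}(a)}{Q_{i}(a)}$ (the weights $\pi_{i}$ cancel inside the logarithm). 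Summing over $a \in \mathcal{A}$ then gives exactly $\mathrm{D}(\sum_{i} \pi_{i} P_{i} \| \sum_{i} \pi_{i} Q_{i}) \leq \sum_{i} \pi_{i} \mathrm{D}(P_{i} \| Q_{i})$.

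To prove the log-sum inequality, assume first that all $q_{i} > 0$, and let $Q = \sum_{i} q_{i}$. The function $f(t) = t \log t$ is convex on $[0,\infty)$, with $f(0) = 0$. Apply Jensen's inequality with weights $q_{i}/Q$ at the points $t_{i} = p_{i}/q_{i}$:
\[
\sum_{i} \frac{q_{i}}{Q} f\Big(\frac{p_{i}}{q_{i}}\Big) \geq f\Big(\sum_{i} \frac{q_{i}}{Q} \frac{p_{i}}{q_{i}}\Big) = f\Big(\frac{\sum_{i} p_{i}}{Q}\Big).
\]
Multiplying both sides by $Q$ gives the inequality.

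The only real obstacle is the bookkeeping of degenerate cases, which I will handle with the conventions $0 \log \frac{0}{q} = 0$ and $p \log \frac{p}{0} = +\infty$ for $p > 0$.
\begin{itemize}
\item Indices with $\pi_{i} = 0$, or with $p_{i} = q_{i} = 0$, contribute zero to both sides and can be dropped.
\item If some remaining index has $q_{i} = 0 < p_{i}$, the right-hand side is $+\infty$ and there is nothing to prove.
\item If every remaining $q_{i}$ is positive, the Jensen argument applies.
\item If the left side has $\sum_{i} q_{i} = 0$ while $\sum_{i} p_{i} > 0$, then some $q_{i} = 0 < p_{i}$, which is the case already covered.
\end{itemize}
With these cases checked, the summation over $a$ is routine and completes the argument.
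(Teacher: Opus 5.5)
Your proof is correct and follows essentially the same route as the source the paper relies on: the paper gives no proof of its own and defers to Theorem~2.7.2 of Cover and Thomas, which is proved exactly by applying the log-sum inequality symbolwise to $\pi_i P_i(a)$ and $\pi_i Q_i(a)$ and summing over the alphabet. Your treatment of the degenerate cases (dropping indices with $\pi_i = 0$, which implicitly uses the convention $0 \cdot \infty = 0$, and the case $q_i = 0 < p_i$) is sound.
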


\begin{fact}[Pinsker's inequality]
\label{fact: Pinsker's inequality}
    For probability distributions $P, Q$ on a finite set, with natural logarithms,
    \[
    \mathrm{D}(P \| Q) \geq \frac{1}{2} \|P - Q\|_{1}^{2}.
    \]
    \cite[Lemma~11.6.1]{cover2006}
\end{fact}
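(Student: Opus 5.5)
The plan is the standard two-step proof: reduce to two-point distributions by data processing, then verify the two-point case by calculus.

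\textbf{Reduction to two points.} Let $A = \{z : P(z) \geq Q(z)\}$, and put $p = P(A)$, $q = Q(A)$. Then $p \geq q$ and
\[
\|P - Q\|_{1} = \sum_{z \in A} (P(z) - Q(z)) + \sum_{z \notin A} (Q(z) - P(z)) = 2(p - q).
\]
The map $z \mapsto \mathbf{1}\{z \in A\}$ is a deterministic channel, so it pushes $P$ and $Q$ forward to $\mathrm{Bern}(p)$ and $\mathrm{Bern}(q)$. By the data processing inequality for relative entropy, $\mathrm{D}(P\|Q) \geq \mathrm{D}(\mathrm{Bern}(p)\|\mathrm{Bern}(q))$. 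This inequality follows from the log-sum inequality applied separately over $A$ and over $A^{c}$, in the same convexity spirit as Fact~\ref{fact: joint convexity of relative entropy}. It therefore suffices to prove the binary inequality
\[
p\log\frac{p}{q} + (1-p)\log\frac{1-p}{1-q} \geq 2(p-q)^{2}, \qquad 0 \leq q \leq p \leq 1.
\]
Degenerate cases ($q = 0$ with $p > 0$, or $q = 1$ with $p < 1$) make the left side $+\infty$ and are trivial.

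\textbf{Binary case by calculus.} Fix $p$ and define, for $q \in (0,p]$,
\[
f(q) = p\log\frac{p}{q} + (1-p)\log\frac{1-p}{1-q} - 2(p-q)^{2}.
\]
Then $f(p) = 0$. Differentiating gives
\[
f'(q) = -\frac{p}{q} + \frac{1-p}{1-q} + 4(p-q) = \frac{q-p}{q(1-q)} + 4(p-q) = (p-q)\Big(4 - \frac{1}{q(1-q)}\Big).
\]
Since $q(1-q) \leq \tfrac14$, the last factor is at most $0$, and $p - q \geq 0$, so $f'(q) \leq 0$ on $(0,p]$. Hence $f$ is nonincreasing there, and $f(q) \geq f(p) = 0$ for all $q \leq p$, which is the binary inequality. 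The case $q > p$ is not needed because of how $A$ was chosen; alternatively it follows by swapping the roles of the two outcomes.

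\textbf{Main obstacle.} The only step that needs care is the data processing step $\mathrm{D}(P\|Q) \geq \mathrm{D}(P_{A}\|Q_{A})$, where $P_{A}, Q_{A}$ denote the two-point pushforwards. I would prove it directly from the log-sum inequality,
\[
\sum_{z \in A} P(z)\log\frac{P(z)}{Q(z)} \geq P(A)\log\frac{P(A)}{Q(A)},
\]
and the same bound over $A^{c}$, adding the two. Zero-probability points are handled with the usual conventions $0\log(0/\cdot) = 0$ and $a\log(a/0) = +\infty$ for $a > 0$. The rest is the routine derivative computation above.
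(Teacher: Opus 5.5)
Your proof is correct and follows essentially the same route as the standard proof in \cite[Lemma~11.6.1]{cover2006}, which the paper cites without reproducing: push $P,Q$ forward to two points via $A=\{P\geq Q\}$ using data processing, then show $f'(q)=(p-q)\bigl(4-\frac{1}{q(1-q)}\bigr)\leq 0$ on $(0,p]$. Two small glosses need no repair. At the endpoint $p=1$, $f$ is not differentiable at $q=1$ but is continuous there, so the monotonicity argument still goes through. The degenerate case ``$q=1$, $p<1$'' cannot occur, since $q\leq p$.
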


\section{Proofs}

\subsection{Proof of Lemma~\ref{lem: KV preserves Bs}}\label{proof: KV preserves Bs}

$V = \nu(X)$ gives $H(V|X) = 0$. Thus, $I_{K^{V}}(U;X) = H(V) - H(V|X) = H(V)$.

Fix $j \in \{1, \cdots, k\}$ and $s \in [0,1]$. On the class $\mathcal{X}_{j}$ we have $P_{X|1}(x) = \rho_{j} P_{X|0}(x)$, hence
\[
\sum_{x \in \mathcal{X}_{j}} P_{X|0}(x)^{s} P_{X|1}(x)^{1-s} = \sum_{x \in \mathcal{X}_{j}} P_{X|0}(x)^{s} \big(\rho_{j} P_{X|0}(x) \big)^{1-s} = \rho_{j}^{1-s} P_{X|0}(\mathcal{X}_{j}).
\]
And since $P^{K^{V}}_{U|y}(j) = P_{X|y}(\mathcal{X}_{j})$ for $y \in \{0,1\}$, and $P_{X|1}(\mathcal{X}_{j}) = \rho_{j} P_{X|0}(\mathcal{X}_{j})$, we derive
\[
P^{K^{V}}_{U|0}(j)^{s} P^{K^{V}}_{U|1}(j)^{1-s} = P_{X|0}(\mathcal{X}_{j})^{s} \big(\rho_{j} P_{X|0}(\mathcal{X}_{j}) \big)^{1-s} = \rho_{j}^{1-s} P_{X|0}(\mathcal{X}_{j}) = \sum_{x \in \mathcal{X}_{j}} P_{X|0}(x)^{s} P_{X|1}(x)^{1-s}.
\]
Summing over $j$ and using $\mathcal{X} = \bigsqcup_{j} \mathcal{X}_{j}$ yields $B_{k}(K^{V},s) = B_{|\mathcal{X}|}(K^{X},s)$. Taking the minimum over $s \in [0,1]$ gives $C(P^{K^{V}}_{U|0}, P^{K^{V}}_{U|1}) = C_{X}$.

\subsection{Proof of Lemma~\ref{lem: holder BsK geq BsX}}\label{proof: holder BsK geq BsX}

For $s \in \{0,1\}$ both sides of \eqref{eq: BsK geq BsX} equal $1$, so let $s \in (0,1)$. Applying H\"older's inequality (Fact \ref{fact: H\"older's inequality}) with exponents $\frac{1}{s}$ and $\frac{1}{1-s}$ to $\big(K(u|x) P_{X|0}(x) \big)^{s}$ and $\big(K(u|x) P_{X|1}(x) \big)^{1-s}$ for all $x \in \mathcal{X}$ gives, for each $u \in \mathcal{U}_{M}$,
    \begin{equation}
    \label{eq: holder per symbol}
    \begin{aligned}
        \sum_{x} K(u|x) P_{X|0}(x)^{s} P_{X|1}(x)^{1-s}
        & \leq \Big(\sum_{x} K(u|x) P_{X|0}(x) \Big)^{s} \Big(\sum_{x} K(u|x) P_{X|1}(x) \Big)^{1-s} \\
        &= P^{K}_{U|0}(u)^{s} P^{K}_{U|1}(u)^{1-s}.
    \end{aligned}
    \end{equation}
    Summing \eqref{eq: holder per symbol} over $u \in \mathcal{U}_{M}$ and using $\sum_{u \in \mathcal{U}_{M}} K(u|x) = 1$ derives \eqref{eq: BsK geq BsX}.

    Suppose that $B_{M}(K,s) = B_{|\mathcal{X}|}(K^{X},s)$ for some $s \in (0,1)$. Then \eqref{eq: holder per symbol} holds with equality for every $u \in \mathcal{U}_{M}$. By \eqref{eq: condition no instantly recognizable x}, for any $u \in \mathcal{U}_{M}$ with $P^{K}_{U}(u) > 0$, the equality condition of Fact \ref{fact: H\"older's inequality} indicates there exists $c_{u} > 0$ such that
    \[
    K(u|x) P_{X|1}(x) = c_{u} K(u|x) P_{X|0}(x), \quad \forall x \in \mathcal{X}.
    \]
    This means $\rho(x) = c_{u}$ for all $x$ with $K(u|x) > 0$, which implies $U$ determines $V = \nu(X)$, i.e., $V = g(U)$ almost surely for some function $g$. Hence,
    \[
    I_{K}(U;X) \geq I\big(g(U); X \big) = I(V;X) = H(V) - H(V|X) = H(V).
    \]
    Taking the contrapositive gives the conclusion.

\subsection{Proof of Lemma~\ref{lem: time sharing}}\label{proof: time sharing}

Since $Q = \mathbf{1}\{U \leq M_1\}$ is a function of $U$, and $Q$ is independent of $X$, we have
\begin{align*}
    I_{K}(U;X) & = I(Q;X) + I(U;X|Q) \\ & = P(Q = 1)I(U;X|Q = 1) + P(Q = 0)I(U;X|Q = 0) \\ & = \theta I_{K_1}(U;X) + (1-\theta) I_{K_2}(U;X).
\end{align*}
For $y \in \{0,1\}$, $P^{K}_{U|y}(u) = \theta P^{K_1}_{U|y}(u)$ for $u \leq M_1$ and $P^{K}_{U|y}(u) = (1-\theta) P^{K_2}_{U|y}(u - M_1)$ for $u > M_1$. Using $(c\alpha)^{s} (c\beta)^{1-s} = c \, \alpha^{s} \beta^{1-s}$ for $c, \alpha, \beta \geq 0$, we have
\[
\begin{aligned}
B_{M_1+M_2}(K,s)
&= \sum_{u=1}^{M_1} \big(\theta P^{K_1}_{U|0}(u)\big)^{s} \big(\theta P^{K_1}_{U|1}(u)\big)^{1-s} + \sum_{u=M_1+1}^{M_1+M_2} \big((1-\theta) P^{K_2}_{U|0}(u-M_1)\big)^{s} \big((1-\theta) P^{K_2}_{U|1}(u-M_1)\big)^{1-s} \\
&= \theta \sum_{u=1}^{M_1} P^{K_1}_{U|0}(u)^{s} P^{K_1}_{U|1}(u)^{1-s} + (1-\theta) \sum_{u=1}^{M_2} P^{K_2}_{U|0}(u)^{s} P^{K_2}_{U|1}(u)^{1-s} \\
&= \theta B_{M_1}(K_1,s) + (1-\theta) B_{M_2}(K_2,s).
\end{aligned}
\]

\subsection{Proof of Example~\ref{exa: not concave}}\label{proof: counterexample}

Here $V = X$, so $H(V) = \log 2$, and by Theorem~\ref{the: saturation and strict monotonicity} and symmetry, $C(\log 2) = C_{X} = -\log\big(2\sqrt{a(1-a)}\big) \geq 2.76$. Since $C(0) = 0$, concavity would give $C(R_{0}) \geq \frac{1}{10} C_{X} \geq 0.276$ at $R_{0} = \frac{1}{10} \log 2$. We show $C(R_{0}) < 0.26$.

Let $K$ be an encoder with $I_{K}(U;X) \leq R_{0}$, write $K_{x} = K(\cdot|x)$ and $\beta(P,Q) = \sum_{u} \sqrt{P(u) Q(u)}$.

(i) $\psi(s) = \log B_{M}(K,s)$ is convex with $\psi(0) = \psi(1) = 0$. For $s \leq \frac{1}{2}$, convexity on $[s, 1]$ gives $\psi(\frac{1}{2}) \leq \frac{\psi(s)}{2(1-s)}$, so $\psi(s) \geq 2\psi(\frac{1}{2})$, and symmetrically for $s \geq \frac{1}{2}$. Hence $\mathcal{C}(K) \leq -2 \log \beta(P^{K}_{U|0}, P^{K}_{U|1})$.

(ii) $P^{K}_{U|0} = (1-a) K_{1} + a K_{2}$, $P^{K}_{U|1} = a K_{1} + (1-a) K_{2}$, and $\beta$ is jointly concave, so $\beta(P^{K}_{U|0}, P^{K}_{U|1}) \geq \beta(K_{1}, K_{2})$.

(iii) With $m = \frac{K_{1} + K_{2}}{2}$, $t_{u} = \frac{K_{1}(u)}{2 m(u)}$ and the binary entropy $h$, we have $I_{K}(U;X) = \sum_{u} m(u) \big(\log 2 - h(t_{u})\big)$ and $\beta(K_{1}, K_{2}) = \sum_{u} 2 m(u) \sqrt{t_{u}(1-t_{u})}$. Since $\log y \leq \frac{y-1}{\sqrt{y}}$ for $y \geq 1$, taking $y = \frac{1}{t}$ and $y = \frac{1}{1-t}$ gives $h(t) \leq (1-t)\sqrt{t} + t\sqrt{1-t} \leq \sqrt{2 t(1-t)}$, so $\beta(K_{1}, K_{2}) \geq \sqrt{2} \big(\log 2 - I_{K}(U;X)\big)$.

Combining (i)--(iii), $\mathcal{C}(K) \leq -\log \big(2 (\log 2 - R_{0})^{2}\big) = 0.2506$ for every such $K$. Hence $C(R_{0}) \leq 0.2506 < 0.276$.

\subsection{Proof of Lemma~\ref{lem: K_R non empty compact convex}}\label{proof: K_R non empty compact convex}

We prove non-emptiness, compactness, and convexity in turn.

For non-emptiness. Let $K_0$ be a constant encoder, i.e.,
\[
K_0(1|x) = 1, \ K_0(u|x) = 0, \ \forall x \in \mathcal{X}, u = 2, \cdots, M.
\]
Then $I_{K_0}(U;X) = 0 \leq R$. Thus, $\mathcal{K}^{(M)}_{R}$ is non-empty.

For compactness. Let $\{K_{n} \}_{n=1}^{\infty} \subseteq \mathcal{K}^{(M)}_{R}$ be a sequence such that $K_{n} \to K^{*}$. Since $\Delta_{U|X}$ is the direct product of a finite number of simplexes, which are closed, indicating that $\Delta_{U|X}$ is closed, thus we have $K^{*} \in \Delta_{U|X}$. Moreover, $I(U;X)$ is a continuous functional of $K_{U|X}$. Then,
\[
I_{K^{*}}(U;X) = I_{\lim\limits_{n \to \infty} K_{n}}(U;X) = \lim\limits_{n \to \infty} I_{K_{n}}(U;X) \leq \lim\limits_{n \to \infty} R = R.
\]
Thus, $K^{*} \in \mathcal{K}^{(M)}_{R}$, i.e., $\mathcal{K}^{(M)}_{R}$ is closed. Since $\mathcal{K}^{(M)}_{R}$ is bounded, by Heine-Borel Theorem, $\mathcal{K}^{(M)}_{R}$ is compact.

For convexity, let $K_1, K_2 \in \mathcal{K}^{(M)}_{R}$ and 
$\theta \in [0,1]$. Define $K_{\theta}(u|x)=\theta K_1(u|x)+(1-\theta)K_2(u|x)$. For any $x \in \mathcal{X}$, we have $K_{\theta}(u|x)\geq 0$ and $\sum_{u\in\mathcal{U}_M}K_{\theta}(u|x) = \theta \sum_{u\in\mathcal{U}_M} K_1(u|x) + (1-\theta) \sum_{u\in\mathcal{U}_M}K_2(u|x) = 1$. Thus, $K_{\theta}\in\Delta_{U|X}$. For fixed $P_X$, the mutual information induced by encoder $K$ is
\[
I_{K}(U;X) = \mathrm{D}
\left(P_{XU}^{K} \middle\| P_X P_U^{K}\right),
\]
where
\[
P_{XU}^{K}(x,u)=P_X(x)K(u|x), \qquad P_U^{K}(u)=\sum_{x\in\mathcal{X}}P_X(x)K(u|x).
\]
Since both $P_{XU}^{K}$ and $P_XP_U^{K}$ are affine functions of $K$, we have
\[
P_{XU}^{K_{\theta}} = \theta P_{XU}^{K_1} + (1-\theta) P_{XU}^{K_2}, \quad P_X P_U^{K_{\theta}} = \theta P_XP_U^{K_1} + (1-\theta) P_X P_U^{K_2}.
\]
By the joint convexity of relative entropy,
\begin{align*}
I_{K_{\theta}}(U;X) & = \mathrm{D}
\left( P_{XU}^{K_{\theta}}
\middle\| P_X P_U^{K_{\theta}} \right) \\ & \leq
\theta \mathrm{D} \left( P_{XU}^{K_1} \middle\| P_X P_U^{K_1} \right) + (1-\theta) \mathrm{D} \left( P_{XU}^{K_2} \middle\| P_X P_U^{K_2} \right) \\ & =
\theta I_{K_1}(U;X) +  (1-\theta) I_{K_2}(U;X) \\ & \leq \theta R + (1-\theta) R = R,
\end{align*}
which gives $K_{\theta}\in\mathcal{K}^{(M)}_{R}$. Hence, $\mathcal{K}^{(M)}_{R}$ is convex.

\subsection{Proof of Lemma~\ref{lem: reduction to V}}\label{proof: reduction to V}

\textit{Proof of (i).} Fix $j \in \{1, \cdots, k\}$. On $\mathcal{X}_{j}$ we have $P_{X|1}(x) = \rho_{j} P_{X|0}(x)$, and summing over $\mathcal{X}_{j}$ gives $P_{V|1}(j) = \rho_{j} P_{V|0}(j)$. Hence $P_{X|0}(x) / P_{V|0}(j) = P_{X|1}(x) / P_{V|1}(j)$ for $x \in \mathcal{X}_{j}$, and averaging over $y$ with weights $P_{Y}(y)$ shows that this common value equals $P_{X}(x)/P_{V}(j)$, i.e.,
\begin{equation}
\label{eq: X indep Y given V}
    P_{X|y}(x) = P_{V|y}(j) \, \frac{P_{X}(x)}{P_{V}(j)}, \quad x \in \mathcal{X}_{j}, \ y \in \{0,1\}.
\end{equation}
Define the encoder $\bar{K}$ of $V$ with output alphabet $\mathcal{U}_{M}$ by
\[
\bar{K}(u|j) = \sum_{x \in \mathcal{X}_{j}} \frac{P_{X}(x)}{P_{V}(j)} K(u|x),
\]
which is well defined since $\sum_{u} \bar{K}(u|j) = \sum_{x \in \mathcal{X}_{j}} P_{X}(x) / P_{V}(j) = 1$. By \eqref{eq: X indep Y given V},
\[
P^{\bar{K}}_{U|y}(u) = \sum_{j=1}^{k} \sum_{x \in \mathcal{X}_{j}} P_{V|y}(j) \frac{P_{X}(x)}{P_{V}(j)} K(u|x) = \sum_{x \in \mathcal{X}} P_{X|y}(x) K(u|x) = P^{K}_{U|y}(u).
\]
Moreover, the joint distribution of $(V,U)$ under $K$ is $\sum_{x \in \mathcal{X}_{j}} P_{X}(x) K(u|x) = P_{V}(j) \bar{K}(u|j)$, which is the one induced by $\bar{K}$ with input $P_{V}$. Since $V$ is a function of $X$,
\[
I_{\bar{K}}(U;V) = I_{K}(U;V) \leq I_{K}(U;X,V) = I_{K}(U;X).
\]

\textit{Proof of (ii).} Grouping the symbols of $\mathcal{X}$ by class,
\[
P^{K}_{U|y}(u) = \sum_{j=1}^{k} \sum_{x \in \mathcal{X}_{j}} P_{X|y}(x) \bar{K}(u|j) = \sum_{j=1}^{k} P_{V|y}(j) \bar{K}(u|j) = P^{\bar{K}}_{U|y}(u).
\]
Under $K$, the output $U$ depends on $X$ only through $V = \nu(X)$, so $I_{K}(U;X|V) = 0$, and the joint distribution of $(V,U)$ is $P_{V}(j) \bar{K}(u|j)$. Hence
\[
I_{K}(U;X) = I_{K}(U;X,V) = I_{K}(U;V) + I_{K}(U;X|V) = I_{\bar{K}}(U;V).
\]

\subsection{Proof of Lemma~\ref{lem: uniform bound of w(u)}}\label{proof: uniform bound of w(u)}

The Markovity gives
\[
P^{K}_{U|y}(u) = \sum_{x \in \mathcal{X}} K(u|x) P_{X|y}(x), \quad y \in \{0,1\}.
\]
Then,
\[
\frac{P^{K}_{U|1}(u)}{P^{K}_{U|0}(u)} = \frac{\sum_{x \in \mathcal{X}} K(u|x) P_{X|1}(x)}{\sum_{x \in \mathcal{X}} K(u|x) P_{X|0}(x)}.
\]
Since $K_{U|X} \in \mathcal{K}^{(M)}_{R,+}$, and by \eqref{eq: condition no instantly recognizable x}, applying the weighted average inequality (Fact \ref{fact: Weighted average inequality}) derives
\[
\ell = \min_{x \in \mathcal{X}} \frac{P_{X|1}(x)}{P_{X|0}(x)} \leq \frac{P^{K}_{U|1}(u)}{P^{K}_{U|0}(u)} \leq \max_{x \in \mathcal{X}} \frac{P_{X|1}(x)}{P_{X|0}(x)} = L.
\]

\subsection{Proof of Lemma~\ref{lem: algorithm GBA uniform bounded}}\label{proof: algorithm GBA uniform bounded}

Applying Young's inequality (Fact \ref{fact: Young's inequality}) gives
    \begin{align*}
        C^{s}_{u,x} & = s w(u) P_{X|0}(x) + (1-s) w(u)^{-\frac{s}{1-s}} P_{X|1}(x) \\ & \geq \big(w(u) P_{X|0}(x) \big)^{s} \big( w(u)^{-\frac{s}{1-s}} P_{X|1}(x) \big)^{1-s} \\ & = P_{X|0}(x)^{s}P_{X|1}(x)^{1-s},
    \end{align*}
    with the lower bound strictly positive and independent of the encoder.

    By Lemma \ref{lem: uniform bound of w(u)}, and $\ell \leq 1 \leq L$ (since $\sum_{x} \rho(x) P_{X|0}(x) = \sum_{x} P_{X|1}(x) = 1$), we have
    \[
    w(u) \leq L^{1-s} \leq L \leq \Gamma, \quad w(u)^{-\frac{s}{1-s}} \leq \ell^{-s} \leq \ell^{-1} \leq \Gamma.
    \]
    Substituting into \eqref{eq: parameter matrix Cs} derives
    \[
     C^{s}_{u,x} \leq \Gamma \big(sP_{X|0}(x) +(1-s) P_{X|1}(x) \big).
    \]

\subsection{Proof of Theorem~\ref{the: hatL global converge}}\label{proof: hatL global converge}

Write $q_{t} = P^{(t)}_{U}$, $K_{t} = K^{(t)}_{U|X}$, and $Z_{t}(x) = \sum_{u'} q_{t}(u') \exp\big(-C^{s}_{u',x}/(\lambda P_{X}(x))\big)$. Since $q_{0}$ has full support and every exponential factor is positive, induction on $t$ gives $q_{t}(u) > 0$ for all $t \geq 0$ and $K_{t}(u|x) > 0$ for all $t \geq 1$ and all $(u,x)$. Hence every relative entropy below is finite. Note also that $q_{t} = P^{K_{t}}_{U}$ for every $t \geq 0$.

\textit{One-step inequality.} Let $K \in \mathcal{K}^{(M)}_{\infty}$ and $q = P^{K}_{U}$. The update in Algorithm \ref{alg: GBA for K_U|X} gives $C^{s}_{u,x} = -\lambda P_{X}(x) \log \big(Z_{t}(x) K_{t+1}(u|x) / q_{t}(u)\big)$. Substituting it into \eqref{eq: definition hatL},
\[
\hat{\mathcal{L}}_{\lambda}(K,q_{t}) = \lambda \sum_{x} P_{X}(x) \Big( \mathrm{D}\big(K(\cdot|x) \,\|\, K_{t+1}(\cdot|x)\big) - \log Z_{t}(x) \Big).
\]
Taking $K = K_{t+1}$ and subtracting, then applying Fact \ref{fact: joint convexity of relative entropy} with $q = \sum_{x} P_{X}(x) K(\cdot|x)$ and $q_{t+1} = \sum_{x} P_{X}(x) K_{t+1}(\cdot|x)$,
\[
\hat{\mathcal{L}}_{\lambda}(K,q_{t}) - \hat{\mathcal{L}}_{\lambda}(K_{t+1},q_{t}) = \lambda \sum_{x} P_{X}(x) \mathrm{D}\big(K(\cdot|x) \,\|\, K_{t+1}(\cdot|x)\big) \geq \lambda \mathrm{D}(q \| q_{t+1}).
\]
By \eqref{eq: hatL = L + D}, $\mathcal{L}_{\lambda}(K_{t+1}) \leq \hat{\mathcal{L}}_{\lambda}(K_{t+1},q_{t})$ and $\hat{\mathcal{L}}_{\lambda}(K,q_{t}) = \mathcal{L}_{\lambda}(K) + \lambda \mathrm{D}(q \| q_{t})$. Hence
\begin{equation}
\label{eq: GBA one-step}
    \mathcal{L}_{\lambda}(K_{t+1}) \leq \mathcal{L}_{\lambda}(K) + \lambda \big( \mathrm{D}(q \| q_{t}) - \mathrm{D}(q \| q_{t+1}) \big), \quad \forall K \in \mathcal{K}^{(M)}_{\infty}.
\end{equation}

\textit{Monotonicity and rate.} Taking $K = K_{t}$, for which $q = q_{t}$, \eqref{eq: GBA one-step} gives $\mathcal{L}_{\lambda}(K_{t+1}) \leq \mathcal{L}_{\lambda}(K_{t})$. Taking $K = K^{*}$ with $q^{*} = P^{K^{*}}_{U}$, summing \eqref{eq: GBA one-step} over $t = 0, \cdots, T-1$, and using the monotonicity,
\[
T \big( \mathcal{L}_{\lambda}(K_{T}) - \mathcal{L}^{*}_{\lambda} \big) \leq \sum_{t=1}^{T} \big( \mathcal{L}_{\lambda}(K_{t}) - \mathcal{L}^{*}_{\lambda} \big) \leq \lambda \big( \mathrm{D}(q^{*} \| q_{0}) - \mathrm{D}(q^{*} \| q_{T}) \big) \leq \lambda \mathrm{D}(q^{*} \| q_{0}),
\]
which proves \eqref{eq: GBA rate}.

\textit{Convergence of the iterates.} Since $\mathcal{K}^{(M)}_{\infty}$ is compact, there is a subsequence $K_{t_{j}} \to \bar{K}$, and then $q_{t_{j}} \to \bar{q} = P^{\bar{K}}_{U}$. By \eqref{eq: GBA rate} and the continuity of $\mathcal{L}_{\lambda}$, $\bar{K}$ is a global minimizer of $\mathcal{L}_{\lambda}$. Applying \eqref{eq: GBA one-step} with $K = \bar{K}$ and using $\mathcal{L}_{\lambda}(K_{t+1}) \geq \mathcal{L}_{\lambda}(\bar{K})$ shows that $\mathrm{D}(\bar{q} \| q_{t})$ is nonincreasing in $t$. Moreover, $\mathrm{D}(\bar{q} \| q_{t_{j}}) = \sum_{u: \bar{q}(u) > 0} \bar{q}(u) \log \frac{\bar{q}(u)}{q_{t_{j}}(u)} \to 0$, since $q_{t_{j}}(u) \to \bar{q}(u) > 0$ in every nonzero term. Hence the whole sequence $\mathrm{D}(\bar{q} \| q_{t}) \to 0$, and $q_{t} \to \bar{q}$ by Fact \ref{fact: Pinsker's inequality}. Since $Z_{t}(x) \geq \min_{u} \exp\big(-C^{s}_{u,x}/(\lambda P_{X}(x))\big) > 0$, $K_{t+1}$ is a continuous function of $q_{t}$ through the update in Algorithm \ref{alg: GBA for K_U|X}. Therefore $K_{t}$ converges, and its limit is $\bar{K}$. Finally, by \eqref{eq: hatL = L + D}, $\hat{\mathcal{L}}_{\lambda}(\bar{K}, \bar{q}) = \mathcal{L}_{\lambda}(\bar{K}) = \mathcal{L}^{*}_{\lambda}$, so $(K_{t}, q_{t}) \to (\bar{K}, P^{\bar{K}}_{U})$, a global minimizer of $\hat{\mathcal{L}}_{\lambda}$.

\subsection{Proof of Lemma~\ref{lem: bisection search of lambda}}\label{proof: bisection search of lambda}

\textit{Proof of \ref{item: bisection monotone}.} Let $0 \leq \lambda_1 < \lambda_2$, and write $K_i = K(\lambda_i)$, $G_i = G(K_i)$ and $I_i = I(\lambda_i)$ for $i = 1, 2$. Since $K_1$ minimizes $\mathcal{L}_{\lambda_1}$ and $K_2$ minimizes $\mathcal{L}_{\lambda_2}$ over $\mathcal{K}^{(M)}_{\infty}$,
\[
G_1 + \lambda_1 I_1 \leq G_2 + \lambda_1 I_2, \qquad G_2 + \lambda_2 I_2 \leq G_1 + \lambda_2 I_1.
\]
Adding the two inequalities gives $(\lambda_2 - \lambda_1)(I_1 - I_2) \geq 0$, hence $I_1 \geq I_2$.

\textit{Proof of \ref{item: bisection certificate}.} By Lemma \ref{lem: K_R non empty compact convex}, $\mathcal{K}^{(M)}_{R}$ is non-empty and compact, and $G$ is continuous, so $G^{*}$ is attained by some $K^{\circ} \in \mathcal{K}^{(M)}_{R}$. Since $K(\lambda)$ minimizes $\mathcal{L}_{\lambda}$ over $\mathcal{K}^{(M)}_{\infty} \supseteq \mathcal{K}^{(M)}_{R}$,
\[
G(K(\lambda)) + \lambda I(\lambda) \leq G(K^{\circ}) + \lambda I_{K^{\circ}}(U;X) \leq G^{*} + \lambda R,
\]
which gives the upper bound in \eqref{eq: bisection certificate}. If $I(\lambda) \leq R$, then $K(\lambda) \in \mathcal{K}^{(M)}_{R}$, so that $G(K(\lambda)) \geq G^{*}$, which gives the lower bound. If $I(\lambda) = R$ or $\lambda = 0$, the right-hand side of \eqref{eq: bisection certificate} vanishes, and $G(K(\lambda)) = G^{*}$.

\textit{Proof of \ref{item: bisection threshold}.} Let $\beta = \sum_{x} P_{X|0}(x)^{s} P_{X|1}(x)^{1-s}$, which is strictly positive by \eqref{eq: condition no instantly recognizable x}. By the lower bound in Lemma \ref{lem: algorithm GBA uniform bounded}, for any $K \in \mathcal{K}^{(M)}_{\infty}$,
\[
G(K) \geq \sum_{x} \Big( \sum_{u} K(u|x) \Big) P_{X|0}(x)^{s} P_{X|1}(x)^{1-s} = \beta.
\]
Let $K_{0}$ be the constant encoder in Appendix \ref{proof: K_R non empty compact convex}, which satisfies $I_{K_0}(U;X) = 0$. By the upper bound in Lemma \ref{lem: algorithm GBA uniform bounded},
\[
G(K_0) = \sum_{x} C^{s}_{1,x} \leq \Gamma \sum_{x} \big( s P_{X|0}(x) + (1-s) P_{X|1}(x) \big) = \Gamma.
\]
Hence, for any $\lambda > 0$, the minimality of $K(\lambda)$ gives
\[
\beta + \lambda I(\lambda) \leq \mathcal{L}_{\lambda}(K(\lambda)) \leq \mathcal{L}_{\lambda}(K_0) = G(K_0) \leq \Gamma,
\]
that is, $I(\lambda) \leq (\Gamma - \beta) / \lambda$, which is at most $R$ whenever $\lambda \geq (\Gamma - \beta)/R$. Now define
\[
\lambda^{*} \stackrel{\Delta}{=} \inf \big\{ \lambda \geq 0 : I(\lambda) \leq R \text{ for some choice of } K(\lambda) \big\},
\]
so that $\lambda^{*} \leq (\Gamma - \beta)/R < \Gamma/R$. If $\lambda > \lambda^{*}$, there exist $\lambda_{1} < \lambda$ and a choice of $K(\lambda_1)$ with $I(\lambda_1) \leq R$, and \ref{item: bisection monotone} gives $I(\lambda) \leq I(\lambda_1) \leq R$ for every choice of $K(\lambda)$. If $\lambda < \lambda^{*}$, then by the definition of $\lambda^{*}$, $I(\lambda) > R$ for every choice of $K(\lambda)$.

\textit{Consequences for the bisection.} The bisection sets $\lambda_{\text{small}} \leftarrow \lambda$ only when $I(\lambda) > R$, which by \ref{item: bisection threshold} forces $\lambda \leq \lambda^{*}$, and sets $\lambda_{\text{big}} \leftarrow \lambda$ only when $I(\lambda) \leq R$, which forces $\lambda \geq \lambda^{*}$. Initially, $\lambda_{\text{small}} = 0 \leq \lambda^{*} < \Gamma/R = \lambda_{\text{big}}$, and $I(\Gamma/R) \leq R$ by \ref{item: bisection threshold}. Hence $\lambda^{*} \in [\lambda_{\text{small}}, \lambda_{\text{big}}]$ and $I(\lambda_{\text{big}}) \leq R$ at every step, and since each step halves the bracket, $\lambda_{\text{big}} - \lambda_{\text{small}} = \Gamma / (R \, 2^{n})$ after $n$ steps.

\end{document}